\newcounter{quotecount}
\renewcommand{\subset}{\subseteq}
\newcommand{\paragrafik}[1]{\vspace{0.1cm} \noindent {\bf #1.}}
\spnewtheorem*{con}{Conjecture}{\bfseries}{\rmfamily}
\spnewtheorem{claim}{Claim}{\bfseries}{\rmfamily}
\newcommand{\tri}[4]{
  \vcenter{\xy \xygraph{!{/r9px/:} 
       [] [rrr]{\scriptscriptstyle{\circlearrowright}} 
          !P3{~={#4}~><{@{.}}~*{\ifcase\xypolynode \or #2 \or #1 \or #3\fi}}}
       \endxy}
}
\newcommand{\fixtri}[4]{
  \vcenter{\xy \xygraph{!{/r9px/:} 
       []  !P3{~={#4}~><{@{.}}~*{\ifcase\xypolynode \or #2 \or #1 \or #3\fi}}}
  \endxy}
}
\newcommand{\wClo}{\mathcal {C}}
\newcommand{\Aa}{\textbf{A}}
\newcommand{\B}{\textbf{B}}
\newcommand{\C}{\mathcal{C}}
\newcommand{\V}{\mathcal{V}}
\newcommand{\K}{\mathcal{K}}
\newcommand{\Q}{\mathbb{Q}}
\newcommand{\Gg}{\mathbb{G}}
\newcommand{\g}{\bold{g}}
\newcommand{\h}{\bold{h}}
\newcommand{\I}{\mathcal{I}}
\newcommand{\Oo}{\mathcal{O}}
\newcommand{\w}{\rightarrow}
\newcommand{\G}{\Gamma}
\newcommand{\rr}{\varrho}
\newcommand{\oo}{\omega}
\DeclareMathOperator{\VCSP}{VCSP}
\DeclareMathOperator{\Opt}{Opt}
\DeclareMathOperator{\Pol}{Pol}
\DeclareMathOperator{\wPol}{wPol}
\DeclareMathOperator{\fPol}{fPol}
\DeclareMathOperator{\IdPol}{IdPol}
\DeclareMathOperator{\wRelClo}{wRelClo}
\DeclareMathOperator{\Feas}{Feas}
\DeclareMathOperator{\Imp}{Imp}
\DeclareMathOperator{\supp}{supp}
\DeclareMathOperator{\id}{id}
\DeclareMathOperator{\Exp}{Exp}
\DeclareMathOperator{\Inv}{Inv}
\DeclareMathOperator{\Const}{Const}
\DeclareMathOperator{\Mjrty}{Mjrty}
\DeclareMathOperator{\Mnrty}{Mnrty}
\DeclareMathOperator{\Hh}{H}
\DeclareMathOperator{\Ss}{S}
\DeclareMathOperator{\Pp}{P}
\DeclareMathOperator{\HSP}{HSP}
\DeclareMathOperator{\HS}{HS}
\DeclareMathOperator{\Mn}{Mn}
\DeclareMathOperator{\Mj}{Mj}
\newif\ifisfinalversion
 	\newcommand{\notacol}[2]{}
 	\newcommand{\notacoll}[2]{}
	\newcommand{\notacol}[2]{\marginnote{\scriptsize\raggedright\hspace{0pt}\color{#2}\textit{#1}}}
\begin{document}

\title{Algebraic Properties of Valued Constraint Satisfaction Problem}

\author{Marcin Kozik\inst{1} \and Joanna Ochremiak\inst{2}}

\institute{Jagiellonian University
\and University of Warsaw}

\maketitle

\begin{abstract}
The paper presents an algebraic framework for optimization problems expressible as Valued Constraint Satisfaction Problems. Our results generalize the algebraic framework for the decision version~(CSPs) provided by Bulatov et al.~[SICOMP 2005].

We introduce the notions of weighted algebras and varieties and 
use the Galois connection due to Cohen et al.~[SICOMP~2013] to link VCSP languages to weighted algebras. 
We show that the difficulty of VCSP depends only on the weighted variety generated by the associated weighted algebra.

Paralleling the results for CSPs we exhibit a reduction to cores and rigid cores which allows us to focus on idempotent weighted varieties.
Further, we propose an analogue of the Algebraic CSP Dichotomy Conjecture;
prove the hardness direction and verify that it agrees with known results for
VCSPs on two-element sets~[Cohen et al. 2006], finite-valued VCSPs~[Thapper and \v Zivn\'y 2013] and conservative VCSPs~[Kolmogorov and \v Zivn\' y 2013].

\end{abstract}

\section{Introduction}

An instance of the Constraint Satisfaction Problem (CSP) consists of variables~(to be evaluated in a domain) and constraints restricting the evaluations. 
The aim is to find an evaluation satisfying all the constraints or satisfying the maximal possible number of constraints or approximating the maximal possible number of satisfied constraints etc. depending on the version of the problem.
Further one can divide constraint satisfaction problems with respect to the size of the domain, the allowed constraints or the shape of the instances. 

A particularly interesting version of CSP was proposed in a seminal paper of Feder and Vardi~\cite{FV}. 
In this version a CSP is defined by {\em a language} which consists of a finite number of relations over a finite set. 
An instance of such a CSP is allowed if all the constraint relations are from this set. 
The goal is to determine whether an instance has a solution satisfying all the constraints.

Each language clearly defines a problem in NP; the whole family of problems is interesting for another reason: it is  robust enough to include some well studied computational problems, e.g. 2-colorability, 3-SAT, solving systems of linear equations over ${\mathbb Z}_p$, and still is conjectured~\cite{FV} not to contain problems of intermediate complexity. 
This conjecture is known as the Constraint Satisfaction Dichotomy Conjecture of Feder and Vardi.
Confirming this conjecture would establish CSPs as one of the largest natural subclasses of NP without problems of intermediate complexity.

The conjecture always attracted a lot of attention, but the first results, 
even very interesting ones, were usually very specialized~(e.g.~\cite{HN}).
A major breakthrough appeared with a series of papers establishing {\em the algebraic approach to CSP}~\cite{Jeav,Bulat,BKJ}. 
This deep connection with an independently developed branch of mathematics introduced a new viewpoint and provided tools necessary to tackle wide classes of CSP languages at once.
At the heart of this approach lies a Galois connection between languages and clones of operations called \emph{polymorphisms}~(which completely determine the complexity of the language).

Results obtained using this new methods include a full complexity classifications for CSPs on three-element sets~\cite{Bul3} and containing all unary relations~\cite{Bulatov03,Bulatov}. 
Moreover, the algebraic approach to CSP allowed to propose a boundary between the tractable and NP-complete problems: this conjecture is known as the Algebraic Dichotomy Conjecture.
Unfortunately, despite many efforts~(e.g.~\cite{Bul3}), both conjectures remain open.

The Valued Constraint Satisfaction Problem (VCSP) further extends the approach proposed by Feder and Vardi. 
The role of constraints is played by {\em cost functions} describing the price of choosing particular values for variables as a part of the solution. 
This generalization allows to construct languages modeling standard optimization problems, for example MAX-CUT. 
Moreover, by allowing $\infty$ as a cost of a tuple, a VCSP language can additionally model every problem that CSP can model, as well as hybrid problems like MIN-VERTEX-COVER. This makes the extended framework even more general~(compare the survey~\cite{jkz14:survey}).

A number of classes of VCSPs have been thoroughly investigated. 
The underlying structure suggested capturing the properties of languages of cost functions using an amalgamation of algebraic and numerical techniques~\cite{TZ,CCJK}.
The first approach which provides a Galois correspondence~(mirroring the Galois correspondence for CSPs) was proposed by Cohen et al.~\cite{CJZ}. 
A weighted clone defined in this paper fully captures the complexity of a VCSP language. 

The present paper builds on that correspondence imitating the line of research for CSPs~\cite{BKJ}. It is organized in the following way:
Section 2 contains preliminaries and basic definitions. 
In Section 3 we present a reduction to cores and rigid cores. 
Section 4 introduces a concept of a weighted algebra and a weighted variety, and shows that those notions are well behaved in the context of the Galois connection for VCSP.
Reductions developed in Section 3 together with definitions from Section 4 allow us to focus on idempotent varieties.
Section 5 states a conjecture postulating (for idempotent varieties) the division between the tractable and NP-hard cases of VCSP.
The conjecture is clearly a strengthening  of the Algebraic Dichotomy Conjecture~\cite{BKJ}. 
Section 5 contains additionally the proof of the hardness direction of the conjecture as well as the reasoning showing that the conjecture agrees with complexity classifications for VCSPs on two-element sets~\cite{CCJK}, with finite-valued cost functions~\cite{TZ}, and with conservative cost functions~\cite{KZ}.

\section{Preliminaries}

% P \neq NP ?

% Gwiazdka na dole: The range of the functions in Γ is Q≥0 for traditional reasons, but all results hold true with Q as well.

\subsection{The Valued Constraint Satisfaction Problem}

%\paragrafik{The Valued Constraint Satisfaction Problem} 
Throughout the paper, let $\overline{\Q} = \Q \cup \{\infty\}$ denote the set of rational numbers with (positive) infinity. We assume that $x + \infty = \infty$ and $y \cdot \infty = \infty$ for $y \geq 0$. An $r$-ary \emph{relation} on a set $D$ is a subset of $D^r$, a \emph{cost function} on $D$ of arity $r$ is a function from $D^r$ to $\overline{\Q}$. We denote by $\Phi_{D}$ the set of all cost functions on $D$. A cost function which takes only finite values is called \emph{finite-valued}. A $\{0, \infty\}$-valued cost function is called \emph{crisp} and can be viewed as a relation.

%instance

\begin{definition}\label{inst}
An \emph{instance of the valued constraint satisfaction problem (VCSP)} is a triple $\I = ( V, D,\C )$ with $V$ a finite set of \emph{variables}, $D$ a finite \emph{domain} and $\C$ a finite multi-set of \emph{constraints}. Each constraint is a pair $C = (\sigma, \varrho)$ with $\sigma$ a tuple of variables of length $r$ and $\varrho$ a cost function on $D$ of arity $r$.

An \emph{assignment} for $\I$ is a mapping $s \colon V \w D$. 
The \emph{cost} of an assignment $s$ is given by $Cost_{\I}(s) = \sum _{(\sigma,\varrho) \in \C} \varrho(s(\sigma))$ (where $s$ is applied component-wise). To solve $\I$ is to find an assignment with a minimal cost, called an \emph{optimal} assignment.
%The goal is to find a feasible assignment with a minimal cost.
\end{definition}

%Max-Cut

\begin{example}\label{Max-Cut}
In the \textsc{Max-Cut} problem, one needs to find a partition of the vertices of a given graph into two sets, such that the number of edges with ends in different sets is maximal. This problem is NP-hard.

The \textsc{Max-Cut} problem can be expressed as an instance of VCSP. The domain has two elements $0$ and $1$. Variables in the instance are vertices of the graph and for each edge $e$ there is a constraint of a form $(e, \rr_{XOR})$, where $\rr_{XOR}$ is a binary cost function defined by $$
\rr_{XOR}(x,y) = \begin{cases} 1 &\mbox{if } x=y, \\
0 & \mbox{otherwise.} \end{cases}$$ 
Any assignment of the values $0$ and $1$ to the variables corresponds to a partition of the graph. The cost of an assignment is equal to the number of edges of the graph minus the number of cut edges.
\end{example}

Any set $\G \subseteq \Phi_{D}$ is called a \emph{valued constraint language} over $D$, or simply a \emph{language}. If all cost functions from $\G$ are $\{0, \infty\}$-valued or finite-valued, we call it a \emph{crisp} or \emph{finite-valued} language, respectively. If $\G$ is a language, but not necessarily finite-valued or crisp, we sometimes stress this fact by saying that $\G$ is a \emph{general-valued} language.

By $\VCSP(\G)$ we denote the class of all VSCP instances in which all cost functions in all constraints belong to $\G$. 
$\VCSP(\G_{crisp})$, where $\G_{crisp}$ is the language consisting of all crisp cost functions on some fixed set $D$, is equivalent to the classical CSP.
For an instance $\I \in \VCSP(\G)$ we denote by $\Opt_{\G}(\I)$ the cost of an optimal assignment. 
We say that a language $\G$ is \emph{tractable} if, for every finite subset $\G' \subseteq \G$, there exists an algorithm solving any instance $\I \in \VCSP(\G')$ in polynomial time, and we say that $\G$ is \emph{NP-hard} if $\VCSP(\G')$ is NP-hard for some finite $\G' \subseteq \G$. Example~\ref{Max-Cut} shows that the language $\{ \rr_{XOR} \}$ is NP-hard.

%\subsection{Weighted Relational Clones}

\paragrafik{Weighted Relational Clones} 
We follow the exposition from~\cite{CJZ} and define a closure operator on valued constraint languages that preserves tractability.

%expressive power

\begin{definition}\label{Ex}
A cost function $\rr$ is \emph{expressible} over a valued constraint language $\G \subseteq \Phi_{D}$ if there exists an instance $\I_{\rr} \in \VCSP(\G)$ and a list $(v_1, \dots, v_r)$ of variables of $\I_{\rr}$, such that
$$\rr(x_{1}, \dots, x_{r}) =  \min_{\{  s \colon V \w D \ | \ s(v_i)=x_i \}} Cost_{\I_{\rr}}(s).$$
\end{definition}

Note that the list of variables $(v_1, \dots, v_r)$ in the definition above might contain repeated entries. Hence, it is possible that there are no assignments $s$ such that $s(v_i)=x_i$ for all $i$. We define the minimum over the empty set to be $\infty$.

%cost functional clone

\begin{definition}\label{WRC}
A  set $\G \subseteq \Phi_{D}$ is a \emph{weighted relational clone} if it is closed under, expressibility, scaling by non-negative rational constants, and addition of rational constants. We define $\wRelClo(\G)$ to be the smallest weighted relational clone containing $\G$.
\end{definition}

If $\rr(x_1, \dots, x_{r}) = \rr_1(y_1, \dots, y_s) + \rr_2(z_1, \dots, z_t)$ for some fixed choice of arguments $y_1, \dots, y_s, z_1, \dots, z_t$ from amongst $x_1, \dots, x_{r}$ then the cost function $\rr$ is said to be obtained by \emph{addition} from the cost functions $\rr_1$ and $\rr_2$. It is easy to see that a weighted relational clone is closed under addition, and minimisation over arbitrary arguments.

The following result shows that we can restrict our attention to languages which are weighted relational clones.

\begin{theorem}[Cohen et al.~\cite{CJZ}]\label{Wrelclo}
A valued constraint language $\G$ is tractable if and only if $\wRelClo(\G)$ is tractable, and it is NP-hard if and only if $\wRelClo(\G)$ is NP-hard.
\end{theorem}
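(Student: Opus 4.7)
The easy direction is immediate from the inclusion $\Gamma\subseteq\wRelClo(\Gamma)$: any algorithm solving instances over $\wRelClo(\Gamma)$ works over $\Gamma$, and any NP-hard finite $\Gamma' \subseteq \Gamma$ is a fortiori an NP-hard finite subset of $\wRelClo(\Gamma)$. For the nontrivial direction it suffices to show that for every finite $\Gamma'\subseteq\wRelClo(\Gamma)$ there exist a finite $\Gamma''\subseteq\Gamma$ and a polynomial-time reduction from $\VCSP(\Gamma')$ to $\VCSP(\Gamma'')$; tractability then transfers forward and NP-hardness backwards through this reduction.

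To build it, fix for each $\varrho\in\Gamma'$ a derivation of $\varrho$ from $\Gamma$ using the three generating operations of Definition~\ref{WRC}, and let $\Gamma''\subseteq\Gamma$ be the finite collection of cost functions from $\Gamma$ that appear anywhere in these derivations. Given an input instance $\I\in\VCSP(\Gamma')$, each constraint $(\sigma,\varrho)$ is rewritten by induction on the structure of the derivation of $\varrho$. An \emph{expressibility} step $\varrho(\sigma)=\min\{Cost_{\I_\varrho}(s)\mid s(v_i)=\sigma_i\}$ is handled by inserting a private copy of the gadget $\I_\varrho$ into the instance, identifying each distinguished variable $v_i$ with $\sigma_i$ and introducing fresh auxiliary variables for the remaining variables of $\I_\varrho$; optimising over the fresh variables reproduces $\varrho(\sigma)$. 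A \emph{shift} step $\varrho=\varrho'+c$ is handled by replacing $(\sigma,\varrho)$ with $(\sigma,\varrho')$ and adding $c$ to a running correction $K$. A \emph{scaling} step $\varrho=(p/q)\,\varrho'$ is handled globally: since $\Gamma'$ is finite, a common denominator $Q$ of all rational coefficients occurring in all derivations is a fixed integer, so taking $Q$ parallel copies of every constraint in the output instance (a single uniform blow-up) rescales the total cost by $Q$ and converts each surviving coefficient $p/q$ into the integer $Qp/q$, which is realised by that many copies of the underlying $\Gamma''$-constraint.

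Iterating this rewriting along the finitely many, constant-size derivation trees yields an instance $\I''\in\VCSP(\Gamma'')$ of size polynomial in $|\I|$, together with an affine relation $\Opt_{\Gamma'}(\I)=Q^{-1}\,\Opt_{\Gamma''}(\I'')+K$ and a projection sending optimal assignments of $\I''$ to optimal assignments of $\I$. A polynomial-time algorithm for $\VCSP(\Gamma'')$ then gives one for $\VCSP(\Gamma')$, establishing the tractability direction; by contraposition, an NP-hard finite $\Gamma'\subseteq\wRelClo(\Gamma)$ forces NP-hardness of the finite $\Gamma''\subseteq\Gamma$, establishing the hardness direction.

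The main obstacle is to check that the three rewriting rules compose correctly when they are nested: repeated expressibility substitutions interact with the single global rescaling by $Q$ and with the accumulated shift $K$, and one must avoid super-polynomial blow-up in the process. This is handled by treating each derivation as a fixed, constant-size acyclic term whose leaves lie in $\Gamma''$ and performing the substitution bottom-up, so that the overhead per original constraint is a constant depending only on $\Gamma'$.
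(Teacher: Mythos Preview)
The paper does not prove this theorem at all: it is quoted from Cohen et~al.~\cite{CJZ} and used as a black box, so there is no ``paper's own proof'' to compare against. Your sketch is essentially the standard argument behind the cited result---gadget substitution for expressibility, bookkeeping an additive shift for constants, and constraint replication after clearing a common denominator for rational scaling---and it is correct in outline. The one point worth tightening is the handling of scaling when it occurs \emph{inside} a nested expressibility step rather than at the top level: replicating the whole output instance by a global factor $Q$ is not by itself enough to realise a coefficient $p/q$ that multiplies only one sub-gadget, so you should make explicit that the bottom-up unfolding replaces each inner constraint by $Qp/q$ copies locally (equivalently, first normalise every derivation so that all scalings are pushed to the leaves). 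Once that is said, the composition of the three rules and the polynomial size bound go through exactly as you indicate.
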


%\subsection{Weighted polymorphisms}

\paragrafik{Weighted polymorphisms}
A $k$-ary \emph{operation} on $D$ is a function $f \colon D^k \w D$. We denote by $\Oo_D$ the set of all finitary operations on $D$ and by $\Oo_D^{(k)}$ the set of all $k$-ary operations on $D$.
The $k$-ary \emph{projections}, defined for all $i \in \{1, \dots, k\}$, are the operations $\pi^{(k)}_i$ such that $\pi^{(k)}_i(x_1, \dots, x_k) = x_i$.
Let $f \in \Oo_D^{(k)}$ and $g_1,\dots, g_k \in \Oo_D^{(l)}$. The $l$-ary operation $f[g_1,\dots, g_k]$ defined by $f[g_1,\dots, g_k](x_1,\dots, x_l)=f(g_1(x_1,\dots, x_l),\dots, g_k(x_1,\dots, x_l))$ is called the \emph{superposition} of $f$ and $g_1,\dots, g_k$.

% clone of operations

A  set $C \subseteq \Oo_{D}$ is a \emph{clone of operations} (or simply a \emph{clone}) if it contains all projections on $D$ and is closed under superposition. The set of $k$-ary operations in a clone $C$ is denoted $C^{(k)}$. The smallest possible clone of operations over a fixed set $D$ is the set of all projections on $D$, which we denote $\Pi_{D}$. 

Following~\cite{CJZ} we define a $k$-ary \emph{weighting} of a clone $C$ to be a function $\oo \colon C^{(k)} \w \Q$ such that $\sum_{f \in C^{(k)}} \ \oo(f) = 0$, and if $\oo(f) < 0$ then $f$ is a projection. 
The set of operations to which a weighting $\omega$ assigns positive weights is called the \emph{support} of $\omega$ and denoted $\supp(\omega)$.

A new weighting of the same clone can be obtained by scaling a weighting by a non-negative rational, adding two weightings of the same arity and by the following operation called \emph{superposition}.

%superposition

\begin{definition}\label{sup}
Let $\oo$ be a $k$-ary weighting of a clone $C$ and let $g_1, \dots, g_k \in C^{(l)}$. A \emph{superposition} of $\oo$ and $g_1, \dots, g_k$ is a function $\oo[g_1, \dots, g_k] \colon C^{(l)} \w \Q$ defined by $$\oo[g_1, \dots, g_k](f') \ = \sum_{\{ f \in C^{(k)} \ | \ f[g_1, \dots, g_k]=f'    \}} \oo(f).$$
\end{definition}

The sum of weights that any superposition $\oo[g_1, \dots, g_k]$ assigns to the operations in $C^{(l)}$ is equal to the sum of weights in $\oo$, which is $0$. However, it may happen that a superposition assigns a negative value to an operation that is not a projection. A superposition is said to be \emph{proper} if the result is a valid weighting.

%weighted clone

A non-empty set of weightings over a fixed clone $C$ is called a \emph{weighted clone} if it is closed under non-negative scaling, addition of weightings of equal arity and proper superposition with operations from $C$.
For any clone of operations $C$, the set of all weightings over $C$ and the set of all zero-valued weightings of $C$ are weighted clones. 
%The clone of zero-valued weightings contains exactly one weighting of each arity.

We say that an $r$-ary relation $R$ on $D$ is \emph{compatible} with an operation $f \colon D^k \rightarrow D$ if, for any list of $r$-tuples ${\bf x_1, \dots, x_k} \in R$ we have $f({\bf{x_1, \dots, x_k}}) \in R$ (where $f$ is applied coordinate-wise). Let $\rr \colon D^r \rightarrow \overline{\Q}$ be a cost function. We define $\Feas(\rr) = \{ {\bf x} \in D^r \ | \ \rr(\bf{x}) \mbox{ is finite} \}$ to be the \emph{feasibility relation} of~$\rr$. We call an operation $f \colon D^k \rightarrow D$ a \emph{polymorphism} of $\rr$ if the relation $\Feas(\rr)$ is compatible with it.
For a valued constraint language $\G$ we denote by $\Pol(\G)$ the set of operations which are polymorphisms of all cost functions $\rr \in \G$. It is easy to verify that $\Pol(\G)$ is a clone. The set of $m$-ary operations in $\Pol(\G)$ is denoted $\Pol_{m}(\G)$.

For crisp cost functions (relations) this notion of polymorphism corresponds precisely to the standard notion of polymorphism which has played a crucial role in the complexity analysis for the CSP~\cite{Jeav, Bulat}. 

%weighted polymorphism

\begin{definition}
Take $\rr$ to be a cost function of arity $r$ on $D$, and let $C \subseteq \Pol(\{ \rr \})$ be a clone of operations. A weighting $\oo \colon C^{(k)} \w \Q$ is called a \emph{weighted polymorphism} of $\rr$ if, for any list of $r$-tuples $\bf x_1, \dots, x_k \in \Feas(\rr)$, we have $$\sum_{f \in C^{(k)}} \oo(f) \cdot \rr(f(\mathbf{x_1, \dots, x_k})) \leq 0.$$
\end{definition} 

For a valued constraint language $\G$ we denote by $\wPol(\G)$ the set of those weightings of the clone $\Pol(\G)$ that are weighted polymorphisms of all cost functions $\rr \in \G$. 
The set of weightings $\wPol(\G)$ is a weighted clone~\cite{CJZ}. 
%The set of $k$-ary weightings in $\wPol(\G)$ is denoted $\wPol^{(k)}(\G)$.

\begin{example}
For any lattice-ordered set $D$,
a function $\rr \colon D^{r} \w \Q$ is called \emph{submodular}  if for all $\mathbf{x_1, x_2} \in D^{r}$ $$\rr(\min(\mathbf{x_1,x_2})) + \rr(\max(\mathbf{x_1,x_2})) - \rr(\mathbf{x_1}) - \rr(\mathbf{x_2}) \leq 0.$$ The above condition can be equivalently expressed by saying that the set of submodular functions on $D$ is the set of cost functions with a binary weighted polymorphism $\oo$, defined as follows: $$
\oo(f) = \begin{cases} -1 &\mbox{if } f \mbox{ is a projection,} \\
 \ \ 1 & \mbox{if } f \mbox{ is one of the operations } \min \mbox{ or } \max, \\
\ \ 0 & \mbox{otherwise.} \end{cases}$$
\end{example}

An operation $f$ is \emph{idempotent} if $f(x,...,x) = x$. A weighted polymorphism is called \emph{idempotent} if all operations in its support are idempotent.
%Let Sm be the symmetric group on {1, . . . , m}. An m-ary opera- tion g is symmetric if for every permutation π ∈ Sm, we have g(x1,...,xm) = g(xπ(1),...,xπ(m)). 

An operation $f \in \Oo_D^{(k)}$ is \emph{cyclic} if for every $x_1, \dots , x_k \in D$ we have that $f(x_1,x_2, \dots ,x_k) = f(x_2, \dots, x_k ,x_1)$. 
%Note that in the case of m = 2 both definitions coincide.  
A weighted polymorphism is called \emph{cyclic} if its support is non-empty and contains cyclic operations only.

A cost function $\rr$ is said to be \emph{improved} by a weighting $\oo$ if $\oo$ is a weighted polymorphism of $\rr$. For any set $W$ of weightings over a fixed clone $C \subseteq \Oo_{D}$ we denote by $\Imp(W)$ the set of cost functions on $D$ which are improved by all weightings $\oo \in W$.
The following result, together with Theorem~\ref{Wrelclo}, implies that tractable valued constraint languages can be characterised by weighted polymorphisms.

\begin{theorem}[Cohen et al.~\cite{CJZ}]
For any finite valued constraint language $\G$, we have $\Imp(\wPol(\G)) = \wRelClo(\G)$.
\end{theorem}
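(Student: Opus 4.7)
The plan is to prove the two inclusions separately, via the standard Galois connection recipe.

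For the easy inclusion $\wRelClo(\Gamma) \subseteq \Imp(\wPol(\Gamma))$, it suffices to show that $\Imp(\wPol(\Gamma))$ is itself a weighted relational clone, since it obviously contains $\Gamma$. Closure under scaling by a non-negative rational is immediate because the weighted-polymorphism inequality $\sum_f \omega(f)\, \rho(f(\mathbf{x_1},\ldots,\mathbf{x_k})) \leq 0$ scales; closure under adding a rational constant uses the defining identity $\sum_f \omega(f) = 0$, which absorbs the constant. The content is closure under expressibility: suppose $\rho(x_1,\ldots,x_r) = \min_s \sum_j \rho_j(s(\sigma_j))$ where each $\rho_j \in \Imp(\wPol(\Gamma))$. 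Given tuples $\mathbf{x_1},\ldots,\mathbf{x_k} \in \Feas(\rho)$, pick witnessing assignments $s_1,\ldots,s_k$ realising the minima, and for $f \in \Pol(\Gamma)^{(k)}$ set $s_f := f(s_1,\ldots,s_k)$ applied componentwise. Then $\rho(f(\mathbf{x_1},\ldots,\mathbf{x_k})) \leq \sum_j \rho_j(s_f(\sigma_j))$, with equality when $f$ is a projection (which is where the negative weights sit). Multiplying by $\omega(f)$, summing, and exchanging sums reduces the claim to the inequalities $\sum_f \omega(f)\, \rho_j(f(s_1(\sigma_j),\ldots,s_k(\sigma_j))) \leq 0$, which hold because $\omega$ improves each $\rho_j$.

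For the hard inclusion $\Imp(\wPol(\Gamma)) \subseteq \wRelClo(\Gamma)$, I would use linear programming duality. Fix $\rho$ of arity $r$ and enumerate $\Feas(\rho) = \{\mathbf{t_1},\ldots,\mathbf{t_m}\}$. Consider the collection of all cost functions of arity $r$ that can be built from finitely many constraints with cost functions in $\Gamma$, non-negative scalars, additive constants, and minimisation over auxiliary variables. Viewing such an expression as choosing non-negative coefficients for each instantiation of a constraint pattern from $\Gamma$ on tuples of variables (with some variables to be minimised out), membership of $\rho$ becomes a linear feasibility problem over these coefficients, indexed by the values of $\rho$ on $\Feas(\rho)$. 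If $\rho \notin \wRelClo(\Gamma)$, Farkas' lemma furnishes a separating functional $\omega$ on arity-$k$ operations, for some $k$, that is non-negative off the projections, sums to zero, satisfies the weighted polymorphism inequality for every $\rho_0 \in \Gamma$, yet strictly fails it for $\rho$. Thus $\omega \in \wPol(\Gamma) \setminus \text{(improvers of }\rho\text{)}$, contradicting $\rho \in \Imp(\wPol(\Gamma))$.

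The main obstacle is lining up the primal LP (membership of $\rho$ in $\wRelClo(\Gamma)$) with a dual whose variables naturally organise into a weighting of $\Pol(\Gamma)$: one must encode exactly the three closure operations of a weighted relational clone (scaling, additive constant, expressibility) so that the dual constraints reproduce the three conditions on a weighting (sum zero, sign constraint off projections, weighted polymorphism inequalities on all $\rho_0 \in \Gamma$), and handle $\infty$-values of $\rho$ by restricting the relevant tuples to $\Feas(\rho)$ and ensuring the dual functional is supported on operations that preserve $\Feas(\rho)$ on these specific tuples. Since Theorem~\ref{Wrelclo} and this statement together are the backbone of the Galois framework, the delicate bookkeeping can be lifted directly from Cohen et al.~\cite{CJZ}.
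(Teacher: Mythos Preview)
The paper does not contain a proof of this statement: it is quoted verbatim as a result of Cohen et al.~\cite{CJZ} and used as a black box, so there is nothing in the present paper to compare your attempt against.

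That said, your sketch follows the actual argument in~\cite{CJZ} fairly closely. The easy inclusion is handled exactly as you describe. For the hard inclusion, the original proof also proceeds via LP duality (a variant of Farkas' lemma), but the primal is set up more concretely than in your outline: one fixes an arity $k$ large enough that every tuple in $\Feas(\rho)$ appears as a column of the $r \times k$ matrix listing all of $\Feas(\rho)$, and then the candidate expressions for $\rho$ range over non-negative combinations of cost functions in $\Gamma$ applied to rows of $k$-ary operation tables, plus an additive constant. The dual variables are then literally indexed by $k$-ary operations in $\Pol(\Gamma)$, so the weighting structure falls out without the bookkeeping you flag as the ``main obstacle''. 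Your high-level plan is correct, but if you want to carry it out you should look at how~\cite{CJZ} chooses $k$ and parameterises the primal, since your description of the primal (``coefficients for each instantiation of a constraint pattern'') is too vague to pin down the dual.
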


For more information on the valued constraint satisfaction problem see the recent survey~\cite{jkz14:survey}.

\subsection{Algebras and varieties}

%algebra

In this subsection we introduce the basic concepts of universal algebra that serve us as tools later on in this paper. 
An \emph{algebraic signature} is a set of function symbols together
with (finite) arities. An \emph{algebra} $\Aa$ over a fixed signature $\Sigma$ consists of a set $A$, called the \emph{universe} of $\Aa$, and a set of \emph{basic operations} that correspond to the symbols in the signature, i.e., if the signature contains a $k$-ary symbol $f$ then the algebra has a basic operation $f^{\Aa}$, which is a function $f^{\Aa} \colon A^{k} \w A$.

A subset $B$ of the universe of an algebra $\Aa$ is a \emph{subuniverse} of $\Aa$ if it is closed under all operations of $\Aa$. An algebra $\B$  is a \emph{subalgebra} of $\Aa$ if $B$ is a subuniverse of $\Aa$ and the operations of $\B$ are restrictions of all the operations of $\Aa$ to $B$.
%For any class $\K$ of algebras over a fixed signature $\Sigma$ we denote by $S(\K)$ the set of all subalgebras of algebras in $\K$.
Let $(\Aa_{i})_{i \in I}$ be a family of algebras (over the same signature). Their \emph{product} $\Pi_{i \in I} \Aa_{i}$ is an algebra with the universe equal to the cartesian product of the $A_{i}$'s  and operations computed coordinate-wise. 
For two algebras $\Aa$ and $\B$ (over the same signature), a
\emph{homomorphism} from $\Aa$ to $\B$ is a function $h \colon A \w B$ that preserves all operations. It is easy to see, that an image of a homomorphism $h \colon A \w B$ is a subalgebra of $\B$.

%quotient

Let $\K$ be a class of algebras over a fixed signature $\Sigma$. We denote by $\Ss(\K)$ the class of all subalgebras of algebras in $\K$, by $\Pp(\K)$ the class of all products of algebras in $\K$, by $\Pp_{fin}(\K)$ the class of all finite products, and by $\Hh(\K)$ the class of all homomorphic images of algebras in $\K$. If $\K=\{\Aa\}$ we write $\Ss(\Aa)$, $\Pp(\Aa)$, and $\Hh(\Aa)$ instead of $\Ss(\{\Aa\})$, $\Pp(\{\Aa\})$, and $\Hh(\{\Aa\})$, respectively.

Similarly $\V(\K)$ is the smallest class of algebras closed under all three operations.
For an algebra $\Aa$ the variety $\V(\{\Aa\})$~(denoted $\V(\Aa)$) is the variety \emph{generated} by $\Aa$, and $\V_{fin}(\Aa)$ is the class of finite algebras in $\V(\Aa)$.
The variety $\V(\Aa)$ can be characterised as follows:

\begin{proposition}[Tarski~\cite{T}]\label{hsp}
For any finite algebra $\Aa$, we have $$\V(\Aa) = \HSP(\Aa) \text{ \ \ and \ \ } \V_{fin}(\Aa) = \HSP_{fin}(\Aa).$$
\end{proposition}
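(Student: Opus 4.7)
I would prove the two equalities by establishing both inclusions in each. The inclusions $\HSP(\Aa) \subseteq \V(\Aa)$ and $\HSP_{fin}(\Aa) \subseteq \V_{fin}(\Aa)$ are immediate from the definition of $\V$, together with the observation that homomorphic images, subalgebras, and finite products of finite algebras remain finite. For the converse, I would first prove $\V(\Aa) \subseteq \HSP(\Aa)$ by showing that $\HSP(\Aa)$ is already closed under $\Hh$, $\Ss$, and $\Pp$, and then upgrade this to the finite case via a finite-generation argument.

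To see that $\HSP(\Aa)$ is closed under the three class operators, I would verify the three standard operator inclusions. First, $\Ss\Hh \subseteq \Hh\Ss$: any subalgebra $\B$ of a surjective image $h(\mathbf{U})$ is the image under $h$ of the preimage subalgebra $h^{-1}(B) \leq \mathbf{U}$. Second, $\Pp\Ss \subseteq \Ss\Pp$: a product of subalgebras $\mathbf{U}_i \leq \mathbf{V}_i$ sits as a subalgebra of $\prod_i \mathbf{V}_i$. Third, $\Pp\Hh \subseteq \Hh\Pp$: the componentwise product of surjective homomorphisms is a surjective homomorphism. Combined with the obvious $\Hh\Hh = \Hh$, $\Ss\Ss = \Ss$, $\Pp\Pp = \Pp$, these identities yield $\Hh\HSP(\Aa) = \HSP(\Aa)$, $\Ss\HSP(\Aa) \subseteq \Hh\Ss\Ss\Pp(\Aa) = \HSP(\Aa)$, and $\Pp\HSP(\Aa) \subseteq \Hh\Pp\Ss\Pp(\Aa) \subseteq \Hh\Ss\Pp\Pp(\Aa) = \HSP(\Aa)$. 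Hence $\HSP(\Aa)$ is an $\{\Hh,\Ss,\Pp\}$-closed class containing $\Aa$, so $\V(\Aa) \subseteq \HSP(\Aa)$.

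For the finite part, let $\B \in \V(\Aa)$ be finite with universe $B = \{b_1, \dots, b_n\}$. By the first equality $\B = h(\mathbf{U})$ for some surjective homomorphism $h$ and some $\mathbf{U} \leq \Aa^I$ (with $I$ possibly infinite). Choose $s_j \in h^{-1}(b_j)$ for each $j$ and let $\mathbf{U}_0 \leq \mathbf{U}$ be the subalgebra generated by $\{s_1, \dots, s_n\}$; then $h|_{\mathbf{U}_0}$ is still surjective onto $\B$. Since $A$ is finite, the map $I \to A^n$ sending $i \mapsto (s_1(i), \dots, s_n(i))$ takes at most $|A|^n$ distinct values, and I pick a finite $I_0 \subseteq I$ meeting each fibre exactly once. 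Every element of $\mathbf{U}_0$ has the form $t(s_1, \dots, s_n)$ for some term $t$, so its value at a coordinate $i$ is determined by the profile $(s_1(i), \dots, s_n(i))$ alone; consequently the projection $\pi_{I_0}\colon \mathbf{U}_0 \to \Aa^{I_0}$ is injective and identifies $\mathbf{U}_0$ with a subalgebra of the finite power $\Aa^{I_0}$. Composing with $h|_{\mathbf{U}_0}$ exhibits $\B$ as a homomorphic image of a subalgebra of a finite power of $\Aa$, i.e.\ $\B \in \HSP_{fin}(\Aa)$.

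The first equality is purely formal once the three operator inclusions are checked, so the only non-routine point is the finite reduction: one must replace the possibly infinite index set $I$ by a finite one. The key observation enabling this is that any term in the finite generating tuple $s_1, \dots, s_n$ depends on a coordinate $i$ solely through the profile $(s_1(i), \dots, s_n(i))$, of which there are at most $|A|^n$; everything else is bookkeeping.
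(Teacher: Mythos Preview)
The paper does not give its own proof of this proposition; it is stated with attribution to Tarski~\cite{T} and used as a black box. Your argument is the standard textbook proof (operator inequalities $\Ss\Hh \subseteq \Hh\Ss$, $\Pp\Ss \subseteq \Ss\Pp$, $\Pp\Hh \subseteq \Hh\Pp$ for the first equality, and the finite-profile reduction for the second) and is correct, so there is nothing to compare.
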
 

We say that an equivalence relation $\sim$ on $A$ is a \emph{congruence} of $\Aa$
% if it is a subalgebra of $\Aa^2$.
if the following condition is satisfied for all operations $f$ of $\Aa$: if for all  $i \in \{1, \dots, k\}$, we have $a_i \sim b_i$, then $f(a_1, \dots, a_k) \sim  f(b_1, \dots, b_k),$ where $k$ is the arity of $f$. 
Every congruence $\sim$ of $\Aa$ determines a \emph{quotient} algebra $\Aa /{\sim}$. Its universe is the set of the equivalence classes $A /{\sim}$ and operations are defined using their arbitrarily chosen representatives.

%term

A \emph{term} $t$ in a signature $\Sigma$ is a formal expression built from variables and symbols in $\Sigma$ that syntactically describes the composition of basic operations. For an algebra $\Aa$ over $\Sigma$ a \emph{term operation} $t^{\Aa}$ is an operation obtained by composing the basic operations of $\Aa$ according to $t$. Let $s$ and $t$ be a pair of terms in a signature $\Sigma$. We say that $\Aa$ satisfies the \emph{identity} $s \approx t$ if the term operations $s^{\Aa}$ and $t^{\Aa}$ are equal.
We say that a class of algebras $\V$ over $\Sigma$ satisfies the identity $s \approx t$ if every algebra in $\V$ does.

%identity

%Birkhoff theorem

\begin{theorem}[Birkhoff~\cite{B}]\label{Bir}
A class of algebras $\V$ is a variety if and only if there exists a set of identities such that $\V$ contains precisely those algebras that satisfy all the identities from this set.
\end{theorem}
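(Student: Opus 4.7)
The plan is to prove the two implications separately. The forward direction is routine: if $\V$ is the class of all algebras satisfying some set of identities $\Sigma$, then each of the three class operators preserves satisfaction of an identity $s \approx t$. For subalgebras the interpretations of $s$ and $t$ agree with those in the ambient algebra; for products they are computed coordinate-wise, so equality in each factor implies equality in the product; for homomorphic images equality is pushed forward by the homomorphism. Hence $\V$ is closed under $\Hh$, $\Ss$, $\Pp$ and is a variety.

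For the converse, let $\V$ be a variety and let $\Id(\V)$ denote the set of all identities satisfied by every algebra in $\V$; let $\V^{*}$ be the class of all algebras satisfying every identity in $\Id(\V)$. The inclusion $\V \subseteq \V^{*}$ is immediate. To prove $\V^{*} \subseteq \V$ I would build, for each set $X$ of variables, a \emph{free algebra} $F_{\V}(X)$ in $\V$, as follows. Form the absolutely free (term) algebra $T(X)$ in the signature $\Sigma$, and let $\theta_{\V}$ be the congruence on $T(X)$ that identifies two terms $s$ and $t$ exactly when $s \approx t \in \Id(\V)$. Set $F_{\V}(X) := T(X)/\theta_{\V}$. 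To show $F_{\V}(X) \in \V$, I would embed it into a product of algebras from $\V$: for each pair $(\Aa, \phi)$ with $\Aa \in \V$ and $\phi : X \w A$, let $\hat\phi : T(X) \w \Aa$ be the unique extension of $\phi$, and consider the map $t \mapsto (\hat\phi(t))_{(\Aa, \phi)}$ from $T(X)$ into $\prod_{(\Aa, \phi)} \Aa$. Its kernel is precisely $\theta_{\V}$, so $F_{\V}(X)$ embeds into the product; by closure under $\Pp$ and $\Ss$, it lies in $\V$.

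Given any $\B \in \V^{*}$, choose $X$ with a surjection $\pi : X \w B$ and let $\hat\pi : T(X) \w \B$ be the induced homomorphism. Since $\B$ satisfies every identity in $\Id(\V)$, the congruence $\theta_{\V}$ is contained in the kernel of $\hat\pi$, so $\hat\pi$ factors through $F_{\V}(X)$ and exhibits $\B$ as a homomorphic image of $F_{\V}(X)$. Closure of $\V$ under $\Hh$ then yields $\B \in \V$. The main obstacle is a set-theoretic one: the collection of pairs $(\Aa, \phi)$ above ranges over a proper class. The standard remedy is to restrict to a representative \emph{set} of algebras in $\V$ of cardinality at most $\max(|X|, |\Sigma|, \aleph_{0})$, using the fact that any identity violated in $F_{\V}(X)$ must already be violated by some such small algebra. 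Once this size bound is in place the rest of the construction is bookkeeping.
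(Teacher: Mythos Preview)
The paper does not prove this theorem at all: it is stated as a classical result attributed to Birkhoff and used as a black box (the only consequence drawn is that $\V(\Aa)$ consists exactly of the algebras satisfying all identities of $\Aa$, and that $\V(\Aa)$ is locally finite when $\Aa$ is finite). So there is no paper proof to compare against.

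Your argument is the standard textbook proof and is essentially correct. Two minor remarks. First, your labeling of the directions is swapped relative to the theorem as stated: the ``routine'' part you call the forward direction (equational classes are closed under $\Hh$, $\Ss$, $\Pp$) is in fact the backward implication here. Second, the set-theoretic issue you flag is real but the remedy you sketch is exactly right; one usually phrases it by noting that to separate two $\theta_{\V}$-inequivalent terms over $X$ it suffices to use an algebra in $\V$ generated by at most $|X|$ elements, hence of bounded cardinality, so one may index the product by isomorphism types of such algebras together with maps $X\to A$. With that in place your construction of $F_{\V}(X)\in\V$ and the factoring of $\hat\pi$ through it are correct and complete the proof.
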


It follows from Birkhoff's theorem that the variety $\V(\Aa)$ is the class of algebras that satisfy all the identities satisfied by $\Aa$. Moreover, if $\Aa$ is finite then $\V(\Aa)$ is \emph{locally finite}, i.e., every finitely generated algebra in $\V(\Aa)$ is finite.

\section{Core Valued Constraint Languages}\label{sec:core}

%support

For each valued constraint language $\G$ there is an associated algebra. It has universe $D$ and the set of operations $\Pol(\G)$. If all polymorphisms of $\G$ are idempotent it means that the algebra $(D, \Pol(\G))$ satisfies the identity $f(x,\ldots, x) \approx x$ for every operation $f$. Such algebras are called \emph{idempotent}. 
In this section we prove that every finite valued constraint language has a computationally equivalent valued constraint language whose associated algebra is idempotent. 
%Unless stated otherwise, we assume that the considered valued constraint languages are finite.

\subsection{Positive Clone.}

Those polymorphisms of a given language $\G$ which are assigned a positive weight by some weighted polymorphisms $\oo \in \wPol(\G)$ are of special interest in the rest of the paper. We begin this section by proving that they form a clone.

Let $\wClo$ be a weighted clone over a set $D$.
The following proposition shows that the set $\bigcup_{\oo \in \wClo} \supp(\oo)$, together with the set of projections $\Pi_{D}$, is a clone. We call it the \emph{positive clone} of $\wClo$ and denote by $C^{+}$~(if $\wClo$ is $\wPol(\G)$ then $C^+$ is denoted by $\Pol^+(\G)$).

%positive operations form a clone

\begin{proposition}\label{prop:posclo}
%For any valued constraint language $\G$ the set $\Pol^{+}(\G)$ is a clone of operations.
If $\wClo$ is a weighted clone then $C^+$ is a clone.
\end{proposition}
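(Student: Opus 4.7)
The first requirement --- that $C^+$ contains every projection --- is built into the definition, so the non-trivial task is closure under superposition. Let $f \in C^+$ be of arity $k$ and $g_1, \ldots, g_k \in C^+$ of arity $l$; the aim is to show $f[g_1, \ldots, g_k] \in C^+$. I would first dispose of the easy cases: if $f = \pi^{(k)}_i$ is a projection, the composite equals $g_i \in C^+$; if $F := f[g_1, \ldots, g_k]$ is itself a projection, then $F \in \Pi_D \subseteq C^+$; and if $F$ coincides with some $g_i$, then $F \in C^+$ via the weighting witnessing $g_i \in C^+$. Thus I may assume $f$ is not a projection --- so there is $\omega \in \C$ with $\omega(f) > 0$ --- the composite $F$ is not a projection, and $F \notin \{g_1, \ldots, g_k\}$. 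For each $i$ such that $g_i$ is not a projection, fix $\omega_i \in \C$ with $\omega_i(g_i) > 0$.

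The natural $l$-ary candidate witness for $F \in C^+$ is the superposition $\omega[g_1, \ldots, g_k]$. The weight it assigns to $F$ is $\sum_{h : h[g_1,\ldots,g_k] = F} \omega(h)$; because $F$ is not equal to any $g_j$, no projection $\pi^{(k)}_j$ satisfies $\pi^{(k)}_j[g_1, \ldots, g_k] = F$, so only non-projections $h$ contribute, and for those $\omega(h) \geq 0$, with $\omega(f) > 0$ included in the sum. Hence $\omega[g_1, \ldots, g_k](F) \geq \omega(f) > 0$.

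The only way $\omega[g_1, \ldots, g_k]$ can fail to be a valid weighting is that the negative weight of $\omega$ on some $\pi^{(k)}_j$ (for a non-projection $g_j$) is transported onto $g_j$ without being cancelled, leaving a net negative value there. To repair this I would add scaled witness weightings and consider $\omega^* := \omega[g_1, \ldots, g_k] + \sum_i \alpha_i \omega_i$ with $\alpha_i \geq 0$ chosen large enough that $\omega^*(g_j) \geq 0$ for every non-projection $g_j$. Since adding non-negative multiples of valid weightings cannot introduce new negative values on non-projections, and since $\omega_i(F) \geq 0$, the resulting $\omega^*$ is a valid weighting with $F$ in its support.

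The last --- and principal --- obstacle is to show that $\omega^* \in \C$. The difficulty is that $\omega[g_1, \ldots, g_k]$ by itself may be an improper superposition, and hence not a weighting at all, so the closure properties of $\C$ cannot be invoked directly on the summands of $\omega^*$. The plan is to rearrange the construction: produce a single arity-$k$ weighting $\tilde\omega \in \C$ --- built by adding to $\omega$ suitable arity-$k$ lifts of the $\omega_i$'s, each lift obtained as a proper superposition of $\omega_i$ with arity-$k$ projections (which is automatically proper because arity-$l$ projections land on arity-$k$ projections under such a substitution) --- so that the superposition $\tilde\omega[g_1, \ldots, g_k]$ is proper and coincides with $\omega^*$. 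Engineering the lifts and tuning the scalars so that the compensations match exactly, without cancelling the strict positivity of the weight on $F$, is the main technical content and the step I expect to require the most care.
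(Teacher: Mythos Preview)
Your overall strategy matches the paper's: form $\omega[g_1,\ldots,g_k]$, observe it may be improper only because negative weight on $\pi_j^{(k)}$ lands on a non-projection $g_j$, and repair this by adding non-negative multiples of the witness weightings $\omega_i$. The paper handles the edge cases less carefully than you do, but arrives at exactly the same candidate $\omega^* = \omega[g_1,\ldots,g_k] + \sum_j \alpha_j\,\omega_j$.

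The gap is in your final step. You propose to realise $\omega^*$ as $\tilde\omega[g_1,\ldots,g_k]$ for some $\tilde\omega\in\wClo$ of arity $k$, built by adding to $\omega$ lifts of the $l$-ary weightings $\omega_i$ obtained via superposition with $k$-ary projections. But such a lift is $\omega_i[\pi_{j_1}^{(k)},\ldots,\pi_{j_l}^{(k)}]$, and superposing it with $g_1,\ldots,g_k$ yields $\omega_i[g_{j_1},\ldots,g_{j_l}]$, not $\omega_i$. Since the $g_{j_r}$ are the given operations (not the $l$-ary projections), there is in general no choice of indices $j_1,\ldots,j_l\in\{1,\ldots,k\}$ that recovers $\omega_i$. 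So the compensating terms you need simply cannot be produced at arity $k$, and no amount of ``engineering the lifts and tuning the scalars'' will make $\tilde\omega[g_1,\ldots,g_k]$ equal $\omega^*$.

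The fix, and what the paper does, is to pass to a higher arity. Writing each $\omega_j$ as the trivial superposition $\omega_j[\pi_1^{(l)},\ldots,\pi_l^{(l)}]$, the paper invokes Lemma~\ref{superp} (Lemma~6.5 of~\cite{CJZ}): any weighted sum of superpositions $c_1\omega_1[f_1,\ldots,f_k]+c_2\omega_2[h_1,\ldots,h_l]$ equals a single superposition $\tilde\omega[f_1,\ldots,f_k,h_1,\ldots,h_l]$, where $\tilde\omega$ (of arity $k+l$) is itself a sum of proper superpositions with projections and hence lies in $\wClo$. Iterating, $\omega^*$ is a superposition of a weighting in $\wClo$ with the tuple $(g_1,\ldots,g_k,\pi_1^{(l)},\ldots,\pi_l^{(l)},\ldots)$; since $\omega^*$ is valid, this superposition is proper, so $\omega^*\in\wClo$. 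The key point your plan misses is that fresh coordinates are needed to park the $\omega_i$'s so that the outer superposition can feed them the $l$-ary projections rather than the $g$'s.
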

We will use the following technical lemma (Lemma 6.5 from~\cite{CJZ}). It implies that any weighting that can be expressed as a weighted sum of arbitrary superpositions can also be expressed as a superposition of a weighted sum of proper superpositions.

\begin{lemma}\label{superp}
Let $\wClo$ be a weighted clone, and let $\oo_1$ and $\oo_2$ be weightings in $\wClo$, of arity $k$ and $l$ respectively. For any $m$-ary operations $f_1, \dots ,f_k, g_1, \dots ,g_l$ of $C$:
$$c_1 \oo_1[f_1, \dots ,f_k] + c_2  \oo_2[g_1, \dots ,g_l] =
\oo[f_1, \dots ,f_k, g_1, \dots ,g_l],$$
where $$\oo = c_1 \omega_1[\pi_1^{(k+l)}, \ldots, \pi_k^{(k+l)}] + c_2 \omega_2[\pi_{k+1}^{(k+l)}, \ldots, \pi_{k+l}^{(k+l)}].$$
\end{lemma}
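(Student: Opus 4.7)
The plan is to verify the identity directly as an equality of $m$-ary $\Q$-valued functions on $C^{(m)}$. Both sides of the claimed equation assign rational numbers to $m$-ary operations (they need not individually satisfy the sign constraints of a weighting — that is in fact the whole point of the lemma: the RHS repackages a potentially improper combination of superpositions as a single superposition of a proper weighting). So it suffices to fix an arbitrary $h \in C^{(m)}$ and check that both sides evaluate to the same rational at $h$.

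Unfolding Definition \ref{sup}, the left-hand side evaluated at $h$ is
$$c_1 \!\!\sum_{\{f \in C^{(k)}\,:\, f[f_1,\dots,f_k]=h\}}\!\! \omega_1(f) \ + \ c_2 \!\!\sum_{\{g \in C^{(l)}\,:\, g[g_1,\dots,g_l]=h\}}\!\! \omega_2(g).$$
For the right-hand side, first observe that the $(k+l)$-ary function $\omega$ is by construction a sum of two $(k+l)$-ary superpositions, so $\omega(F) = c_1 \omega_1[\pi_1^{(k+l)},\dots,\pi_k^{(k+l)}](F) + c_2 \omega_2[\pi_{k+1}^{(k+l)},\dots,\pi_{k+l}^{(k+l)}](F)$ for every $F \in C^{(k+l)}$. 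Applying Definition \ref{sup} once more to $\omega[f_1,\dots,f_k,g_1,\dots,g_l](h)$ and substituting, one obtains a double sum indexed by pairs $(F,f)$ or $(F,g)$ with appropriate constraints.

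The key step is to collapse these double sums using associativity of operation composition together with the identity $\pi_i^{(k+l)}[f_1,\dots,f_k,g_1,\dots,g_l] = f_i$ for $i \leq k$ (and the analogous identity for $i > k$). Associativity gives, for any $f \in C^{(k)}$,
$$\bigl(f[\pi_1^{(k+l)},\dots,\pi_k^{(k+l)}]\bigr)[f_1,\dots,f_k,g_1,\dots,g_l] \;=\; f[f_1,\dots,f_k],$$
and similarly on the $g$ side. Moreover the assignment $f \mapsto f[\pi_1^{(k+l)},\dots,\pi_k^{(k+l)}]$ is injective, with image exactly those $(k+l)$-ary operations that depend only on the first $k$ coordinates; any $F \in C^{(k+l)}$ not in this image contributes $0$ to the first summand of $\omega(F)$ (and analogously for the second). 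Reindexing the outer sum over $F$ as a sum over $f \in C^{(k)}$ (resp.\ $g \in C^{(l)}$) therefore reproduces precisely the two summands of the LHS.

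There is no substantive obstacle: the argument is entirely bookkeeping, and the only thing to be careful about is the reindexing in the previous paragraph, i.e.\ checking that operations $F$ outside the image of the "padding by projections" map do not contribute. This is immediate from the definition of superposition, since the defining preimage set for $\omega_1[\pi_1^{(k+l)},\dots,\pi_k^{(k+l)}](F)$ is empty for such $F$. Once this is observed, both sides are visibly equal, which completes the proof.
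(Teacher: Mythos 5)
Your verification is correct. Note that the paper does not actually prove this statement: it is imported verbatim as Lemma~6.5 of Cohen et al.~\cite{CJZ} (the proof environment that follows it in the source is the proof of Proposition~\ref{prop:posclo}), so there is no in-paper argument to compare against; your direct pointwise computation --- unfolding Definition~\ref{sup} on both sides at an arbitrary $h \in C^{(m)}$, using associativity of composition and $\pi_i^{(k+l)}[f_1,\dots,f_k,g_1,\dots,g_l]=f_i$ to collapse the double sum, and observing that operations outside the image of the padding-by-projections map have empty preimage and hence contribute nothing --- is exactly the bookkeeping the cited lemma rests on. One small addition worth making explicit, since it is how the lemma is actually used in Propositions~\ref{prop:posclo} and~\ref{wc}: for $c_1,c_2\geq 0$ the weighting $\oo$ on the right-hand side really does lie in $\wClo$, because a superposition of a weighting with projections is always proper (padding by projections is injective and sends projections to projections, so the only operations receiving negative weight in $\omega_1[\pi_1^{(k+l)},\dots,\pi_k^{(k+l)}]$ are again projections).
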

\begin{proof}
We need to show that the set $C^{+}$ is closed under superposition. Take a $k$-ary operation $f$ and a list of $l$-ary operations $g_1, \dots, g_k$ that all belong to $C^+$.  

If $f$ is a projection there is nothing to prove. 
Otherwise there is a weighting $\omega\in\wClo$ such that $\omega(f)>0$. 
Similarly for each $g_i$ which is not a projection we find $\omega_i$ such that $\omega_i(g_i)>0$~(if $g_i$ is a projection we put $\omega_i$ to be the zero-valued $l$-ary weighting).

Now, there exist non-negative rational numbers $i_j$ such that the sum
$$
\omega[g_1,\dotsc,g_k] + i_1\omega_1[\pi_1^l,\dotsc,\pi_l^l]+\dotsb + i_k\omega_k[\pi_1^l,\dotsc,\pi_l^l]
$$
is a valid weighting. 
By Lemma~\ref{superp} this weighting can be obtained as a superposition of a sum of proper superpositions and therefore belongs to $\wClo$ which finishes the proof.

\end{proof}

%The following result implies that if $\Pol^{+}(\G)$ contains only projections then the valued %constraint language $\G$ is NP-hard. 

%\begin{proposition}[Cohen et al.~\cite{CJZ}]\label{ppp}
%For any valued constraint language $\G$, either $\G$ is NP-hard, or else $\wPol(\G)$ contains a %weighting which assigns a positive weight to at least one operation that is not a projection.
%\end{proposition}

\subsection{Cores.}

Let $\G$ be a valued constraint language with a domain $D$. For $S \subseteq D$ we denote by $\G[S]$ the valued constraint language defined on a domain $S$ and containing the restriction of every cost function $\rr \in \G$ to $S$. We show that $\G$ has a computationally equivalent valued constraint language $\G'$ such that $\Pol_1^+(\G')$ contains only bijective operations. Such a language is called a \emph{core}. Moreover, $\G'$ can be chosen to be equal $\G[S]$ for some $S \subseteq D$.

%core
%
%\begin{definition}
%A valued constraint language $\G$ is a \emph{core} if for every unary weighted polymorphism $\oo$ of $\G$, $\supp(\oo)$ contains only bijective operations. 
%\end{definition}

\begin{proposition}\label{core}
For every valued constraint language $\G$ there exists a core language $\G'$, such that the valued constraint language $\G$ is tractable if and only if $\G'$ is tractable, and it is NP-hard if and only if $ \G'$ is NP-hard.
\end{proposition}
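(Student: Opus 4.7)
The plan is to adapt the classical CSP core reduction to the weighted setting, using the positive sub-clone $\Pol^+(\G)$ from Proposition~\ref{prop:posclo} in place of $\Pol(\G)$. I would pick $f \in \Pol_1^+(\G)$ whose image $S := f(D)$ has minimal cardinality. Since $\Pol^+(\G)$ is closed under composition, iterating $f$ and invoking minimality first gives $f(S)=S$, so $f|_S$ is a permutation of $S$; replacing $f$ by a suitable power, I may further assume $f\circ f = f$, and hence $f|_S = \id_S$. My candidate core is $\G' := \G[S]$.

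Computational equivalence splits into two reductions. That $\VCSP(\G[S])$ reduces to $\VCSP(\G)$ is immediate, since $\G[S]$ consists of restrictions of cost functions from $\G$. For the converse, I fix a weighted polymorphism $\oo \in \wPol(\G)$ with $\oo(f)>0$; applying its defining inequality to each constraint of an instance $\I$ and summing yields $\sum_{h} \oo(h)\cdot Cost_\I(h\circ s) \leq 0$ for every feasible assignment $s$. Because $\id_D$ is the only unary projection, and therefore the only unary operation that can carry negative weight, taking $s$ to be optimal forces $Cost_\I(h\circ s) = Cost_\I(s)$ for every $h \in \supp(\oo)$. In particular $f\circ s$ is an optimal assignment valued in $S$, so the optima of $\I$ over $D$ and over $S$ coincide; this gives the complexity reduction in both directions.

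The main obstacle, and the step I expect to be most delicate, is to show that $\G[S]$ is actually a core, i.e.\ that every $g \in \Pol_1^+(\G[S])$ is a bijection on $S$. I would argue by contradiction. If $|g(S)| < |S|$, the operation $g\circ f \colon D \w D$ (well-defined since $f$ maps $D$ into $S$) belongs to $\Pol(\G)$, because $f$ sends $\Feas(\rr)$ into $\Feas(\rr)\cap S^r = \Feas(\rr|_S)$ for every $\rr\in\G$ and $g$ preserves the latter; moreover its image $g(S)$ is strictly smaller than $S$. Showing $g\circ f \in \Pol_1^+(\G)$ would then contradict the minimality of $|S|$. To produce a witness I would lift a weighted polymorphism $\oo_g \in \wPol(\G[S])$ with $\oo_g(g)>0$ by setting $\oo'(h) := \oo_g(g')$ whenever $h = g'\circ f$ for the necessarily unique $g' \in \Pol_1(\G[S])$, and $\oo'(h) := 0$ otherwise. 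The subtle point is that $\oo'$ assigns the negative value $\oo_g(\id_S)$ to the non-projection $f$; I would fix this by forming $\oo'' := \oo' + c \cdot \oo$ for $c$ large enough that only $\id_D$ retains negative weight. The weighted polymorphism inequality for $\oo''$ then follows by combining the one for $\oo_g$ (applied to $f(\mathbf{x}) \in \Feas(\rr|_S)$) with the one for $\oo$, and $\oo''(g\circ f) = \oo_g(g) + c\cdot\oo(g\circ f) > 0$ certifies that $g\circ f \in \Pol_1^+(\G)$, yielding the desired contradiction.
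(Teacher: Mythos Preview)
Your proposal is correct, and the computational-equivalence step (deriving $Cost_\I(h\circ s)=Cost_\I(s)$ for all $h\in\supp(\oo)$ from the weighted-polymorphism inequality) is exactly the content of the paper's Lemma~\ref{opt}. The difference lies in how you establish that the resulting language is a core. You pick $f$ with image of \emph{minimal} size and then prove directly that $\G[S]$ is a core via the lifting argument: pushing a weighted polymorphism $\oo_g$ of $\G[S]$ forward along $g'\mapsto g'\circ f$, and repairing the resulting negative weight on the non-projection $f$ by adding a large multiple of $\oo$. This works, but it is precisely the step you flag as ``most delicate,'' and it is entirely avoidable. The paper instead takes \emph{any} non-bijective $f\in\Pol_1^+(\G)$, sets $\G'=\G[f(D)]$, proves computational equivalence (your easy step), and then simply says: if $\G'$ is not yet a core, repeat. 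Since the domain shrinks at every step, the process terminates. This sidesteps the need to relate $\Pol_1^+(\G[S])$ back to $\Pol_1^+(\G)$ altogether. Your route buys a one-shot description of the core as $\G[S]$ for the minimal image $S$, at the cost of the lifting lemma; the paper's route trades that structural statement for a much shorter argument.
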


We prove the above result by generalizing the arguments for finite-valued languages given in~\cite{HKP,TZ}. We need an auxiliary lemma.

\begin{lemma}\label{opt}
For a valued constraint language $\G$, let $f \in \Pol_1^{+}(\Gamma)$ and let $\I \in \VCSP(\G)$. If $s$ is an optimal assignment for $\I$, then $f(s)$ is also optimal.
\end{lemma}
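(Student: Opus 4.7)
The plan is to exploit the defining inequality of a weighted polymorphism, summed across all constraints of the instance, together with the fact that unary projections consist only of the identity.

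First I would dispose of trivial cases. If $f$ is a projection, then necessarily $f = \id$ and $f(s) = s$, so there is nothing to prove. If $\Opt_\G(\I) = \infty$, then every assignment (including $f(s)$) is optimal, since applying $f \in \Pol(\G)$ preserves $\Feas(\rr)$ for each $\rr \in \G$, so $f(s)$ has infinite cost as well. Thus we may assume $f \neq \id$ and $\Cost_\I(s) < \infty$; in particular, for every constraint $(\sigma,\rr) \in \C$, we have $s(\sigma) \in \Feas(\rr)$.

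Since $f \in \Pol_1^+(\G)$, by definition there is a unary weighting $\oo \in \wPol(\G)$ with $\oo(f) > 0$. The only unary projection on $D$ is $\id$, so $\oo$ can take negative values only at $\id$, and the zero-sum condition forces $\oo(\id) = -\sum_{g \neq \id} \oo(g)$. Apply the weighted-polymorphism inequality to each constraint tuple $s(\sigma)$:
$$\sum_{g \in \Pol_1(\G)} \oo(g) \cdot \rr(g(s(\sigma))) \leq 0.$$
Summing over all $(\sigma,\rr) \in \C$ and swapping the order of summation, the inner sum over constraints equals $\Cost_\I(g(s))$, so
$$\sum_{g} \oo(g) \cdot \Cost_\I(g(s)) \leq 0.$$

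Now I would substitute $\oo(\id) = -\sum_{g \neq \id} \oo(g)$ and rewrite this as
$$\sum_{g \neq \id} \oo(g) \cdot \bigl(\Cost_\I(g(s)) - \Cost_\I(s)\bigr) \leq 0.$$
Each factor $\oo(g)$ with $g \neq \id$ is non-negative, and each cost difference is non-negative by optimality of $s$, so every summand is zero. Since $\oo(f) > 0$, the summand corresponding to $g = f$ yields $\Cost_\I(f(s)) = \Cost_\I(s)$, proving that $f(s)$ is optimal.

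There is no genuine obstacle here; the only subtlety is the bookkeeping step where one rewrites the summed weighted-polymorphism inequality using the zero-sum property of $\oo$ and the fact that $\id$ is the unique unary projection, turning the inequality into a non-negative combination of non-negative terms that must vanish.
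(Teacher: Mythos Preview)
Your proof is correct and follows essentially the same approach as the paper: both use the unary weighted-polymorphism inequality, the fact that $\id$ is the unique unary projection, and the optimality of $s$ to force $Cost_\I(g(s)) = Cost_\I(s)$ for every $g \in \supp(\oo)$. The only cosmetic differences are that the paper packages the sum over constraints by viewing $Cost_\I$ itself as a cost function in $\wRelClo(\G)$ (so the inequality applies directly), and it normalizes $\oo(\id) = -1$ before sandwiching with the reverse inequality, whereas you sum over constraints explicitly and rewrite the result as a non-negative combination of non-negative terms. One small expository wrinkle: in your handling of the case $\Opt_\G(\I) = \infty$, the clause about $f$ preserving $\Feas(\rr)$ is irrelevant (and points the wrong way); it suffices to say that if the optimum is $\infty$ then every assignment is optimal.
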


\begin{proof}
Let $f$, $\I$ and $s$
be like in the statement of the lemma. Observe that $Cost_{\I}$ (see Definition~\ref{inst}) can be seen as a cost function whose arity is equal to the number of variables in $\I$. Moreover, $Cost_{\I}$ belongs to $\wRelClo(\G)$ as it is clearly expressible over $\G$. 
If $Cost_{\I}(s) = \infty$ then there is no assignment with a finite cost and we are done. 

Assume that $Cost_{\I}(s) < \infty$, which means that $s \in \Feas(Cost_{\I})$. 
If $f \neq \id$, then there exists a weighted polymorphism $\oo$ with $\oo(f) > 0$. By definition the following inequality is satisfied: $$\sum_{g \in \Pol_{1}(\G)} \oo(g) \cdot Cost_{\I}(g(s)) \leq 0.$$  
Without loss of generality we can assume that $\oo(\id) = -1$. Then we have that $\sum_{g \in \supp(\oo)} \oo(g) = 1$ and the inequality above can be rewritten as
$$\sum_{g \in \supp(\oo)} \oo(g) \cdot Cost_{\I}(g(s))  \leq Cost_{\I}(s).$$
On the other hand,
$$\sum_{g \in \supp(\oo)} \oo(g) \cdot Cost_{\I}(g(s)) \geq  \sum_{g \in \supp(\oo)} \oo(g) \cdot Cost_{\I}(s) = Cost_{\I}(s).$$
Therefore $Cost_{\I}(g(s)) = Cost_{\I}(s)$ for each operation $g \in \supp(\oo)$. Since $f \in \supp(\oo)$ and $s$ is optimal, $f(s)$ is also optimal.
\end{proof}

\begin{proof}
(of Proposition~\ref{core}) Let $\G$ be a valued constraint language over a domain $D$. Suppose that there is a unary polymorphism $f \in \Pol^{+}(\G)$ that is not bijective. Let $\G' = \G[f(D)]$, where $f(D)\varsubsetneq D$ denotes the range of $f$. There is a natural correspondence between instances of $\VCSP(\G')$ and instances of $\VCSP(\G)$, induced by the correspondence between functions in $\G$ and their restrictions in $\G'$. For any instance $\I'$ of $\VCSP(\G')$ the corresponding instance $\I$ of $\VCSP(\G)$ has the same variables. The cost function $\rr'$ in each constraint is replaced by any cost function $\rr$ from $\G$, which is equal to $\rr'$ when restricted to $f(D)$. We show that $\Opt_{\G}(\I) = \Opt_{\G'}(\I')$.

Any assignment for $\I'$ is also an assignment for $\I$, and hence $\Opt_{\G}(\I) \leq \Opt_{\G'}(\I')$. Furthermore, by Lemma~\ref{opt} for each $s$ that is an optimal assignment for $\I$, we have $$Cost_{\I}(s) = Cost_{\I}(f(s)) = Cost_{\I'}(f(s)).$$ Therefore, $\Opt_{\G}(\I) \geq \Opt_{\G'}(\I')$.

It follows that $\VCSP(\G)$ is tractable if and only if $\VCSP(\G')$ is tractable, and it is NP-hard if and only if $\VCSP(\G')$ is NP-hard. Moreover, the valued constraint language $\G'$ is defined over a smaller domain.
We replace $\G$ with $\G'$ and repeat this procedure, until we obtain a language $\G'$ that is a core.
\end{proof}

For core languages we characterize the set of unary weighted polymorphisms.

\begin{proposition}\label{unary}
Let $\G$ be a core valued constraint language. A unary weighting $\oo$ is a weighted polymorphism of $\G$ if and only if it assigns positive weights only to such bijective operations $f \in \Pol_{1}(\G)$ that, for all cost functions $\rr \in \G$, satisfy $\rr \circ f = \rr$.
\end{proposition}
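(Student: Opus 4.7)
The plan is to prove the two implications separately, with the ``only if'' direction being the substantive one.

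For the ``if'' direction, assume that every $f \in \supp(\oo)$ is a bijection and satisfies $\rr \circ f = \rr$ for all $\rr \in \G$. Because $\id$ is the only unary projection, it is the only unary operation that can receive a negative weight from $\oo$; combined with the assumption, this gives $\rr(g(\mathbf{x})) = \rr(\mathbf{x})$ for every $g$ in $\supp(\oo) \cup \{\id\}$, and hence
$$\sum_{g} \oo(g)\, \rr(g(\mathbf{x})) \;=\; \rr(\mathbf{x}) \cdot \sum_g \oo(g) \;=\; 0$$
for every $\rr \in \G$ and every $\mathbf{x} \in \Feas(\rr)$. Therefore $\oo \in \wPol(\G)$.

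For the ``only if'' direction, let $\oo$ be a nonzero unary weighted polymorphism of $\G$. Because $\G$ is a core, every element of $\supp(\oo)$ is a bijection of $D$. Set $a = -\oo(\id)$; since $\id$ is the only unary operation admitting a negative weight and the total weight sums to zero, one checks that $a > 0$ and $\sum_{g \in \supp(\oo)} \oo(g) = a$. The defining inequality of a weighted polymorphism therefore rewrites, for each $\rr \in \G$ of arity $r$ and each $\mathbf{y} \in \Feas(\rr)$, as
$$\sum_{g \in \supp(\oo)} \frac{\oo(g)}{a}\, \rr(g(\mathbf{y})) \;\leq\; \rr(\mathbf{y}),$$
a one-sided bound on a convex combination of translates of $\rr(\mathbf{y})$.

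The heart of the argument is a finite orbit trick. Let $G$ be the group of permutations of $D$ generated by $\supp(\oo)$---on a finite set each permutation has a power equal to the identity, so the monoid generated by $\supp(\oo)$ is automatically a group---and let $G$ act on $D^r$ componentwise. Each element of $G$ lies in $\Pol_1(\G)$, so it preserves $\Feas(\rr)$; hence every $G$-orbit $O$ is either contained in $\Feas(\rr)$ or disjoint from it, the second case being vacuous. For $O \subseteq \Feas(\rr)$, summing the displayed inequality over $\mathbf{y} \in O$ and using that each $g \in \supp(\oo)$ permutes $O$ shows that both sides equal $\sum_{\mathbf{y} \in O} \rr(\mathbf{y})$; consequently the inequality is actually a pointwise equality on $O$. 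Picking $\mathbf{y}^* \in O$ that maximizes $\rr$ and using that $\oo(g)/a$ is a probability distribution then forces $\rr(g(\mathbf{y}^*)) = \rr(\mathbf{y}^*)$ for every $g \in \supp(\oo)$; iterating along words from $\supp(\oo)$, which reach every point of $O$, shows $\rr$ is constant on $O$, giving $\rr \circ f = \rr$ for every $f \in \supp(\oo)$. The main obstacle is that the weighted polymorphism condition only bounds a weighted \emph{average} of $\rr(g(\mathbf{y}))$ from above; the orbit-summation step is what upgrades the inequality to a pointwise equality, after which the finiteness of $O$ and the group structure of $G$ propagate equality from the maximum throughout $O$.
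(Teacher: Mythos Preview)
Your proof is correct. The ``if'' direction is the same as the paper's. For the ``only if'' direction you take a genuinely different route from the paper.

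The paper argues globally rather than orbit-by-orbit: it picks a tuple $\mathbf{x} \in \Feas(\rr)$ at which $\rr$ attains its \emph{minimum}, applies the weighted-polymorphism inequality at $\mathbf{x}$, and observes that since each $\rr(g(\mathbf{x})) \geq \rr(\mathbf{x})$ the convex combination is squeezed to equality, giving $\rr(g(\mathbf{x})) = \rr(\mathbf{x})$ for every $g \in \supp(\oo)$. Because each $g$ is a bijection of $\Feas(\rr)$, the minimal level set of $\rr$ is $g$-invariant; one then removes that level set and repeats the argument on the complement, obtaining by induction that every level set is preserved, hence $\rr \circ g = \rr$.

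Your approach instead introduces the group $G$ generated by $\supp(\oo)$, sums the inequality over a $G$-orbit to upgrade it to a pointwise equality, and then propagates from the orbit maximum using the monoid-equals-group observation. The paper's proof is a bit shorter and never needs to name the group or its orbits; your argument is equally valid, makes the group structure explicit, and directly yields the slightly sharper statement that $\rr$ is constant on each $G$-orbit in $\Feas(\rr)$.
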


\begin{proof}
If a valid unary weighting $\oo$ assigns positive weights only to such operations $f \in \Pol_{1}(\G)$ that, for all cost functions $\rr \in \G$, satisfy $\rr \circ f = \rr$, then for each $\rr \in \G$ and a tuple $\mathbf{x} \in \Feas(\rr)$
$$\sum_{f \in \Pol_{1}(\G)} \oo(f) \cdot \rr(f(\mathbf{x})) = \sum_{f \in \Pol_{1}(\G)} \oo(f) \cdot \rr(\mathbf{x})  = 0,$$
and $\oo$ is clearly a weighted polymorphism of $\G$.

For the other direction, let $\oo$ be a unary weighted polymorphism of $\G$, such that $\supp(\oo) \neq \emptyset$. Without loss of generality assume that $\oo(\id)=-1$. Since $\G$ is a core language, the operations $g \in \supp(\oo)$ are bijective. For $\rr \in \G$ and a tuple $\mathbf{x} \in \Feas(\rr)$ for which $\rr$ takes the minimal value, we have
$$\sum_{g \in \supp(\oo)} \oo(g) \cdot \rr(g(\mathbf{x})) + \oo(\id) \cdot \rr(\mathbf{x}) \leq 0, \text{ hence}$$
$$\rr(\mathbf{x}) \geq \sum_{g \in \supp(\oo)} \oo(g) \cdot \rr(g(\mathbf{x})) \geq \sum_{g \in \supp(\oo)} \oo(g) \cdot \rr(\mathbf{x}) = \rr(\mathbf{x}).$$
Therefore $\rr(g(\mathbf{x})) = \rr(\mathbf{x})$ for each $g \in \supp(\oo)$, which means that the operations in the support preserve the minimal weight.

Note that, since each $g \in \supp(\oo)$ is bijective, it determines a bijection of the set $\Feas(\rr)$. We have shown that this bijection preserves the set of tuples with minimal weight. It can be similarly shown by induction that it preserves the set of tuples with any other fixed weight. Hence, we have proved that $\rr \circ g = \rr$ for all $g \in \supp(\oo)$.
\end{proof}

This implies that, for any core language $\G$, a unary polymorphism belongs to $\Pol^+(\G)$ if and only if it is bijective and preserves all cost functions in~$\G$.

Let $\G$ be a finite core valued constraint language over a domain $D$. For each arity $m$ we fix an enumeration of all the elements of $D^m$. This allows us to treat every $m$-ary operation $f \in \Oo_D^{(m)}$ as a $|D^m|$-tuple.
We define a $|D^m|$-ary cost function in $\wRelClo(\G)$ that precisely distinguishes the $m$-ary operations in the positive clone from all the other $m$-ary polymorphisms. To do this we need the following technical lemma, which is a variant of the well known Farkas' Lemma used in linear programming:

%Farkas lemma

\begin{lemma}[Farkas~\cite{farkas}]\label{Farkas}
Let $S$ and $T$ be finite sets of indices, where $T$ is a disjoint union of two subsets, $T_{\geq}$ and $T_{=}$. For all $i \in S$, and all $j \in T$, let $a_{i,j}$ and $b_{j}$ be rational numbers. Exactly one of the following holds:
\begin{itemize}
\item Either there exists a set of non-negative rational numbers $\{z_i \ | \ i \in S \}$ and a rational number $C$ such that
$$\text{for each } j \in R_{\geq}, \ \ \sum_{i \in S} a_{i,j} z_{i} \geq b_{j} + C,$$
$$\text{for each } j \in R_{=}, \ \ \sum_{i \in S} a_{i,j} z_{i} = b_{j} + C.$$
\item Or else there exists a set of rational numbers $\{y_{j} \ | \ j \in T \}$ such that $\sum_{j \in T} y_j = 0$ and
$$\text{for each } j \in T_\geq, \ \ y_j \geq 0,$$
$$\text{for each } i \in S, \ \ \sum_{j \in T} y_j a_{i,j} \leq 0,$$ 
$$\text{and } \sum_{j \in T} y_j b_j > 0.$$
\end{itemize}
\end{lemma}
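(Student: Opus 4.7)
The plan is to derive Lemma~\ref{Farkas} from the classical Farkas' Lemma in its standard form: for a rational matrix $A$ and a rational vector $b$, exactly one of the systems ``$Ax = b$, $x \geq 0$'' and ``$A^{T}y \leq 0$, $y^{T}b > 0$'' is solvable.

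First I would dispose of the easy direction by showing that the two alternatives of the lemma are mutually exclusive. Assume both hold. Multiply the $j$-th primal relation by $y_j$; this preserves equality for $j \in T_{=}$, and preserves the inequality for $j \in T_{\geq}$ since $y_j \geq 0$. Summing over $j \in T$ and swapping the order of summation on the left-hand side yields
$$\sum_{i \in S} z_i \Bigl(\sum_{j \in T} a_{i,j} y_j\Bigr) \;\geq\; \sum_{j \in T} y_j b_j + C \sum_{j \in T} y_j.$$
The left-hand side is $\leq 0$ because $z_i \geq 0$ and $\sum_{j} a_{i,j} y_j \leq 0$, while the right-hand side equals $\sum_{j} y_j b_j$ by $\sum_j y_j = 0$, and is therefore strictly positive; contradiction.

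For the nontrivial direction, I would recast the first alternative as a standard non-negative feasibility system and apply Farkas' Lemma to its negation. Introduce a slack variable $s_j \geq 0$ for each $j \in T_{\geq}$, converting the constraint into $\sum_{i} a_{i,j} z_i - s_j - C = b_j$, and split the sign-unrestricted $C$ as $C = C^{+} - C^{-}$ with $C^{\pm} \geq 0$. The resulting system has only non-negative unknowns $(z_i,\ C^{+},\ C^{-},\ s_j)$ and consists solely of equalities indexed by $j \in T$. Assume this system is infeasible; the classical Farkas alternative supplies rationals $\{y_j\}_{j \in T}$ with $\sum_j y_j b_j > 0$ such that $A^{T}y \leq 0$ holds column-by-column. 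Reading these column inequalities off gives $\sum_{j} a_{i,j} y_j \leq 0$ for each $i \in S$ (from column $z_i$); $-\sum_{j} y_j \leq 0$ and $\sum_{j} y_j \leq 0$ (from columns $C^{+}$ and $C^{-}$, which together force $\sum_j y_j = 0$); and $-y_j \leq 0$, i.e.\ $y_j \geq 0$, for each $j \in T_{\geq}$ (from column $s_j$). Together these are exactly the requirements of the second alternative. The only conceptual point is the handling of the unrestricted $C$, and the standard trick $C = C^{+} - C^{-}$ is precisely what forces the dual condition $\sum_j y_j = 0$ distinguishing this variant from a textbook Farkas.
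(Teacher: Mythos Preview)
Your derivation is correct: the mutual-exclusivity argument is clean, and the reduction to the classical equality form via slack variables and the split $C = C^{+} - C^{-}$ produces exactly the dual conditions required, with the $C^{\pm}$ columns forcing $\sum_j y_j = 0$ as you note.

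As for comparison: the paper does not prove this lemma at all. It is stated with an external citation and used as a black box (to establish Proposition~\ref{polymor} and, indirectly, Gordan's theorem). So there is no ``paper's own proof'' to compare against; your argument supplies what the paper simply imports.
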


The set $\{y_j \ | \ j \in T \}$ defined in the lemma is called a \emph{certificate of unsolvability}.

%relation N

\begin{proposition}\label{polymor}
Let $\G$ be a finite core valued constraint language over a domain $D$. For every $m$ there exists a cost function $\rr \colon \Oo_D^{(m)} \rightarrow \overline{Q}$ in $\wRelClo(\G)$, and a rational number $P$, such that for every $f \in \Oo_D^{(m)}$ the following conditions are satisfied:
\begin{enumerate}
\item $\rr (f) \geq P$,
\item $\rr (f) < \infty$ if and only if $f \in \Pol(\G)$,
\item $\rr (f) = P $ if and only if $f \in \Pol^{+}(\G)$.
\end{enumerate}
\end{proposition}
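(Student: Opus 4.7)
The plan is to express $\rr$ as a sum $\rr=\rr_1+\rr_2$ of two cost functions in $\wRelClo(\G)$: a feasibility term $\rr_1$ responsible for condition~(2), and a separation term $\rr_2$ responsible for conditions~(1) and~(3). For $\rr_1$, enumerate the pairs $(\rho,M)$ with $\rho\in\G$ of arity $r$ and $M$ an $m\times r$ matrix over $D$ whose rows lie in $\Feas(\rho)$, and let $\mathbf{x}_1^M,\dots,\mathbf{x}_r^M\in D^m$ be the columns of $M$. Set
\[
\rr_1(f)\;:=\;\sum_{(\rho,M)} 0\cdot\rho\bigl(f(\mathbf{x}_1^M),\dots,f(\mathbf{x}_r^M)\bigr).
\]
Since $\wRelClo(\G)$ is closed under expressibility, non-negative scaling, and addition, $\rr_1\in\wRelClo(\G)$; under the paper's convention $0\cdot\infty=\infty$, each summand is $0$ when $(f(\mathbf{x}_1^M),\dots,f(\mathbf{x}_r^M))\in\Feas(\rho)$ and $\infty$ otherwise, so $\rr_1(f)=0$ exactly when $f\in\Pol(\G)$ and $\rr_1(f)=\infty$ otherwise.

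For $\rr_2$, I would apply Farkas' Lemma (Lemma~\ref{Farkas}) to the system with index set $S:=\{(\rho,\mathbf{X})\mid \rho\in\G,\ \mathbf{X}\in(D^m)^r\text{ is the column-tuple of some row-feasible }m\times r\text{ matrix for }\rho\}$, $T:=\Pol_m(\G)$ split into $T_{=}:=\Pol_m^+(\G)$ and $T_{\geq}:=\Pol_m(\G)\setminus\Pol_m^+(\G)$, coefficients $a_{(\rho,\mathbf{X}),f}:=\rho(f(\mathbf{x}_1),\dots,f(\mathbf{x}_r))\in\Q$ (finite because $f$ preserves $\Feas(\rho)$), and targets $b_f=0$ on $T_{=}$, $b_f=1$ on $T_{\geq}$. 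The first Farkas alternative produces non-negative rationals $\{z_i\}$ and a rational $C$; then $\rr_2(f):=\sum_i z_i\,a_{i,f}$ lies in $\wRelClo(\G)$ as a non-negative rational combination of cost functions expressible from $\G$, and equals $C$ on $\Pol_m^+(\G)$ while being at least $C+1$ elsewhere in $\Pol_m(\G)$. Taking $P:=C$ and $\rr:=\rr_1+\rr_2$ then verifies conditions~(1)--(3): note that for $f\notin\Pol(\G)$ we have $\rr(f)=\infty+\rr_2(f)=\infty\geq P$.

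The heart of the argument is ruling out the second Farkas alternative. A certificate $(y_f)_{f\in\Pol_m(\G)}$ would sum to zero, be non-negative off $\Pol_m^+(\G)$ with $\sum_{f\notin\Pol_m^+(\G)} y_f>0$, and satisfy $\sum_f y_f\,\rho(f(\mathbf{x}_1),\dots,f(\mathbf{x}_r))\leq 0$ for every row-feasible $\mathbf{X}$ and every $\rho\in\G$ -- exactly the defining inequality of an $m$-ary weighted polymorphism. The only obstruction to $y\in\wPol(\G)$ is that $y$ might be negative on non-projection operations $g\in\Pol_m^+(\G)$. By the definition of the positive clone given before Proposition~\ref{prop:posclo}, each such $g$ admits an $m$-ary $\omega_g\in\wPol(\G)$ with $\omega_g(g)>0$; I would set $\mu_g:=|y_g|/\omega_g(g)$ and $y':=y+\sum_g\mu_g\omega_g$. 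Each $\omega_g$ is itself a weighting and hence non-negative on non-projections, so the addition introduces no new negative entries, while the contribution $\mu_g\omega_g(g)=|y_g|$ cancels the deficit at each problematic $g$. The polymorphism inequalities are preserved under non-negative sums, giving $y'\in\wPol(\G)$; and since every $f\in\Pol_m(\G)\setminus\Pol_m^+(\G)$ is a non-projection, one has $y'_f\geq y_f$ there, whence $\sum_{f\notin\Pol_m^+(\G)} y'_f>0$ and so $\supp(y')$ contains an operation outside $\Pol^+(\G)$. This contradicts the definition of $\Pol^+(\G)$ and completes the proof.
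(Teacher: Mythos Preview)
Your proof is correct and follows essentially the same approach as the paper's: both set up the identical Farkas system over $T_{=}=\Pol_m^+(\G)$ and $T_{\geq}=\Pol_m(\G)\setminus\Pol_m^+(\G)$ with targets $0$ and $1$, and both dispose of the certificate alternative by adding non-negative multiples of weighted polymorphisms $\omega_g\in\wPol(\G)$ to repair negative entries on non-projection $g\in\Pol_m^+(\G)$. Your explicit decomposition $\rr=\rr_1+\rr_2$ (with $\rr_1$ isolating the feasibility relation via the $0\cdot\infty=\infty$ convention) is a presentational refinement of what the paper does in one step, and your certificate-repair argument is spelled out in more detail than the paper's reference to ``as in the proof of Proposition~\ref{prop:posclo}'', but the substance is the same.
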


\begin{proof}
The cost function $\rr$ is given by a sum of all cost functions in $\G$ with positive coefficients that we define later on. 

Like in the classical CSP, a cost function whose feasibility relation contains exactly those $|D^m|$-tuples which are $m$-ary polymorphisms of $\G$ is defined by:
  \begin{equation*}
    \sum_{\substack{\rr\in\Gamma\\ ({\bf a_1},\dotsc, {\bf a_m})\in (\Feas(\rr))^m}} \rr(x_{{\bf b_1}},\dotsc, x_{{\bf b_{r}}}),
% \text{ where } {\bf b_i}(j) = {\bf a_j}(i).
  \end{equation*}
where ${\bf b_i}(j) = {\bf a_j}(i)$, and $r$ is the arity of $\rr$.
For each summand we introduce a variable $z_{\rr,{\bf a_1},\dotsc, {\bf a_m}}$ and, for each $f\in \Pol^+_m(\Gamma)$ we write:
  \begin{equation*}
    \sum_{\substack{\rr\in\Gamma\\ ({\bf a_1},\dotsc, {\bf a_m}) \in (\Feas(\rr))^m}}z_{\rr,{\bf a_1},\dotsc, {\bf a_m}}  \rr(f({\bf b_1}),\dotsc,f({\bf b_{r}}) ) = 0 + C,
  \end{equation*}
  while for each $f\in \Pol_m(\Gamma)\setminus \Pol^+_m(\Gamma)$:
  \begin{equation*}
    \sum_{\substack{\rr\in\Gamma\\ ({\bf a_1},\dotsc, {\bf a_m}) \in (\Feas(\rr))^m}}z_{\rr,{\bf a_1},\dotsc, {\bf a_m}}  \rr(f({\bf b_1}),\dotsc,f({\bf b_{r}}) ) \geq 1 + C,
  \end{equation*}
where ${\bf b_i}(j) = {\bf a_j}(i)$, and $r$ is the arity of $\rr$.

By putting the above equalities and inequalities together we obtain a system of linear inequalities and equations. By Lemma~\ref{Farkas} there are two mutually exclusive possibilities. 
First, there may exist a set of non-negative rational numbers $z_{\rr,{\bf a_1},\dotsc, {\bf a_m}}$ and a rational  number $C$, such that this system is satisfied. Then the proposition is proved: items 1. and 3. follow trivially from construction. Item 2. follows by definition of the cost function.

Otherwise, there exists a set $\{ y_f \ | \ f \in \Pol_m(\G) \}$ which forms the certificate of unsolvability. Then let us consider a weighting defined by $\oo(f) = y_f$. If $\oo$ is a valid weighting, then it is an $m$-ary weighted polymorphism of $\G$. Moreover, $\oo$ assigns to all operations in $\Pol_m(\G) \setminus \Pol^{+}_m(\G)$ non-negative weights that sum up to a positive number. Hence, for some $h \in \Pol_m(\G) \setminus \Pol^{+}_m(\G)$, we have $\oo(h) > 0$, which contradicts $h \notin \Pol^{+}_m(\G)$.
If it happens that $y_g < 0$ for some operation $g \in \Pol^{+}_m(\G)$ that is not a projection, then there exists an $m$-ary weighted polymorphism of $\G$ which assigns a positive weight to $g$. By scaling it and adding to $\oo$~(as in the proof of Proposition~\ref{prop:posclo}), we obtain the weighted polymorphism needed for the contradiction. 
\end{proof}

\subsection{Rigid cores.}\label{rigid}

We further reduce the class of languages that we need to consider. Let $\G$ be a core valued constraint language over an $n$-element domain $D = \{d_1, \dots, d_n \}$. For each $i \in \{ 1, \dots, n \}$, let 
$$
N_i(x) = \begin{cases} 0 &\mbox{if } x=d_i, \\
\infty & \mbox{otherwise.} \end{cases}$$ 
and let $\G_c$ denote the valued constraint language obtained from $\G$ by adding all cost functions $N_i$. Observe that $\Pol(\G_c) = \IdPol(\G)$, where by $\IdPol(\G)$ we denote the set of idempotent polymorphisms of the language $\G$.
Hence, the only unary polymorphism of $\G_c$ is the identity, which also means that there is only one unary weighted polymorphism of $\G_c$ -- the zero-valued polymorphism.

%rigid core

\begin{definition}
A valued constraint language $\G$ is a \emph{rigid core} if there is exactly one unary polymorphism of $\G$, which is the identity. 
\end{definition}

The notion of rigid core corresponds to the classical notion of rigid core considered in CSP~\cite{BKJ}. A valued constraint language $\G$ is a rigid core if the set of feasibility relations of all cost functions from $\G$ is a rigid core in the standard sense, which is also equivalent to all polymorphisms of $\G$ being idempotent.

We now prove a result which, together with Proposition~\ref{core}, implies that for each finite language $\G$, there is a computationally equivalent language that is a rigid core.

\begin{proposition}\label{rigid core}
Let $\G$ be a valued constraint language which is finite and a core.
The valued constraint language $\G_c$ is a rigid core. 
Moreover, $\G$ is tractable if and only if $\G_c$ is tractable, 
and $\G$ is NP-hard if and only if $\G_c$ is NP-hard.
\end{proposition}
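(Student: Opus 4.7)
The first assertion is direct: since $\Feas(N_i) = \{d_i\}$, any polymorphism $f$ of $\G_c$ must satisfy $f(d_i, \dots, d_i) = d_i$ for every $i$, so $\Pol(\G_c) = \IdPol(\G)$. In particular the only unary polymorphism of $\G_c$ is the identity, i.e.\ $\G_c$ is a rigid core. Because $\G \subseteq \G_c$, both implications ``$\G_c$ tractable $\Rightarrow$ $\G$ tractable'' and ``$\G$ NP-hard $\Rightarrow$ $\G_c$ NP-hard'' are immediate, so the real content lies in the reverse directions.

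The substance of the proof is to reduce $\VCSP(\G_c')$ to $\VCSP$ over a finite sublanguage of $\G$, for every finite $\G_c' \subseteq \G_c$. Apply Proposition~\ref{polymor} with $m = 1$ to obtain a cost function $\rr \in \wRelClo(\G)$ of arity $|D|$ and a rational $P$ such that $\rr(t) = P$ exactly when the $|D|$-tuple $t$ encodes an operation in $\Pol^{+}_1(\G)$, and $\rr(t) = \infty$ exactly when $t$ encodes a non-polymorphism. By Proposition~\ref{unary}, every $\phi \in \Pol^{+}_1(\G)$ is a bijection satisfying $\rr' \circ \phi = \rr'$ for every $\rr' \in \G$; since the identity belongs to $\Pol^{+}_1(\G)$, in particular $\rr(d_1, \dots, d_n) = P$. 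Given $\I_c \in \VCSP(\G_c')$, let $U_i$ be the set of variables $v$ appearing in a constraint $(v, N_i)$ of $\I_c$ and declare $\I_c$ infeasible if some variable lies in two different $U_i$. Otherwise construct $\I$ by introducing fresh variables $u_1, \dots, u_n$, identifying each $v \in U_i$ with $u_i$ in all remaining constraints, discarding the $N_i$-constraints, and appending the single scaled constraint $M \cdot \rr(u_1, \dots, u_n)$, with $M$ chosen strictly larger than the maximum possible total cost variation of the remaining (finitely many, finite-valued) constraints divided by the smallest positive gap between $P$ and any other finite value of $\rr$. Since $\rr \in \wRelClo(\G)$, this instance can be expanded into an instance over a finite $\G'' \subseteq \G$ and is therefore polynomially solvable when $\G$ is tractable.

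Correctness reduces to two matching inequalities. An optimal $\I_c$-assignment $s$ lifts to $\I$ by putting $s'(u_i) = d_i$, yielding $Cost_{\I}(s') = Cost_{\I_c}(s) + MP$. Conversely, the choice of $M$ forces every optimal $\I$-assignment $s'$ to attain $\rr$-cost exactly $MP$, hence the map $\phi \colon d_i \mapsto s'(u_i)$ lies in $\Pol^{+}_1(\G)$; then $s$ defined by $s(v) = d_i$ for $v \in U_i$ and $s(w) = \phi^{-1}(s'(w))$ otherwise is an $\I_c$-assignment with $Cost_{\I_c}(s) = Cost_{\I}(s') - MP$, by the invariance $\rr' \circ \phi = \rr'$ applied constraint by constraint. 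Thus $\Opt_{\G_c'}(\I_c) = \Opt(\I) - MP$, which completes the reduction and yields both missing directions. The main obstacle is the decoding step: one has to argue simultaneously that the scaling of $\rr$ really forces optimal assignments to pick a $\phi \in \Pol^{+}_1(\G)$, and that the ensuing ``untwist'' by $\phi^{-1}$ preserves the cost. Both steps are possible only thanks to Proposition~\ref{unary}, i.e.\ the fact that in a core the positive clone consists of bijections preserving every cost function of the language.
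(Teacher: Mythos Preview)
Your approach is essentially the paper's: apply Proposition~\ref{polymor} with $m=1$ to obtain a cost function that singles out $\Pol^+_1(\G)$, attach it (scaled) to fresh variables $u_1,\dots,u_n$ that replace the $N_i$-constraints, and decode an optimal solution via $\phi^{-1}$ using Proposition~\ref{unary}. The paper uses equality constraints and $m$ copies of $N$ where you identify variables and scale once, but these are interchangeable.

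There is, however, a small gap in your case analysis. Your sentence ``the choice of $M$ forces every optimal $\I$-assignment $s'$ to attain $\rr$-cost exactly $MP$'' and the concluding equation $\Opt_{\G_c'}(\I_c)=\Opt(\I)-MP$ are only correct when $\I_c$ is feasible. If $\I_c$ has no finite-cost assignment, there need not be any $\I$-assignment achieving $\rr$-cost $MP$: the optimal $s'$ may then be feasible with $\rr(s'(u_1,\dots,u_n))>P$ (finite), since nothing prevents some unary $\phi\in\Pol_1(\G)\setminus\Pol^+_1(\G)$ --- which need not be bijective --- from yielding a finite total cost. In that situation your equation gives a finite value while $\Opt_{\G_c'}(\I_c)=\infty$. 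The fix is exactly the paper's third case: after solving $\I$, check whether the returned optimum has $\rr$-value $P$ on $(u_1,\dots,u_n)$; if not, declare $\I_c$ infeasible. Also note that your parenthetical ``finite-valued'' is inaccurate (the constraints from $\G$ are general-valued); what you actually need is the bound over \emph{feasible} tuples, which is what the paper's constant $C$ captures.
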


\begin{proof}
Let $\G$ be a finite core valued constraint language over a domain $D = \{d_1, \dots, d_n \}$. 
It follows from Proposition~\ref{polymor} that there exist an $n$-ary cost function $N \in \wRelClo(\G)$, and positive rational numbers $P < Q$, such that the following conditions are satisfied:
\begin{itemize}
\item $N(x_1, \dots, x_n) = P $ if and only if the unary operation $g$ defined by $d_i \mapsto  x_i$ belongs to $\Pol^{+}(\G)$,
\item $N(x_1, \dots, x_n) > Q $ if and only if the unary operation $g$ defined by $d_i \mapsto  x_i$ belongs to $\Pol(\G) \setminus \Pol^{+}(\G)$,
\item otherwise $N(x_1, \dots, x_n) = \infty$.
\end{itemize}
Assume without loss of generality that $N \in \G$. We show a polynomial-time Turing reduction from $\VCSP(\G_c)$ to $\VCSP(\G)$.

Let $\I_c = (V_c,D, \C_c)$ be an instance of $\VCSP(\G_c)$. The set of variables $V$ in the new instance $\I$ is a disjoint union of $V_c$ and 
$\{ v_1, \dots, v_n \}$.
For every constraint of the form $((v),N_i)$ in $\C_c$ we:
\begin{itemize}
%\item replace all appearance of $v$ with $v_i$~(if the same variable needs to be substituted by two different $v_i$'s then no assignment for $\I_c$ has a finite cost),
\item add a constraint $((v,v_i),\rr_=)$, where 
$$
\rr_=(x,y) = \begin{cases} 0 &\mbox{if } x=y, \\
\infty & \mbox{otherwise} \end{cases}$$
(this cost function is expressible over every valued constraint language, so without loss of generality we can assume that $\rr_= \in \G$),
\item remove the constraint $((v),N_i)$ from $\C_c$.
\end{itemize}
We obtain a new set of constraints $\C_1$, where all cost functions are already from~$\G$.

%For every variable $v$ appearing in constraint $((v),N_i) \in \C_c$ replace all its appearance with $v_i$~(if the same variable needs to be substituted by two different $v_i$'s the instance has no solution).
%We obtain a new set of constraints $\C_1$, where all cost functions are already from $\G$.

Let $C$ be the sum of weights that all cost functions in all constraints in $\C_1$ assign to all tuples in their feasibility relations. The final set of constraints $\C$ additionally contains $m$ constraints of the form $((v_1, \dots, v_n), N)$, where $m$ is big enough to ensure that $m \cdot (Q-P) > C$.

There are three possibilities:
\begin{itemize}
\item If $\Opt_{\G}(\I) = \infty$ then no assignment for $\I_c$ has a finite cost. Suppose otherwise and let $s_c$ be an assignment for $\I_c$ with a finite cost. Then $s_c$ gives rise to an assignment $s$ for $\I$ with a finite cost. It coincides with $s_c$ on $V_c$ and for each $i \in \{1, \dots, n\}$, we set $s(v_i) = d_i$.
\item The optimal assignment $s$ for $\I$ satisfies $N(s(v_1, \dots, v_n))=P$. Then the tuple $s(v_1, \dots, v_n)$ determines a unary operation $g$, defined by $d_i \mapsto s(v_i)$. The operation $g$, by the definition of the cost function $N$, belongs to the positive clone $\Pol^{+}(\G)$. Hence, $g^{-1}$ also belongs to the positive clone. Since $\G$ is a core, the assignment $g^{-1}(s)$ is optimal for $\I$. Its restriction onto $V_c$ is an optimal assignment for $\I_c$.
\item The optimal assignment $s$ for $\I$ satisfies $N(s(v_1, \dots, v_n)) > Q$. While there are $m$ constraints of the form $((v_1, \dots, v_n), N)$, we have $$Cost_{\I}(s) \geq m \cdot Q > m \cdot P + C.$$ If there was any assignment $s_c$ for $\I_c$ with a finite cost, the corresponding assignment $s$ for $\I$ would satisfy $Cost_{\I}(s) < m \cdot P +C$, which gives a contradiction, and implies that $\Opt_{\G_c}(\I_c) = \infty$.
\end{itemize}
\end{proof}

If $\G$ is a core language then the positive clone of $\G_c$ contains precisely the idempotent operations from the positive clone of $\G$. To show this, we first prove the following lemma:

\begin{lemma}\label{idemp}
Let $\G$ be a core valued constraint language. For every weighted polymorphism $\oo \in \wPol(\G)$ there exists an idempotent weighted polymorphism $\oo' \in \wPol(\G)$ such that $\supp(\oo) \cap \IdPol(\G) \subseteq \supp(\oo')$. Moreover, if $\oo$ is cyclic then $\oo'$ can be chosen to be cyclic.
\end{lemma}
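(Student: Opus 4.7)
The plan is to construct $\omega'$ explicitly by replacing each $f$ in the support of $\omega$ with an idempotent ``twist'' $f^\star := g_f^{-1}\circ f$, where $g_f(x) := f(x,\dots,x)$ is the diagonal of $f$. The only nontrivial thing that needs to be justified is that each $g_f^{-1}$ is available as a polymorphism; once this is secured the construction is essentially formal.

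First, I would show that for every $f\in\supp(\omega)$ the diagonal $g_f$ belongs to the group $G := \Pol_1^{+}(\G)$ which, by Proposition~\ref{unary}, consists exactly of the bijective unary polymorphisms that satisfy $\rr\circ g = \rr$ for every $\rr\in\G$. For this, look at the unary superposition $\sigma := \omega[\pi_1^{(1)},\dots,\pi_1^{(1)}]$, whose value at $h$ is $\sum_{f:g_f=h}\omega(f)$. For $h\neq\id$ every $f$ contributing to this sum is a non-projection and hence has $\omega(f)\geq 0$; so $\sigma$ is a valid weighting, the superposition is proper, and $\sigma\in\wPol(\G)$. For non-idempotent $f\in\supp(\omega)$ no cancellation is possible, so $\sigma(g_f)\geq\omega(f)>0$, whence $g_f\in\supp(\sigma)\subseteq G$, and therefore $g_f^{-1}\in G$ as well.

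Second, set $f^\star := g_f^{-1}\circ f$ for each $f$ with $\omega(f)\neq 0$ (projections map to themselves). Then $f^\star$ is idempotent and lies in $\Pol_k(\G)$. Define
\[
\omega'(f') := \sum_{f:\,f^\star=f'}\omega(f).
\]
Negative weights only appear on projections because any $f$ with $f^\star$ a projection must itself depend on a single coordinate, hence either be a projection or a non-projection (with non-negative weight); the total sum is inherited from $\omega$ and is $0$. To verify $\omega'\in\wPol(\G)$, for each $\rr\in\G$ and tuples $\mathbf{x}_1,\dots,\mathbf{x}_k\in\Feas(\rr)$ we use $\rr\circ g_f^{-1}=\rr$ to obtain
\[
\sum_{f'}\omega'(f')\,\rr(f'(\mathbf{x}_1,\dots,\mathbf{x}_k)) = \sum_{f}\omega(f)\,\rr(f^\star(\mathbf{x}_1,\dots,\mathbf{x}_k)) = \sum_{f}\omega(f)\,\rr(f(\mathbf{x}_1,\dots,\mathbf{x}_k)) \leq 0.
\]
The support inclusion is then immediate: if $f$ is already idempotent then $g_f=\id$ and $f^\star=f$, so $\omega'(f)\geq\omega(f)>0$.

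Finally, for the cyclic case I observe that outer composition with a unary operation preserves cyclicity, so each $f^\star$ is cyclic whenever $f$ is; moreover for $|D|>1$ no operation of the form $g\circ\pi_i$ with $g\neq\id$ is cyclic, so the support of a cyclic $\omega$ contains no such operations, which prevents projections from entering $\supp(\omega')$ and keeps its support purely cyclic and non-empty. The main obstacle is the first step: extracting a bijective unary polymorphism from a diagonal of some $f\in\supp(\omega)$ using the properness of the superposition $\sigma$ together with the characterization of $\Pol_1^{+}$ for core languages; once this is in place everything else is routine bookkeeping.
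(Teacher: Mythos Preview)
Your proposal is correct and follows essentially the same idea as the paper's proof: replace each non-idempotent $f\in\supp(\omega)$ by the idempotent twist $g_f^{-1}\circ f$, where $g_f(x)=f(x,\dots,x)$, and verify that the resulting weighting is still a weighted polymorphism using $\rr\circ g_f^{-1}=\rr$ from Proposition~\ref{unary}. The paper carries this out iteratively (moving the weight of one non-idempotent $g$ at a time) and justifies $g_f\in\Pol^{+}(\G)$ by invoking Proposition~\ref{prop:posclo} (that $\Pol^{+}(\G)$ is a clone), whereas you perform the replacement in a single batch formula $\omega'(f')=\sum_{f:f^\star=f'}\omega(f)$ and justify $g_f\in\Pol_1^{+}(\G)$ by exhibiting the proper unary superposition $\sigma=\omega[\pi_1^{(1)},\dots,\pi_1^{(1)}]$ directly; but these are purely presentational differences, and your explicit verification of the weighted-polymorphism inequality and of the cyclic case matches what the paper leaves as ``straightforward to check''.
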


\begin{proof}
Consider a weighted polymorphism $\oo \in \wPol(\G)$. Take a non-idempotent operation $g \in supp(\oo)$ and let $h$ be a unary operation defined by $h(x) = g(x, \ldots, x)$. Since $\Pol^+(\G)$ is a clone of operations, $h \in \Pol^+(\G)$. Then by Proposition~\ref{unary} the operation $h$ is bijective and preserves all cost functions in $\G$. We modify the weighted polymorphism $\oo$ by adding $\oo(g)$ to the weight of the idempotent operation $h^{-1} \circ g$ and then assigning weight $0$ to the operation $g$. It is straightforward to check that the new weighting is a weighted polymorphism of $\G$. If $g$ is cyclic then so is $h^{-1} \circ g$. Hence if $\oo$ is cyclic then so is the new weighting. We repeat this construction for every non-idempotent operation in $\supp(\oo)$. Finally, we obtain an idempotent weighted polymorphism $\oo'$ which satisfies the conditions of the lemma.
\end{proof}

\begin{proposition}\label{idpolpol}
Let $\G$ be a valued constraint language which is a core. Then $\IdPol^+(\G) = \Pol^+(\G_c)$.
\end{proposition}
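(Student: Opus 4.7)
My plan is to prove the two inclusions separately, relying on the earlier observation that $\Pol(\G_c)=\IdPol(\G)$ together with Lemma~\ref{idemp}.

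For the inclusion $\Pol^+(\G_c)\subseteq\IdPol^+(\G)$, I would take $f\in\Pol^+(\G_c)$. Since $\Pol(\G_c)=\IdPol(\G)$, the operation $f$ is already idempotent, so it only remains to show that $f\in\Pol^+(\G)$. By hypothesis there is a weighted polymorphism $\oo\in\wPol(\G_c)$ with $\oo(f)>0$. Because $\oo$ is a weighting of the subclone $\Pol(\G_c)\subseteq\Pol(\G)$, I would extend it to a weighting $\tilde{\oo}$ of $\Pol(\G)$ by assigning weight $0$ to every polymorphism in $\Pol(\G)\setminus\Pol(\G_c)$. The extension is still a valid weighting, and it improves every cost function in $\G\subseteq\G_c$; hence $\tilde{\oo}\in\wPol(\G)$ with $\tilde{\oo}(f)>0$, giving $f\in\Pol^+(\G)$, and therefore $f\in\IdPol^+(\G)$.

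For the reverse inclusion, I would take $f\in\IdPol^+(\G)$ and produce a weighted polymorphism of $\G_c$ that assigns $f$ a positive weight. Projections belong to $\Pol^+(\G_c)$ by definition, so I may assume $f$ is not a projection; then there is $\oo\in\wPol(\G)$ with $\oo(f)>0$, and because $f$ is idempotent, Lemma~\ref{idemp} yields an idempotent weighted polymorphism $\oo'\in\wPol(\G)$ with $f\in\supp(\oo')$, i.e.\ $\oo'(f)>0$. Being idempotent, $\oo'$ is supported inside $\IdPol(\G)=\Pol(\G_c)$, so restricting it to this subclone yields a valid weighting of $\Pol(\G_c)$. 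The main point to verify is then that this restriction improves every new cost function $N_i$ added in $\G_c$: the feasibility relation of $N_i$ is the singleton $\{(d_i)\}$, and every idempotent $g$ satisfies $g(d_i,\dots,d_i)=d_i$, so the defining inequality for improvement reduces to $0\le 0$ and is automatic. The restricted weighting therefore lies in $\wPol(\G_c)$ and witnesses $f\in\Pol^+(\G_c)$.

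The only real subtlety is invoking Lemma~\ref{idemp} correctly: one must exploit the hypothesis that $f$ itself is idempotent to guarantee that it appears in $\supp(\oo)\cap\IdPol(\G)$ and hence is inherited by $\supp(\oo')$. Everything else is an unwinding of definitions combined with the triviality that idempotent operations preserve constant tuples.
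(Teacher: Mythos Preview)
Your proof is correct and follows essentially the same approach as the paper's: both directions extend or restrict a weighted polymorphism between $\Pol(\G)$ and $\Pol(\G_c)=\IdPol(\G)$, invoking Lemma~\ref{idemp} for the harder inclusion $\IdPol^+(\G)\subseteq\Pol^+(\G_c)$. Your explicit verification that the restricted weighting improves each $N_i$ is a detail the paper leaves implicit; conversely, the paper is slightly more careful to treat projections separately in the first inclusion (your phrase ``by hypothesis there is $\oo$ with $\oo(f)>0$'' tacitly assumes $f$ is not a projection), but this is harmless since projections lie in both sets by definition.
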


\begin{proof}
Clearly both sets contain all the projections.
Let us take $f \in \Pol^+(\G_c)$ that is not a projection and let $\oo$ be a weighted polymorphism of $\G_c$ such that $f \in \supp(\oo)$. There is a corresponding weighted polymorphism $\oo'$ of $\G$, which is equal to $\oo$ on the idempotent operations and equal $0$ otherwise. Then we have $f \in \supp(\oo')$. Since $f$ is idempotent it follows that $f \in \IdPol^+(\G)$.

To prove the reverse inclusion consider $f \in \IdPol^+(\G)$ that is not a projection. Let $\oo$ be a weighted polymorphism of $\G$ such that $f \in \supp(\oo)$. By Lemma~\ref{idemp} there exists an idempotent weighted polymorphism $\oo'$ of $\G$ such that $\supp(\oo) \cap \IdPol(\G) \subseteq \supp(\oo')$. The weighting $\oo''$, defined as a restriction of $\oo'$ to the idempotent operations, is a weighted polymorphism of $\G_c$ with $f \in \supp(\oo'')$. Hence $f \in \Pol^+(\G_c)$.
\end{proof}

\section{Weighted varieties}

One of the fundamental results of the algebraic approach to CSP~\cite{BKJ,Bulat,Larose20} says that the complexity of a crisp language $\G$ depends only on the variety generated by the algebra $(D, \Pol(\G))$. We generalize this fact to VCSP.

%Our future goal is to be able to characterise the complexity of valued constraint language

%weighting of an algebra

A $k$-ary \emph{weighting} $\omega$ of an algebra $\Aa$ is a function that assigns rational weights to all $k$-ary term operations of $\Aa$ in such a way, that the sum of all weights is $0$, and if $\omega(f) < 0$ then $f$ is a projection. A (\emph{proper}) \emph{superposition} $\oo[g_1, \dots, g_k]$ of a weighting $\oo$ with a list of $l$-ary term operations $g_1, \dots, g_k$ from $\Aa$ is defined the same way as for clones (see Definition~\ref{sup}). 
An algebra $\Aa$ together with a set of weightings closed under non-negative scaling, addition of weightings of equal arity and proper superposition with operations from $\Aa$ is called a \emph{weighted algebra}.

For a variety $\V$ over a signature $\Sigma$ and a term $t$ we denote by $[t]_{\V}$ the equivalence class of $t$ under the relation $\approx_{\V}$ such that $t \approx_{\V} s$ if and only if the variety $\V$ satisfies the identity $t \approx s$ (we skip the subscript, writing $[t]$ instead $[t]_{\V}$, whenever the variety is clear from the context). Observe that if the variety is locally finite then there are finitely many equivalence classes of terms of a fixed arity~\cite{BS}.

\begin{definition}\label{wvariety}
Let $\V$ be a locally finite variety over a signature $\Sigma$. A $k$-ary \emph{weighting} $\oo$ of $\V$ is a function that assigns rational weights to all equivalence classes of $k$-ary terms over $\Sigma$ in such a way, that the sum of all weights is $0$, and if $\oo([t]) < 0$ then $\V$ satisfies the identity $t(x_1, \dots , x_k) \approx x_i$ for some $i \in \{1, \dots, k\}$.
The variety $\V$ together with a nonempty set of weightings is called a \emph{weighted variety}.
\end{definition}

Take any finite algebra $\B \in \V$. A $k$-ary weighting $\oo$ of $\V$ \emph{induces} a weighting $\oo^\B$ of $\B$ in a natural way: $$\oo^{\B}(f) = \sum_{\{[t] \ | \ t^\B = f \}} \oo([t]).$$ If $\oo([t]) < 0$ then the term operation $t^\B$ is a projection, and hence the weighting $\oo^{\B}$ is proper. For a weighted variety $\V$, by $\B \in \V$ we mean the algebra $\B$ together with the set of weightings induced by $\V$.

For every weighting $\oo$ of a finite weighted algebra $\Aa$ there is a corresponding weighting $\oo$ of the variety $\V(\Aa)$ defined by $\oo([t]) = \oo(t^{\Aa})$. It follows from Birkhoff's theorem (see Theorem~\ref{Bir}) that it is well defined. A weighted variety $\V(\Aa)$ \emph{generated} by a weighted algebra $\Aa$ is the variety $\V(\Aa)$ together with the set of weightings corresponding to the weightings of $\Aa$. The correspondence is one-to-one so for simplicity we often identify the weightings of $\V(\Aa)$ with the weightings of $\Aa$.

We prove that every finite algebra $\B \in\V(\Aa)$ together with the set of weightings induced by $\V(\Aa)$ is a weighted algebra. It is straightforward to check its closure under non-negative scaling and addition of weightings of equal arity. We only need to show that $\B$ is closed under proper superpositions.

\begin{proposition}\label{wc}
For a finite weighted algebra $\Aa$ over a fixed signature $\Sigma$ and a finite algebra $\B \in \V(\Aa)$ let $\oo^{\B}$ be a $k$-ary weighting of $\B$ induced by the weighted variety $\V(\Aa)$. If for some list $f_1^{\B}, \ldots, f_k^{\B}$ of $l$-ary term operations from $\B$ the weighting $\oo^{\B}[f_1^{\B}, \ldots, f_k^{\B}]$ is proper then it is induced by some weighting of $\V(\Aa)$.
\end{proposition}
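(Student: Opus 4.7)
My plan is to produce an explicit candidate weighting of $\Aa$ and show it does the job. Choose terms $f_1, \ldots, f_k$ over $\Sigma$ whose interpretations on $\B$ equal the prescribed $f_i^\B$; such $f_i$ exist because every term operation on $\B$ arises from some $\Sigma$-term. Let $\oo$ be the weighting of $\Aa$ corresponding (via the one-to-one identification of weightings of $\Aa$ with weightings of $\V(\Aa)$) to the weighting inducing $\oo^\B$ on $\B$, and form the natural candidate by superposing at the level of $\Aa$:
\[ \oo' := \oo[f_1^\Aa, \ldots, f_k^\Aa]. \]
Since the set of weightings of $\Aa$ is closed under proper superposition with operations from $\Aa$, once I verify that $\oo'$ (possibly after a modification described below) is proper, it automatically lies in that set, and the weighting of $\V(\Aa)$ corresponding to it witnesses the claim.

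The first step is a direct computation showing that the weighting of $\B$ induced by $\oo'$ coincides with $\oo^\B[f_1^\B, \ldots, f_k^\B]$. Expanding both sides using the definitions of induction and superposition, and invoking Birkhoff's theorem (so that the assignment $t^\Aa \mapsto t^\B$ is well-defined on term operations and $(t[f_1,\ldots,f_k])^\B$ depends only on $t^\Aa$), both expressions collapse to
\[ \sum_{\{[t] \ : \ (t[f_1,\ldots,f_k])^\B = g^\B\}} \oo([t]), \]
yielding the desired equality.

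The hard part will be showing that $\oo'$ is proper on $\Aa$, i.e.\ controlling its negative values; its sum is automatically zero. Suppose $\oo'(g^\Aa) < 0$ with $g^\Aa$ not a projection of $\Aa$. Unwinding the definition of superposition, the negative contribution can only come from projections $\pi_i^\Aa$ satisfying $\oo(\pi_i^\Aa) < 0$ and $f_i^\Aa = g^\Aa$, which forces $g^\B = f_i^\B$. Two cases arise. If $g^\B$ is a projection $\pi_i^\B$ of $\B$, then the fiber of $t^\Aa \mapsto t^\B$ over $g^\B$ already contains the $\Aa$-projection $\pi_i^\Aa$, so the negative weight on $g^\Aa$ can be redirected to $\pi_i^\Aa$ without altering the weighting induced on $\B$. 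If $g^\B$ is not a projection of $\B$, then the assumed properness of $\oo^\B[f_1^\B, \ldots, f_k^\B]$ on $\B$ forces the total $\oo'$-weight on the fiber over $g^\B$ to be non-negative, and the negative weight on $g^\Aa$ can be redistributed among other non-projections of $\Aa$ in the same fiber. The delicate point, which is really the main obstacle, is to guarantee that the modified weighting actually belongs to $\Aa$'s prescribed set rather than merely satisfies the weighting axioms. I expect to handle this via a Farkas-type argument in the spirit of the proof of Proposition~\ref{polymor}: the existence of a proper weighting of $\Aa$ whose $\B$-induction equals $\oo^\B[f_1^\B, \ldots, f_k^\B]$ is phrased as a linear feasibility problem over the relevant fibers, and the assumed properness on $\B$ rules out any certificate of infeasibility.
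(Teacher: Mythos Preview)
Your diagnosis of the obstacle is accurate, but the proposed remedy has a real gap. The ``redistribution'' of negative weight within a $\B$-fiber is not an operation available in the weighted clone of $\Aa$: you may only scale, add, and take proper superpositions. Producing \emph{some} valid weighting of $\Aa$ that induces $\oo^\B[f_1^\B,\ldots,f_k^\B]$ is not enough; it must lie in the given set of weightings of $\Aa$. Your closing Farkas sketch does not explain what the variables range over so as to force the output to be a weighting of $\Aa$, and as stated it would at best yield an abstract weighting, not one certifiably in the weighted clone.

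The paper's proof supplies exactly the missing mechanism. Rather than fixing one lift $f_i^\Aa$ per $i$, it lets each $f_i$ range over the entire fiber $F_i=\{[t]:t^\B=f_i^\B\}$ (choosing the $\Aa$-projection when $f_i^\B$ is a projection), and considers the whole family $W=\{\oo[t_1^\Aa,\ldots,t_k^\Aa]:[t_i]\in F_i\}$. Every member of $W$ induces the target weighting on $\B$. If some member is already proper, we are done. Otherwise Gordan's theorem (Theorem~\ref{Gordan}) is applied to the system $\sum_{\nu\in W}\nu(t^\Aa)z_\nu - z_{[t]}=0$, indexed by the troublesome classes $[t]\in F$: a nonzero non-negative solution yields a non-negative combination $\upsilon=\sum_{\nu\in W}z_\nu\nu$ which is proper and, crucially via Lemma~\ref{superp}, equals a superposition of a weighting of $\Aa$ and hence belongs to the weighted clone; the alternative branch of Gordan produces coefficients $y_{[t]}<0$ from which one extracts a particular $\nu\in W$ witnessing that $\oo^\B[f_1^\B,\ldots,f_k^\B]$ assigns a negative value to some non-projection $f_i^\B$, contradicting properness. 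The point you were missing is that ``moving weight within a fiber'' is realised not by an ad hoc transfer but by \emph{changing the choice of representative $t_i^\Aa$} and then averaging the resulting (possibly improper) superpositions; Lemma~\ref{superp} is what keeps such averages inside the weighted clone.
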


In the proof we use Gordan's Theorem (which is a straightforward consequence of Lemma~\ref{Farkas}).

\begin{theorem}[Gordan~\cite{gordan}]\label{Gordan}
Let $S$ and $T$ be finite sets of indices. For all $i \in S$, and all $j \in T$, let $a_{i,j}$ be rational numbers. Exactly one of the following holds:
\begin{itemize}
\item Either there exists a set of non-negative rational numbers $\{z_i \ | \ i \in S \}$ such that
$$\text{for some } i \in S, \ \ z_i >0,$$
$$\text{for each } j \in T, \ \ \sum_{i \in S} a_{i,j} z_{i} =0.$$
\item Or else there exists a set of rational numbers $\{y_{j} \ | \ j \in T \}$ such that
$$\text{for each } i \in S, \ \ \sum_{j \in T} y_j a_{i,j} > 0.$$ 
\end{itemize}
\end{theorem}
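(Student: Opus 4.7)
My plan is to show that the target weighting $\oo_{\mathrm{tar}}:=\oo^{\B}[f_1^{\B},\dots,f_k^{\B}]$ lies in the image $\Phi(W)$ of the set $W$ of weightings of $\Aa$ under the induction map $\tau\mapsto\tau^{\B}$; since weightings of $\V(\Aa)$ are in bijection with weightings of $\Aa$, producing such a $\tau$ is exactly what the proposition demands.

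First I would fix terms $s_1,\dots,s_k\in\Sigma$ of arity $l$ with $s_i^{\B}=f_i^{\B}$ and form the natural candidate $\oo'=\oo[s_1^{\Aa},\dots,s_k^{\Aa}]$ by superposition inside $\Aa$. A short calculation using the fact that $\B\in\V(\Aa)$ satisfies every identity of $\V(\Aa)$ verifies that $(\oo')^{\B}$ coincides with $\oo_{\mathrm{tar}}$. If $\oo'$ happens to be a proper weighting of $\Aa$, then it already belongs to $W$ by closure of $W$ under proper superposition, and we are done at once. The interesting case is when $\oo'$ fails to be proper, i.e.\ when the superposition assigns a negative value to some non-projection term operation of $\Aa$.

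In that case I would argue by contradiction. Because $\Aa$ and $\B$ are finite, $W$ is a finitely generated rational convex cone in a finite-dimensional space, and so is $\Phi(W)$. Assuming $\oo_{\mathrm{tar}}\notin\Phi(W)$, I would set up a homogeneous linear system whose non-negative variables index a finite generating set of $W$ together with one slack coordinate for the target, and whose equations demand that the induced combination on $\B$ be zero. Applying Gordan's Theorem (Theorem~\ref{Gordan}) — after ruling out, via the presence of the scalable slack coordinate, the degenerate case in which the slack is forced to vanish — either directly produces the desired conic representation of $\oo_{\mathrm{tar}}$, or else yields a rational functional $y$ on $l$-ary term operations of $\B$ with $\langle y,\tau^{\B}\rangle\ge 0$ for every $\tau\in W$ and $\langle y,\oo_{\mathrm{tar}}\rangle<0$.

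The main obstacle will be converting $y$ into a contradiction with the assumed properness of $\oo_{\mathrm{tar}}$. The key identity, obtained by unfolding the superposition on $\B$, is
\[\langle y,\oo^{\B}[f_1^{\B},\dots,f_k^{\B}]\rangle=\sum_{h}\oo^{\B}(h)\cdot y\bigl(h[f_1^{\B},\dots,f_k^{\B}]\bigr),\]
which transfers the superposition from $\oo^{\B}$ onto $y$ and recasts the separating inequality in terms of the pullback $y^{\star}(h):=y(h[f_1^{\B},\dots,f_k^{\B}])$. Combining this with $\oo^{\B}\in\Phi(W)$ (so $\langle y,\oo^{\B}\rangle\ge 0$) and with the properness of $\oo_{\mathrm{tar}}$ — which controls the sign of summands indexed by non-projection $l$-ary term operations of $\B$ — should force $\langle y,\oo_{\mathrm{tar}}\rangle\ge 0$, contradicting the strict separating inequality. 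The delicate bookkeeping will consist in tracking projection versus non-projection status at three different levels (in the free term algebra of $\V(\Aa)$, in $\Aa$, and in $\B$), and exploiting that the fibres of the map $t\mapsto t^{\B}$ collapse several $\V(\Aa)$-classes together, which is precisely what makes properness on $\B$ a strictly weaker condition than properness of the candidate $\oo'$ on $\Aa$.
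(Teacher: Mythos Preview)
Your proposal does not address the stated theorem at all. The statement you were asked to prove is Gordan's Theorem, a classical alternative theorem in linear programming; the paper does not prove it but merely cites it, remarking that it is a straightforward consequence of Lemma~\ref{Farkas} (Farkas' Lemma). What you have written is instead an attempted proof of Proposition~\ref{wc}, the statement that \emph{uses} Gordan's Theorem. So there is a basic mismatch: you are proving the wrong result.

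If your intent was actually Proposition~\ref{wc}, then your outline is in the same spirit as the paper's argument but diverges in a crucial way and becomes vague precisely where the work lies. The paper does not try to separate $\oo_{\mathrm{tar}}$ from the full cone $\Phi(W)$ of all induced weightings; instead it restricts attention to the finite set $W=\{\oo[t_1^{\Aa},\dots,t_k^{\Aa}] : [t_i]\in F_i\}$ of superpositions obtained by varying the \emph{term representatives} of each $f_i^{\B}$, and applies Gordan to a system whose unknowns are the coefficients $z_\nu$ for $\nu\in W$ together with slack variables $z_{[t]}$ indexed by the problematic non-projection classes. The certificate $\{y_{[t]}\}$ is then played off against a \emph{single} carefully chosen superposition $\upsilon=\oo[t_1^{\Aa},\dots,t_k^{\Aa}]$ (maximizing $y_{[t]}$ within each $F_i$), yielding $\sum_{[t]\in F_i}\upsilon(t^{\Aa})<0$ for some $i$, which directly contradicts properness of $\oo_{\mathrm{tar}}$ at $f_i^{\B}$. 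Your sketch instead invokes $\langle y,\oo^{\B}\rangle\ge 0$, but $\oo^{\B}$ is $k$-ary while $y$ lives on $l$-ary operations, so this pairing is ill-typed; and the claim that properness of $\oo_{\mathrm{tar}}$ together with the pullback identity forces $\langle y,\oo_{\mathrm{tar}}\rangle\ge 0$ is asserted rather than argued. Without the paper's device of ranging over all term representatives and then selecting the extremal one, it is not clear how your separation argument closes.
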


\begin{proof} (of Proposition~\ref{wc})
Let $\oo^{\B}$ and $f_1^{\B}, \ldots, f_k^{\B}$ be as in the statement of the proposition. Assume that the weighting $\oo^{\B}[f_1^{\B}, \ldots, f_k^{\B}]$ is proper. 

Notice that the following conditions are equivalent:
\begin{itemize}
\item the operation $f_i^\B$ is the projection $\pi_j$ on the $j$-th coordinate,
\item there exists a term $t$ such that $t^{\B} = f_i^{\B}$ and $t^\Aa$ is the projection $\pi_j$ on the $j$-th coordinate.
\end{itemize}  
For each $i \in \{1, \dots, k\}$ consider the set $F_i$ of equivalence classes of terms over $\Sigma$ defined by
$$
F_i = \begin{cases} \{[t] \ | \ t^\Aa = \pi_j \} &\mbox{if } f_i^\B = \pi_j, \\
\{[t] \ | \ t^{\B} = f_i^{\B}\} & \mbox{otherwise} \end{cases}$$
(observe that if $f_i^\B$ is a projection then $F_i$ contains a single equivalence class).
Take $\oo$ to be some $k$-ary weighting of $\Aa$ that induces $\oo^{\B}$, and let $$W = \{ \oo[t_1^\Aa, \ldots, t_k^\Aa] \ | \ [t_i] \in F_i \}.$$

Suppose that for some choice of equivalence classes $[t_i] \in F_i$ the superposition $\oo[t_1^\Aa, \ldots, t_k^\Aa]$ is proper.
The weighting $\oo[t_1^\Aa, \ldots, t_k^\Aa]$ of $\Aa$ induces a weighting of $\B$ which is equal to $\oo^{\B}[f_1^{\B}, \ldots, f_k^{\B}]$, thus in this case the proof is concluded.

This shows that a superposition of $\oo^{\B}$ with any list of projections is always induced by some weighting of $\Aa$.

Now let us deal with the case when none of the weightings in $W$ is proper. Without loss of generality we can assume that the operations $f_1^{\B}, \ldots, f_k^{\B}$ are pairwise distinct (otherwise we replace $\oo^{\B}$ by its superposition with a suitable list of projections) and hence the sets $F_i$ are disjoint. Let $F= \bigcup F_i$. We remove from $F$ the element of $F_i$ if $f_i^\B$ is a projection. 
The removed elements cannot cause a problem and therefore we assume that for every $[t] \in F$ the operation $t^\Aa$ is not a projection. We apply Gordan's Theorem to the following system of linear equations:
$$\sum_{\nu \in W}  \nu(t^\Aa) \cdot z_\nu - z_{[t]} =0, \mbox{ for each } [t] \in F.$$

If this system has a non-zero solution in non-negative rational numbers then $z_\nu > 0$ for some $\nu \in W$. Observe that the weighting $\upsilon = \sum_{\nu \in W}  \nu \cdot z_\nu$ is proper. Indeed, by the definition of a superposition the only non-projections that could be assigned negative weights by $\upsilon$ are the operations $t^{\Aa}$ where $[t] \in F$. But each such operation $t^{\Aa}$ is assigned a non-negative weight $z_{[t]}$. Hence, by Lemma~\ref{superp} the weighting $\upsilon$ is equal to a proper superposition of some weighting of $\Aa$ with a list of $l$-ary term operations of $\Aa$. Finally, let $p = \sum_{\nu \in W} z_\nu > 0$. The weighting ${1\over p}  \upsilon$ of $\Aa$ induces a weighting of $\B$ which is equal to $\oo^{\B}[f_1^{\B}, \ldots, f_k^{\B}]$.

Otherwise, there exists a set $\{ y_{[t]} \ | \ [t] \in F \}$ of rational numbers, such that $$\mbox{for each } \nu \in W, \ \ \sum_{[t] \in F} y_{[t]} \cdot \nu(t^\Aa) > 0,$$ and $y_{[t]} <0$ for each $[t] \in F$. For every $i \in \{1, \dots, k\}$ let us choose $[t_i] \in F_i$ satisfying $y_{[t_i]} = \max\{ y_{[t]} \  | \ [t] \in F_i \}$ (if $f_i^\B$ is a projection then we choose $[t_i] \in F_i$ to be the only element of $F_i$ and put $y_{[t_i]}=0$) and consider the weighting $\upsilon = \oo[t_1^\Aa, \ldots, t_k^\Aa]$. Notice that $\upsilon$ may assign negative weights only to operations $t_i^\Aa$. Since $$\sum_{[t] \in F_1} y_{[t]} \cdot \upsilon(t^\Aa) + \dots + \sum_{[t] \in F_k} y_{[t]} \cdot \upsilon(t^\Aa) >0,$$ then $\sum_{[t] \in F_i} y_{[t]} \cdot \upsilon(t^\Aa) > 0$ for some $i \in \{1, \dots, k\}$. Hence $$0 < \sum_{[t] \in F_i} y_{[t]} \cdot \upsilon(t^\Aa) \leq \sum_{[t] \in F_i} y_{[t_i]} \cdot \upsilon(t^\Aa) =y_{[t_i]} \cdot \sum_{[t] \in F_i} \upsilon(t^\Aa).$$ It follows that $\sum_{[t] \in F_i} \upsilon(t^\Aa) < 0$, which is a contradiction, since $\sum_{[t] \in F_i} \upsilon(t^\Aa)$ is the weight that the proper weighting $\oo^{\B}[f_1^{\B}, \ldots, f_k^{\B}]$ assigns to the operation $f_i^\B$ (which is not a projection).
\end{proof}

For a finite weighted algebra $\Aa$ let $\Imp(\Aa)$ denote the set of those cost functions on $A$ that are improved by all weightings of $\Aa$. We prove that for each finite weighted algebra $\B \in \V(\Aa)$ the valued constraint language $\Imp(\B)$ is not harder then $\Imp(\Aa)$. The proof consists of a sequence of lemmas.

%product

\begin{lemma}
Let $\Aa$ be a finite weighted algebra. For any $\B \in  P_{fin}(\Aa)$, there is a polynomial-time reduction of $\VCSP(\Imp(\B))$ to $\VCSP(\Imp(\Aa))$.
\end{lemma}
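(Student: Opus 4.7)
The plan is to use that $\B\in\Pp_{fin}(\Aa)$ forces $\B=\Aa^{n}$ for some $n\geq 1$, so $B=A^{n}$, and then implement the natural coordinate-wise simulation. I would represent each $B$-variable $v$ by $n$ fresh $A$-variables $v^{1},\dots,v^{n}$, and rewrite each $r$-ary cost function $\rr$ on $B$ as the $rn$-ary cost function $\rr'$ on $A$ given by
$$\rr'(a_{1}^{1},\dots,a_{1}^{n},\dots,a_{r}^{1},\dots,a_{r}^{n})=\rr((a_{1}^{1},\dots,a_{1}^{n}),\dots,(a_{r}^{1},\dots,a_{r}^{n})).$$
Given an instance $\I=(V,B,\mathcal{C})$ of $\VCSP(\Imp(\B))$, I would build $\I'$ with variables $V\times\{1,\dots,n\}$, replacing each constraint $((v_{1},\dots,v_{r}),\rr)$ by $((v_{1}^{1},\dots,v_{r}^{n}),\rr')$. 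Assignments for the two instances correspond bijectively via $s(v)=(s'(v^{1}),\dots,s'(v^{n}))$, and this bijection preserves cost, so optima coincide. The construction is clearly polynomial: the table of $\rr'$ has the same number of entries as the table of $\rr$ (since $|A|^{rn}=|B|^{r}$), and the number of variables grows only by the fixed constant factor $n$.

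The substantive step is to check that $\rr'\in\Imp(\Aa)$ whenever $\rr\in\Imp(\B)$; the rest is bookkeeping. To do this I would fix a $k$-ary weighting $\oo$ of $\Aa$ and translate the improvement inequality across the coordinate-wise encoding. The defining feature of a power algebra is that every term operation acts coordinate-wise, so for every $k$-ary term $t$ and all tuples $\mathbf{x}_{1},\dots,\mathbf{x}_{k}\in A^{rn}$, if $\mathbf{y}_{j}\in B^{r}$ denotes the regrouping of $\mathbf{x}_{j}$, then
$$\rr'(t^{\Aa}(\mathbf{x}_{1},\dots,\mathbf{x}_{k}))=\rr(t^{\B}(\mathbf{y}_{1},\dots,\mathbf{y}_{k})).$$
The weighting $\oo$ induces, through $\V(\Aa)$, the weighting $\oo^{\B}$ of $\B$ which groups $k$-ary term operations $g$ of $\Aa$ by the common value of $g^{\B}$; this grouping is legal because $g^{\Aa}=h^{\Aa}$ (equivalently, $g\approx_{\V(\Aa)} h$) forces $g^{\B}=h^{\B}$, since $\B\in\V(\Aa)$. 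Collecting the sum $\sum_{g}\oo(g)\cdot\rr'(g(\mathbf{x}_{1},\dots,\mathbf{x}_{k}))$ by the value of $g^{\B}$ rewrites it as $\sum_{f}\oo^{\B}(f)\cdot\rr(f(\mathbf{y}_{1},\dots,\mathbf{y}_{k}))$, which is $\leq 0$ because $\rr\in\Imp(\B)$ and $\oo^{\B}$ is one of the weightings of the weighted algebra $\B\in\V(\Aa)$.

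The main obstacle will be the last bookkeeping step: keeping the identifications between term operations of $\Aa$, equivalence classes of terms in $\V(\Aa)$, and term operations of $\B$ straight, and handling the $\infty$ values correctly. The latter is easy, since $\mathbf{y}_{j}\in\Feas(\rr)$ if and only if $\mathbf{x}_{j}\in\Feas(\rr')$ by the very definition of $\rr'$, so under the $\infty$-conventions of the paper both sides of the computation above are interpreted the same way. No deeper argument seems needed beyond the coordinate-wise action of term operations on a power.
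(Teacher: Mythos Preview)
Your proposal is correct and follows essentially the same approach as the paper: both encode a variable over $A^{n}$ by $n$ variables over $A$, rewrite each $r$-ary cost function on $B$ as the corresponding $rn$-ary cost function on $A$, and verify $\rr'\in\Imp(\Aa)$ by using that term operations of $\B$ act coordinate-wise so the improvement inequality for $\oo^{\B}$ on $\rr$ unfolds into the improvement inequality for $\oo$ on $\rr'$. Your write-up is in fact somewhat more explicit than the paper's about the regrouping identity and the handling of $\infty$, but there is no substantive difference in method.
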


\begin{proof}
Let $A^n$ be the universe of $\B$ and let $\G$ be a finite subset of $\Imp(\B)$. Take $\rr \in \G$ to be an $r$-ary cost function. There is a natural way of defining a corresponding cost function of arity $n \cdot r$ on the set $A$. We denote this cost function by $\rr'$. 

Let $\oo$ be a $k$-ary weighting of the weighted algebra $\Aa$. The corresponding $k$-ary weighting $\oo^{\B}$ of $\B$ is a weighted polymorphism of $\rr$. Then it is not hard to show that $\oo$ is a weighted polymorphism of $\rr'$, as the basic operations of $\B$ are the operations of $\Aa$ computed coordinate-wise. Hence, each weighting of $\Aa$ is a weighted polymorphism of $\rr'$, which means that $\rr' \in \Imp(\Aa)$.

For each $\rr \in \G$ we have defined a corresponding $\rr' \in \Imp(\Aa)$. Let $\G' \subseteq \Imp(\Aa)$ be the (finite) set of all those cost functions. 

Now take an arbitrary instance $\I = (V, A^n, \C)$ of $\VCSP(\G)$. Replace the domain $A^n$ by $A$, and each variable $v_i \in V$ by a set of $n$ variables $\{v_i^1, \dots, v_i^n  \}$, obtaining a new set of variables $V'$. In each constraint $(\sigma, \rr) \in \C$, where $\rr$ is an $r$-ary cost function, replace the $r$-tuple $\sigma$ of variables from $V$ by the corresponding $n r$-tuple of variables from $V'$, and the cost function $\rr$ by the corresponding cost function $\rr'$ from $\G'$. The new instance $\I'=(V',A,\C')$ is an instance of $\VCSP(\G')$. It is easy to see that there is a one-to-one correspondence between the optimal assignments for $\I$ and the optimal assignments for $\I'$.
\end{proof}

%subalgebra

\begin{lemma}
Let $\Aa$ be a finite weighted algebra. For any $\B \in  S(\Aa)$, there is a polynomial-time reduction of $\VCSP(\Imp(\B))$ to $\VCSP(\Imp(\Aa))$.
\end{lemma}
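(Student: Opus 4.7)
The plan is to extend every cost function of $\Imp(\B)$ to $A$ by $\infty$ outside $B^r$, show that the extensions lie in $\Imp(\Aa)$, and lift instances in the obvious way. Concretely, given a finite $\G \subseteq \Imp(\B)$ and an $r$-ary $\rr \in \G$, I would define $\rr' \colon A^r \to \overline{\Q}$ by $\rr'(\mathbf{x}) = \rr(\mathbf{x})$ if $\mathbf{x} \in B^r$ and $\rr'(\mathbf{x}) = \infty$ otherwise, and let $\G' = \{\rr' : \rr \in \G\}$. For an instance $\I = (V, B, \C) \in \VCSP(\G)$ the reduction produces the instance $\I' = (V, A, \C')$ obtained by replacing each constraint $(\sigma, \rr)$ with $(\sigma, \rr')$ and leaving the variables unchanged.

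The key step is showing that $\rr' \in \Imp(\Aa)$. Because $B$ is a subuniverse of $\Aa$, the feasibility relation $\Feas(\rr') \subseteq B^r$ is preserved by every term operation of $\Aa$, so all such operations are polymorphisms of $\rr'$. I would then fix a weighting $\oo$ of $\Aa$ and tuples $\mathbf{x_1}, \ldots, \mathbf{x_k} \in \Feas(\rr')$ and regroup the sum $\sum_f \oo(f) \cdot \rr'(f(\mathbf{x_1}, \ldots, \mathbf{x_k}))$ according to the restriction $f|_B$ of each term operation $f$ of $\Aa$ to $B$. Since $f(\mathbf{x_1}, \ldots, \mathbf{x_k})$ lies in $B^r$, the value $\rr'(f(\mathbf{x_1}, \ldots, \mathbf{x_k}))$ equals $\rr(f|_B(\mathbf{x_1}, \ldots, \mathbf{x_k}))$, and the regrouped sum becomes $\sum_g \oo^{\B}(g) \cdot \rr(g(\mathbf{x_1}, \ldots, \mathbf{x_k}))$ with $g$ ranging over term operations of $\B$ and $\oo^{\B}(g) = \sum_{f:\, f|_B = g} \oo(f)$. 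This $\oo^{\B}$ is precisely the weighting of $\B$ induced by $\oo$ through the weighted variety $\V(\Aa)$, so it belongs to the set of weightings of $\B \in \V(\Aa)$; since $\rr \in \Imp(\B)$, the sum is at most $0$.

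The final step, the correspondence of optima, is then routine: any finite-cost assignment for $\I'$ must send every variable appearing in a constraint into $B$, so it restricts to a finite-cost assignment for $\I$ of the same cost, and conversely every assignment for $\I$ extends to one for $\I'$ with the same cost by assigning arbitrary values from $B$ to isolated variables. Hence $\Opt_{\G}(\I) = \Opt_{\G'}(\I')$ and the whole construction is clearly polynomial time. The point I expect to take most care with is the middle step, specifically checking that the regrouped sum is literally the induced weighting from the construction preceding Proposition~\ref{wc}; this relies on the bijection (via Birkhoff) between equivalence classes of terms in $\V(\Aa)$ and term operations of $\Aa$, together with the fact that for any term $t$ the operation $t^{\B}$ is the restriction $t^{\Aa}|_B$ because the basic operations of $\B$ are restrictions of those of $\Aa$.
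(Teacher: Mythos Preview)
Your proposal is correct and is precisely the argument the paper has in mind; the paper's own proof is the single line ``Notice that $\Imp(\B) \subseteq \Imp(\Aa)$, so there is nothing to be proved,'' and you have simply unpacked what this inclusion means (the $\infty$-extension) and verified it carefully via the induced weighting $\oo^{\B}$.
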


Notice that $\Imp(\B) \subseteq \Imp(\Aa)$, so there is nothing to be proved.

%quotient

\begin{lemma}
Let $\Aa$ be a finite weighted algebra. For any $\B \in  H(\Aa)$, there is a polynomial-time reduction of $\VCSP(\Imp(\B))$ to $\VCSP(\Imp(\Aa))$.
\end{lemma}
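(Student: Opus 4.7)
Fix a surjective homomorphism $h\colon\Aa\w\B$ witnessing $\B\in H(\Aa)$. The plan is to pull cost functions back along $h$: for each $r$-ary $\rr\in\Imp(\B)$ I would set $\tilde\rr(a_1,\dots,a_r):=\rr(h(a_1),\dots,h(a_r))$. Given an instance $\I=(V,B,\C)$ over a finite $\G\subseteq\Imp(\B)$, the reduced instance is $\I'=(V,A,\C')$ with $\C'=\{(\sigma,\tilde\rr)\mid(\sigma,\rr)\in\C\}$. Two facts will drive the reduction: (a) each $\tilde\rr$ lies in $\Imp(\Aa)$, and (b) for every $s'\colon V\w A$, $Cost_{\I'}(s')=Cost_\I(h\circ s')$.

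The heart of the proof is (a). For any $k$-ary weighting $\oo$ of $\Aa$, I would induce a $k$-ary weighting $\oo^\B$ of $\B$ by declaring $\oo^\B(g):=\sum_{\{f\mid f^\B=g\}}\oo(f)$, where the sum ranges over $k$-ary term operations $f$ of $\Aa$; here $f^\B$ is unambiguous because $\B\in\V(\Aa)$ satisfies every identity of $\Aa$ (Theorem~\ref{Bir}), so the value $t^\B$ depends only on $f=t^\Aa$. Then $\oo^\B$ is a valid weighting: the weights sum to zero by reindexing, and if $\oo^\B(g)<0$ then some summand $\oo(f)$ is negative, forcing $f=\pi_i^{(k)}$ on $\Aa$ and hence $g=\pi_i^{(k)}$ on $\B$. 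By construction $\oo^\B$ is the weighting of $\B$ induced by $\oo$ viewed as a weighting of the weighted variety $\V(\Aa)$, so $\rr\in\Imp(\B)$ is improved by $\oo^\B$. Using the homomorphism identity $h(f(\mathbf x_1,\dots,\mathbf x_k))=f^\B(h(\mathbf x_1),\dots,h(\mathbf x_k))$ applied coordinate-wise to $r$-tuples, for any $\mathbf x_1,\dots,\mathbf x_k\in\Feas(\tilde\rr)$ (so that $h(\mathbf x_i)\in\Feas(\rr)$) I would then compute
$$\sum_{f}\oo(f)\cdot\tilde\rr(f(\mathbf x_1,\dots,\mathbf x_k))\ =\ \sum_{g}\oo^\B(g)\cdot\rr(g(h(\mathbf x_1),\dots,h(\mathbf x_k)))\ \leq\ 0,$$
establishing (a).

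Fact (b) is immediate from the definition of $\tilde\rr$ and yields $\Opt_\G(\I)=\Opt_{\G'}(\I')$: any $s\colon V\w B$ lifts (by choosing preimages under the surjection $h$) to some $s'$ with $h\circ s'=s$ and equal cost, while any $s'$ projects to $s=h\circ s'$ with equal cost. Hence an optimal $s'$ for $\I'$ returned by an oracle for $\VCSP(\Imp(\Aa))$ produces an optimal $s=h\circ s'$ for $\I$, and the whole transformation runs in polynomial time.

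The main obstacle will be verifying that $\oo^\B$ is genuinely a weighting of $\B$ as an element of the weighted variety $\V(\Aa)$, not merely a formal linear expression — this amounts to careful bookkeeping with equivalence classes of terms modulo the identities of $\Aa$ (to make $f^\B$ unambiguous via Birkhoff's theorem) and the projection clause of the weighting definition. Once this is in place, the rest of the argument is the standard $\Hh$-step of an $\HSP$-style reduction transcribed to the weighted setting.
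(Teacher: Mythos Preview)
Your proposal is correct and follows essentially the same approach as the paper: the paper invokes the isomorphism theorem to replace $h$ by the quotient map $x\mapsto[x]_\sim$ and then defines $\rr'(x_1,\dots,x_r)=\rr([x_1]_\sim,\dots,[x_r]_\sim)$, which is exactly your $\tilde\rr$ under that identification, and the remaining steps (inducing $\oo^\B$, checking it improves $\rr$, and matching optimal assignments) are the same. Your version is in fact slightly more explicit about why $f^\B$ is well defined via Birkhoff's theorem, which the paper leaves implicit.
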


\begin{proof}
By the isomorphism theorem we can consider $\B$ to be a quotient algebra $\Aa / {\sim}$ rather than a homomorphic image of $\Aa$.
Let $A /{\sim}$ be the universe of $\B$ and let $\G$ be a finite subset of $\Imp(\B)$. Take $\rr \in \G$ to be a $r$-ary cost function. We define a corresponding cost function $\rr'$ of arity $r$ on the set $A$ by $\rr'(x_1, \dots, x_r) = \rr([x_1]_{\sim}, \dots [x_r]_{\sim})$. 

Let $\oo$ be a $k$-ary weighting of the weighted algebra $\Aa$. The corresponding $k$-ary weighting $\oo^{\B}$ of $\B$ is a weighted polymorphism of $\rr$. It is not hard to show that $\oo$ is a weighted polymorphism of $\rr'$. Hence, each weighting of $\Aa$ is a weighted polymorphism of $\rr'$, which means that $\rr' \in \Imp(\Aa)$.

For each $\rr \in \G$ we have defined a corresponding $\rr' \in \Imp(\Aa)$. Let $\G' \subseteq \Imp(\Aa)$ be the (finite) set of all those cost functions. 

Now take an arbitrary instance $\I = (V, A /{\sim}, \C)$ of $\VCSP(\G)$. Replace the domain $A /{\sim}$ by $A$. In each constraint $(\sigma, \rr) \in \C$ replace the cost function $\rr$ by a corresponding cost function $\rr'$ from $\G'$. The new instance $\I'=(V,A,\C')$ is an instance of $\VCSP(\G')$. 

If $s' \colon V \w A$ is an optimal assignment for $\I'$, then $s \colon V \w A /{\sim}$ defined by $s(v) = [s'(v)]_{\sim}$ is an optimal assignment for $\I$. On the other hand, if $s \colon V \w A /{\sim}$ is an optimal assignment for $\I$, then any assignment $s' \colon V \w A$, such that for each $v \in V$, we have $s'(v) \in s(v)$, is optimal for $\I'$.
\end{proof}

The above lemmas together with Proposition~\ref{hsp}, imply the following:

%THEOREM

\begin{proposition}\label{iii}
For any finite weighted algebra $\Aa$, and any finite $\B \in \V(\Aa)$, there is a polynomial-time reduction of $\VCSP(\Imp(\B))$ to $\VCSP(\Imp(\Aa))$.
\end{proposition}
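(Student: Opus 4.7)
The plan is to invoke Proposition~\ref{hsp} (Tarski's theorem) to unfold membership in $\V_{fin}(\Aa)$ into the three constructors $\Hh$, $\Ss$, $\Pp_{fin}$, and then to compose the three polynomial-time reductions already established by the preceding lemmas.

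More precisely, given a finite $\B \in \V(\Aa)$, Tarski's theorem provides a natural number $n$, a subalgebra $\C \leq \Aa^{n}$, and a surjective homomorphism from $\C$ onto $\B$. Viewing $\B$ as a quotient $\C/{\sim}$ via the isomorphism theorem, each of the intermediate algebras $\Aa^{n}$, $\C$, and $\B$ is a finite algebra of $\V(\Aa)$ and hence carries the weighted algebra structure induced by the weighted variety $\V(\Aa)$ (this is exactly what Proposition~\ref{wc} is for: it guarantees that proper superpositions of induced weightings are again induced, so each of these is a bona fide finite weighted algebra in the sense needed by the three lemmas). I would then compose the reductions in the following chain:
\[
\VCSP(\Imp(\B)) \;\leq_{P}\; \VCSP(\Imp(\C)) \;\leq_{P}\; \VCSP(\Imp(\Aa^{n})) \;\leq_{P}\; \VCSP(\Imp(\Aa)),
\]
where the three reductions come respectively from the $\Hh$-lemma (applied to $\B = \C/{\sim} \in \Hh(\C)$), the $\Ss$-lemma (applied to $\C \in \Ss(\Aa^{n})$, which is actually the trivial inclusion $\Imp(\C) \supseteq \Imp(\Aa^{n})$), and the $\Pp_{fin}$-lemma (applied to $\Aa^{n} \in \Pp_{fin}(\Aa)$). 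Since the composition of a constant number of polynomial-time reductions is polynomial-time, this yields the claim.

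The only real subtlety is a bookkeeping one: each of the three lemmas is phrased as a reduction between VCSPs over cost functions improved by the weightings of the respective algebra, and we must know that the weightings on the intermediate algebras $\Aa^{n}$ and $\C$ are exactly the weightings induced from $\V(\Aa)$ so that all three $\Imp(\cdot)$ sets in the chain are meaningful and the lemmas actually apply. This is precisely what Proposition~\ref{wc} has already established, so no additional work is required. Thus the proposition follows immediately by concatenating the three preceding lemmas along the $\HSP_{fin}$-decomposition.
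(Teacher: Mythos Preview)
Your proposal is correct and follows exactly the paper's approach: the paper simply states that the proposition follows from the three preceding lemmas together with Proposition~\ref{hsp}, and you have spelled out the $\HSP_{fin}$-chain explicitly (and, helpfully, made the role of Proposition~\ref{wc} explicit where the paper leaves it implicit). One minor slip: the inclusion in your parenthetical on the $\Ss$-step should read $\Imp(\C) \subseteq \Imp(\Aa^{n})$, not $\supseteq$ --- a cost function on the subuniverse, extended by $\infty$ outside, is improved by every weighting of the ambient algebra --- and that is precisely why this reduction is the identity; the reversal does not affect your overall argument.
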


For a valued constraint language $\G$ the weighted algebra $(D, \wPol(\G))$ is the algebra $(D, \Pol(\G))$ together with the set of weightings $\wPol(\G)$. By Proposition~\ref{iii} the complexity of $\G$ depends only on the weighted variety generated by the weighted algebra $(D, \wPol(\G))$.

\section{Dichotomy conjecture}

An operation $t$ of arity $k$ is called a \emph{Taylor operation} of an algebra (or a variety), if $t$ is
idempotent and for every $j \leq k$ it satisfies an identity of the form $$t(\square_1,\square_2, \dots,\square_k) \approx t(\triangle_1,\triangle_2, \dots,\triangle_k),$$
where all $\square_i$’s and $\triangle_i$’s are substituted with either $x$ or $y$, but $\square_j$ is $x$ whenever $\triangle_j$ is $y$. In this section we work towards a proof of the following theorem:

\begin{theorem}\label{taylor}
Let $\G$ be a finite core valued constraint language. If $\Pol^{+}(\G)$ does not have a Taylor operation, then $\G$ is NP-hard.
\end{theorem}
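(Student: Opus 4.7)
The plan is to apply Taylor's classical theorem to the idempotent clone $\Pol^+(\G)$, reinterpret $(D,\wPol(\G))$ as a weighted algebra with underlying clone $\Pol^+(\G)$, and then invoke Proposition~\ref{iii} together with the NP-hardness of Boolean CSP with all relations.

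First I would reduce to the case where $\G$ is a rigid core. By Propositions~\ref{core},~\ref{rigid core},~\ref{idpolpol}, together with the fact that Taylor operations are by definition idempotent, $\Pol^+(\G)$ has no Taylor operation iff $\Pol^+(\G_c)=\IdPol^+(\G)$ has none, and $\G$ is NP-hard iff $\G_c$ is. From now on $\Pol(\G)$ and its subclone $\Pol^+(\G)$ are idempotent clones. Moreover, $(D,\wPol(\G))$ is also a legitimate weighted algebra over the smaller clone $\Pol^+(\G)$: every $\omega\in\wPol(\G)$ has positive support inside $\Pol^+(\G)$ by definition of $\Pol^+$, and negative support only on projections, which also lie in $\Pol^+(\G)$. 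Closure under proper superposition with $\Pol^+(\G)$-operations is a special case of closure with $\Pol(\G)$-operations. Call this weighted algebra $\Aa'$; clearly $\Imp(\Aa')=\wRelClo(\G)$.

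By Taylor's theorem, the idempotent clone $\Pol^+(\G)$ having no Taylor operation yields a two-element algebra $\B_0\in\V((D,\Pol^+(\G)))$ whose term operations are all projections. By Proposition~\ref{wc}, $\B_0$ inherits a weighted-algebra structure from $\Aa'$. I claim that every crisp cost function on $B_0=\{0,1\}$ lies in $\Imp(\B_0)$. Indeed, any induced weighting $\omega^{\B_0}$ is supported only on projections of $\B_0$, because every $\Pol^+(\G)$-term interprets in $\B_0$ as a projection, so $\omega^{\B_0}(f)=\sum_{[t]:\,t^{\B_0}=f}\omega([t])=0$ whenever $f$ is not a projection. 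Now for a crisp $\rr_R$ with $\Feas(\rr_R)=R$ and tuples $\mathbf{x_1},\dots,\mathbf{x_k}\in R$, each projection preserves $R$, so $\omega^{\B_0}(\pi_j)\cdot\rr_R(\mathbf{x_j})=\omega^{\B_0}(\pi_j)\cdot 0=0$, and the weighted-polymorphism inequality is trivially satisfied. Hence $\Imp(\B_0)$ contains every relation on $\{0,1\}$ (as a crisp cost function), and the associated crisp CSP, which includes $3$-SAT, is NP-hard; consequently $\VCSP(\Imp(\B_0))$ is NP-hard.

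Finally, Proposition~\ref{iii} applied to $\Aa'$ and $\B_0\in\V(\Aa')$ gives a polynomial-time reduction of $\VCSP(\Imp(\B_0))$ to $\VCSP(\Imp(\Aa'))=\VCSP(\wRelClo(\G))$, and Theorem~\ref{Wrelclo} transfers NP-hardness to $\G$. The main obstacle is the reinterpretation step: verifying carefully that $(D,\wPol(\G))$ is a valid weighted algebra over the smaller clone $\Pol^+(\G)$, so that Proposition~\ref{iii} applies with $\B_0\in\V((D,\Pol^+(\G)))$ rather than $\V((D,\Pol(\G)))$. This neatly side-steps the hurdle of finding a two-element ``trivial'' algebra in $\V((D,\Pol(\G)))$, which need not exist when $\Pol(\G)$ itself admits a Taylor operation.
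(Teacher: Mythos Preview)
Your reduction to a rigid core and your observation that each $\omega\in\wPol(\G)$ restricts to a valid weighting over the smaller clone $\Pol^+(\G)$ are both correct. The gap is the assertion $\Imp(\Aa')=\wRelClo(\G)$. Only the inclusion $\wRelClo(\G)\subseteq\Imp(\Aa')$ holds: a cost function $\rho$ lies in $\Imp(\Aa')$ as soon as $\Feas(\rho)$ is invariant under $\Pol^+(\G)$ and the weighted-polymorphism inequalities hold, whereas membership in $\wRelClo(\G)=\Imp(\wPol(\G))$ additionally requires $\Feas(\rho)$ to be invariant under the possibly strictly larger clone $\Pol(\G)$. In particular, the crisp cost functions your $\HSP$-reduction manufactures from $\Imp(\B_0)$ encode relations that are only guaranteed to be $\Pol^+(\G)$-invariant; such crisp functions need not lie in $\wRelClo(\G)$ at all, and NP-hardness of $\VCSP(\Imp(\Aa'))$ does not transfer to $\VCSP(\G)$ via Theorem~\ref{Wrelclo}.

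This is exactly the obstacle the paper's proof overcomes with Lemma~\ref{relation} (which rests on Proposition~\ref{polymor} and Farkas' Lemma): for any relation $R$ invariant merely under $\Pol^+(\G)$, there is a cost function in $\wRelClo(\G)$ --- generally not crisp --- whose minimum is attained precisely on $R$. The paper then lifts the \textsc{One-in-Three Sat} relation through the congruence and subuniverse supplied by Taylor's theorem, applies Lemma~\ref{relation} to obtain a cost function in $\wRelClo(\G_c)$, and reduces directly. Your variety-theoretic route via Proposition~\ref{iii} is conceptually appealing, but without a result of the type of Lemma~\ref{relation} you cannot bridge back from $\Imp(\Aa')$ to $\wRelClo(\G)$.
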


We conjecture\footnote{The conjecture was suggested in a conversation by Libor Barto, however it might have appeared independently earlier.}
that these are the only cases of finite core languages which give rise to NP-hard VCSPs.

\begin{con}
\emph{Let $\G$ be a finite core valued constraint language. If $\Pol^{+}(\G)$ does not have a Taylor operation, then $\G$ is NP-hard. Otherwise it is tractable.}
\end{con}

For crisp languages $\Pol^{+}(\G) = \Pol(\G)$. Therefore Theorem~\ref{taylor} generalizes the well-known result of Bulatov, Jeavons and Krokhin~\cite{BKJ, Bulat} concerning crisp core languages. Similarly the above conjecture is a generalization of The Algebraic Dichotomy Conjecture for CSP. Later on we show that it is supported by all known partial results on the complexity of VCSPs.

To prove Theorem~\ref{taylor} we use the following characterization of algebras possessing a Taylor operation:

\begin{theorem}[Taylor~\cite{Taylor}]
Let $\Aa$ be a finite idempotent algebra, then the following are equivalent:
\begin{itemize}
\item $\Aa$ has a Taylor operation,
\item $\V(\Aa)$ (equivalently $\HS(\Aa)$) does not contain a two-element algebra whose every term operation is a projection.
\end{itemize}
\end{theorem}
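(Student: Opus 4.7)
The plan is to prove the two directions separately, using Birkhoff's theorem (Theorem~\ref{Bir}) as the bridge between operations and identities.

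For the forward direction, suppose $\Aa$ has a Taylor operation $t$ of arity $k$, witnessed by identities $t(\square^{(j)}_1,\ldots,\square^{(j)}_k) \approx t(\triangle^{(j)}_1,\ldots,\triangle^{(j)}_k)$ with $\square^{(j)}_j = x$ and $\triangle^{(j)}_j = y$ for each $j \in \{1,\ldots,k\}$. By Birkhoff, every $\B \in \V(\Aa)$ satisfies the same identities, so $t^{\B}$ is a Taylor operation of $\B$. If $\B \in \V(\Aa)$ were a two-element algebra with universe $\{0,1\}$ and every term operation a projection, then $t^{\B} = \pi^{(k)}_{j_0}$ for some $j_0$; substituting $0$ for $x$ and $1$ for $y$ in the $j_0$-th Taylor identity would yield $0 = 1$, a contradiction. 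Since $\HS(\Aa) \subseteq \V(\Aa)$, this rules out such an algebra in $\HS(\Aa)$ as well.

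For the reverse direction I would argue contrapositively: assuming $\Aa$ has no Taylor operation, construct a two-element algebra in $\HS(\Aa)$ whose every term operation is a projection. Because compositions of projections are projections, it suffices to specify, for each basic operation symbol $f \in \Sigma$ of arity $k$, an index $j(f) \in \{1,\ldots,k\}$, define $\B$ on $\{0,1\}$ by $f^{\B} = \pi^{(k)}_{j(f)}$, and verify $\B \in \V(\Aa)$. An induction on terms shows $p^{\B} = \pi_{\ell(p,j)}$ for every $\Sigma$-term $p$, where $\ell(p,j)$ is the variable reached by descending from the root of $p$ via the $j(f)$-th child at each internal node labeled $f$; by Birkhoff, $\B \in \V(\Aa)$ iff $\ell(p,j) = \ell(q,j)$ for every identity $p \approx q$ satisfied by $\Aa$.

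To find such a coherent $j$, I would work with the two-generated free algebra $F := F_{\V(\Aa)}(x,y)$, which is finite since $\V(\Aa)$ is locally finite (being generated by a finite idempotent algebra). The ``no Taylor'' hypothesis translates into a sensitivity property of $F$: for every $k$-ary term operation $t$ of $\V(\Aa)$ there is some index $j_t \in \{1,\ldots,k\}$ such that $t(u) \ne t(v)$ in $F$ whenever $u,v \in \{x,y\}^k$ disagree at coordinate $j_t$. The space $\prod_{f\in\Sigma}\{1,\ldots,\arity(f)\}$ is compact, so by compactness it suffices to witness, for each finite set of identities of $\Aa$, a single $j$ making them all compatible; the sensitivity data then yield such $j$'s by induction on term complexity.

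The main obstacle is precisely this coherence step. The hypothesis provides only pointwise sensitivity: each individual term has some sensitive coordinate $j_t$, but the single choice $j(f)$ at a basic symbol $f$ propagates through every composite term in which $f$ appears, so the separate $j_t$'s must be welded into a uniform assignment. Showing that ``no Taylor operation'' is strong enough to prevent conflicts in this welding is the genuinely nontrivial content of Taylor's theorem, requiring more than formal compactness and using the full strength of the hypothesis across all arities.
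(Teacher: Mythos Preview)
The paper does not prove this theorem at all; it is quoted from Taylor's 1977 paper and invoked as a black box in the proof of Theorem~\ref{taylor}. There is therefore no in-paper argument to compare your attempt against.

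On your sketch itself: the forward direction is correct and standard. For the reverse direction you have accurately located the difficulty and been honest that you have not resolved it---but the gap you name is the entire theorem. Knowing that each individual term $t$ has \emph{some} sensitive coordinate $j_t$ does not yield a coherent assignment $f \mapsto j(f)$ on basic symbols: the same symbol $f$ appears inside many composite terms, and the sensitive coordinates of those terms need not trace back through the same argument of $f$. Your compactness step only reduces to finite fragments of the signature (or finite sets of identities) without explaining why a compatible choice exists even there; the ``induction on term complexity'' you allude to would have to show that sensitivity is hereditary under substitution in a very specific way, and nothing you have written establishes that. So what you have is a correct framing that stops exactly where the mathematics begins. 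Given that the paper itself treats the result as a citation, an outline is not unreasonable, but it should be presented as such rather than as a proof; if you want an actual argument, the modern route is via the cyclic-term characterization (Theorem~\ref{thm:BK} in the paper), whose proof is independent of Taylor's and from which the present statement follows.
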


First let us prove an auxiliary lemma.

\begin{lemma}\label{relation}
Let $\G$ be a finite core valued constraint language over a domain $D$, and let $R$ be an $r$-ary relation which is compatible with every polymorphism from $\Pol^+(\G)$. Then there exists a cost function $\rr_R$ in $\wRelClo(\G)$, and a rational number $P$, such that for every $r$-tuple $\bf x$ the following conditions are satisfied:
\begin{itemize}
\item $\rr_R ({\bf x}) \geq P$ and
\item $\rr_R ({\bf x}) = P$ if and only if ${\bf x} \in R$.
\end{itemize}
\end{lemma}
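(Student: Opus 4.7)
The plan is to reduce to Proposition~\ref{polymor}: the cost function it produces on the set of $m$-ary operations already distinguishes $\Pol^{+}(\G)$ inside $\Oo_D^{(m)}$, so by minimizing over operations with prescribed values on a cleverly chosen set of inputs one can cut it down to a cost function that distinguishes $R$ inside $D^r$.

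Assume $R \neq \emptyset$ and enumerate $R = \{\mathbf{r}_1, \dots, \mathbf{r}_k\}$; arrange these tuples as columns of an $r \times k$ matrix over $D$, whose rows are tuples $\mathbf{c}_1, \dots, \mathbf{c}_r \in D^k$. Apply Proposition~\ref{polymor} with $m = k$ to obtain $\rr_{P_k} \in \wRelClo(\G)$ of arity $|D^k|$ and a rational $P$ such that, identifying each $k$-ary operation $g$ with its $|D^k|$-tuple of values, $\rr_{P_k}(g) \geq P$ with equality if and only if $g \in \Pol^{+}(\G)$. Then define
$$\rr_R(x_1, \dots, x_r) = \min \{ \rr_{P_k}(g) : g \colon D^k \w D,\ g(\mathbf{c}_i) = x_i \text{ for all } i \},$$
which lies in $\wRelClo(\G)$ by expressibility (Definition~\ref{Ex}): view $\rr_{P_k}$ as a single constraint on variables indexed by $D^k$, designate the $r$ variables indexed by $\mathbf{c}_1, \dots, \mathbf{c}_r$ as the external ones, and minimize over the rest.

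The two properties then follow. The bound $\rr_R(\mathbf{x}) \geq P$ is immediate from Proposition~\ref{polymor}. For $\mathbf{x} = \mathbf{r}_j \in R$, the projection $\pi_j^{(k)}$ belongs to $\Pol^{+}(\G)$ and satisfies $\pi_j^{(k)}(\mathbf{c}_i) = \mathbf{r}_j(i) = x_i$, so $\rr_R(\mathbf{x}) \leq \rr_{P_k}(\pi_j^{(k)}) = P$, giving equality. Conversely, if $\mathbf{x} \notin R$ were to satisfy $\rr_R(\mathbf{x}) = P$, then the witnessing $g$ would lie in $\Pol^{+}(\G)$ by Proposition~\ref{polymor}, and applying $g$ coordinate-wise to the columns $\mathbf{r}_1, \dots, \mathbf{r}_k \in R$ would produce the $r$-tuple whose $i$-th coordinate is $g(\mathbf{c}_i) = x_i$, namely $\mathbf{x}$ itself; compatibility of $R$ with $\Pol^{+}(\G)$ would then force $\mathbf{x} \in R$, a contradiction.

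The main subtlety, rather than any single deep step, is checking that the minimization-based expression is well-formed when the rows $\mathbf{c}_i$ coincide: $\mathbf{c}_i = \mathbf{c}_j$ forces the external variables $x_i$ and $x_j$ to be identified, making $\rr_R(\mathbf{x}) = \infty$ whenever $x_i \neq x_j$, but this exactly matches the fact that $\mathbf{c}_i = \mathbf{c}_j$ already means every tuple of $R$ has equal $i$-th and $j$-th coordinates. The boundary case $R = \emptyset$ requires a separate trivial construction.
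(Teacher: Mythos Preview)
Your argument is correct and follows essentially the same route as the paper's proof: both invoke Proposition~\ref{polymor} with $m$ equal to $|R|$, identify the rows $\mathbf{c}_i$ (the paper's $\mathbf{b}_i$) of the matrix whose columns are the tuples of $R$, and minimize the resulting cost function over all remaining coordinates. Your write-up is in fact more thorough than the paper's, which simply asserts that the minimization yields a cost function with the required properties; you explicitly verify both directions using projections and the compatibility hypothesis, and you flag the repeated-rows subtlety and the empty-relation edge case, neither of which the paper discusses.
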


\begin{proof}
  Let $R=\{{\bf x_1},\dotsc,{\bf x_m}\}$ be a relation as in the statement of the lemma.
  By Proposition~\ref{polymor} there exists a cost function $\rr' \colon \Oo_D^{(m)} \rightarrow \overline{Q}$ in $\wRelClo(\G)$, and a rational number $P$, such that for every $f \in \Oo_D^{(m)}$:
\begin{itemize}
\item $\rr' (f) \geq P$,
\item $\rr' (f) < \infty$ if and only if $f \in \Pol(\G)$,
\item $\rr' (f) = P $ if and only if $f \in \Pol^{+}(\G)$.
\end{itemize}
Consider the coordinates ${\bf b_1},\dotsc,{\bf b_r}$, such that ${\bf b_i}(j) = {\bf x_j}(i)$. Minimising the cost function $\rr'$ over all the other coordinates we obtain a cost function $\rr$ satisfying the given conditions.
\end{proof}

\begin{proof}
(of Theorem~\ref{taylor})
Let $\G$ be a finite core valued constraint language over a domain $D$, and let $\G_c$ be a rigid core of $\G$ as defined in Subsection~\ref{rigid}. Suppose that $\Pol^{+}(\G)$ does not have a Taylor operation.  By Proposition~\ref{idpolpol} we have that $\IdPol^+(\G) = \Pol^+(\G_c)$. Therefore, $\Pol^+(\G_c)$ does not have a Taylor operation. Below we prove that $\VCSP(\G_c)$ is NP-hard. 
This implies, by Proposition~\ref{rigid core},
that $\VCSP(\G)$ is NP-hard which concludes the proof.

Let $\Aa$ denote the idempotent algebra $\Pol^+(\G_c)$ over the universe $D$. By Taylor's theorem $\HS(\Aa)$ contains a two-element algebra $\B$ whose every term operation is a projection. By the isomorphism theorem we can consider $\B$ to be a quotient algebra rather than a homomorphic image of a subalgebra of $\Aa$. In other words, there exists a binary relation $S$ compatible with $\Aa$, which is an equivalence relation on some subuniverse $D'$ of $D$ and has two equivalence classes $[d_0]_{S}$ and $[d_1]_{S}$. Moreover, the term operations defined on the set of equivalence classes of $S$ using their arbitrarily chosen representatives are all projections.

Every relation is compatible with a two-element algebra whose every term operation is a projection. Consider the relation $R=\{(1,0,0), (0,1,0), (0,0,1)\}$. It corresponds to the \textsc{One-in-Three Sat} problem, which is NP-complete~\cite{Schaefer}. We define a ternary relation $S_{\textsc{1in3}}$ on $D'$ by: $$S_{\textsc{1in3}} = \{ (x_1, x_2, x_3) \ \colon \text{exactly one of } x_1, x_2, x_3 \text{ belongs to } [d_1]_S \}.$$ This relation is compatible with $\Aa$. Hence there exists a cost function $\rr_{S_{\textsc{1in3}}}$ in $\wRelClo(\G_c)$ satisfying the conditions given by Lemma~\ref{relation}.

Now, for every instance of \textsc{One-in-Three Sat} it is easy to construct~(in polynomial time) an instance of $\VCSP(\G_c)$ such that
if the instance of \textsc{One-in-Three Sat} has a solution then this solution gives rise to the minimal evaluation of the constructed instance.
This finishes the reduction.
\end{proof}

%W show that the conjecture agrees with the known complexity classifications.

As the Taylor operation is difficult to work with, in the following we use a characterization of Taylor algebras as the algebras possessing a cyclic term.

\begin{theorem}[Barto and Kozik~\cite{Kozik}]\label{thm:BK}
A finite idempotent
algebra $\Aa$ has a Taylor operation if and only if it has a cyclic operation
if and only if it has a cyclic operation of arity $p$, for every prime $p > |\Aa|$.
\end{theorem}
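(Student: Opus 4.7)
The plan is to prove the three statements in cyclic form. That cyclic operations of arity $p$ for every prime $p > |\Aa|$ imply the existence of a cyclic operation is immediate. For cyclic implies Taylor: given a $k$-ary cyclic $t$ satisfying $t(x_1,\ldots,x_k) \approx t(x_2,\ldots,x_k,x_1)$, for each $j \leq k$ substitute $x_j = x$ and $x_i = y$ for $i \neq j$. The left-hand side has $x$ in position $j$, while the right-hand side, after the cyclic shift, places $x$ in position $j-1 \bmod k$ and thus $y$ in position $j$. This is exactly the Taylor identity witnessing coordinate $j$, so ranging over all $j$ proves $t$ is a Taylor operation.

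The content of the theorem lies in showing that Taylor implies cyclic of arity $p$ for every prime $p > |\Aa|$. Fix such a $p$. I would reformulate the goal in the direct power $\Aa^p$ equipped with the cyclic-shift automorphism $\sigma(x_1,\ldots,x_p) = (x_2,\ldots,x_p,x_1)$: a $p$-ary term $t$ of $\Aa$ is cyclic if and only if, viewed as an element of the subalgebra $F \subseteq \Aa^{A^p}$ generated by the $p$ coordinate projections, it is fixed by the induced action of $\sigma$. Since no projection is $\sigma$-fixed, the task reduces to producing any $\sigma$-fixed element of $F$ at all.

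To do so I would invoke absorption theory. The primality of $p$ together with $p > |\Aa|$ guarantees that every non-diagonal $\sigma$-orbit in $A^p$ has exact size $p$ and, by pigeonhole, that every element of $A^p$ has a repeated coordinate. The plan is to locate, for an appropriately chosen $\mathbf{b} \in A^p$, a term operation whose $p$ cyclic shifts collapse the orbit $\{\mathbf{b}, \sigma(\mathbf{b}), \ldots, \sigma^{p-1}(\mathbf{b})\}$ into the diagonal $\Delta = \{(a,\ldots,a) : a \in A\}$; the identity witnessing this collapse is exactly the required cyclic identity $t \approx t \circ \sigma$.

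The main obstacle, and the heart of the Barto--Kozik argument, is producing such a collapsing term from the Taylor hypothesis alone. By Taylor's characterization, the variety $\V(\Aa)$ omits the two-element projection algebra among its $\HS$-images, which forbids strongly abelian local behavior and yields proper absorbing subuniverses inside appropriate finite members of $\HSP(\Aa)$. The Barto--Kozik loop lemma then asserts that any compatible reflexive binary relation on a finite Taylor algebra without a proper absorbing subuniverse contains a loop. Setting up the correct compatible binary relation on a subalgebra of $\Aa^p$ encoding the $\sigma$-dynamics, and converting the resulting loop into a $\sigma$-invariant term, is the technical bulk of the proof and the step I would expect to be the hardest; the rest is bookkeeping around it.
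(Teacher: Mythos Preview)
The paper does not prove this theorem; it is quoted as an external result of Barto and Kozik and used as a black box. There is therefore no ``paper's own proof'' to compare against. Your sketch is a reasonable high-level outline of the actual Barto--Kozik argument (the easy implications are fine, and for the hard direction you correctly identify the reformulation in $\Aa^{A^p}$ with the shift action and the role of absorption and the loop lemma), but be aware that the technical core you flag as ``the hardest step'' is essentially the entire content of the cited paper, and your description of it is too schematic to count as a proof. In the context of the present paper, simply citing the result is what is expected.
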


Since $(D,\IdPol^+(\G))$ is a finite idempotent algebra it follows that:

\begin{corollary}
For any finite core valued constraint language $\G$, $\Pol^{+}(\G)$ has a Taylor operation if and only if it has an idempotent cyclic operation.
\end{corollary}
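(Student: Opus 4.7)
The plan is to reduce the statement to an application of Theorem~\ref{thm:BK} on the finite idempotent algebra $(D,\IdPol^+(\G))$, exactly as foreshadowed in the sentence immediately preceding the corollary. The crucial observation is that Taylor operations are by definition idempotent, so no information is lost by passing from $\Pol^+(\G)$ to its idempotent reduct $\IdPol^+(\G)$.

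First I would verify the following key equivalence: $\Pol^+(\G)$ has a Taylor operation if and only if $\IdPol^+(\G)$ has a Taylor operation. One direction is immediate because $\IdPol^+(\G) \subseteq \Pol^+(\G)$, so any Taylor operation witnessed inside $\IdPol^+(\G)$ is in particular a Taylor operation of the larger set. For the other direction I would recall the definition given just before Theorem~\ref{taylor}: a Taylor operation must be idempotent. Hence any Taylor operation $t \in \Pol^+(\G)$ satisfies $t(x,\ldots,x) \approx x$ and therefore lies in $\IdPol^+(\G)$ together with the Taylor identities it witnesses.

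Next, since $(D,\IdPol^+(\G))$ is a finite idempotent algebra, Theorem~\ref{thm:BK} applies and yields: it has a Taylor operation if and only if it has a cyclic operation. The cyclic operation supplied by Theorem~\ref{thm:BK} is a term operation of $(D,\IdPol^+(\G))$, whose every term operation is idempotent; thus it is automatically an \emph{idempotent} cyclic operation, and it belongs to $\IdPol^+(\G) \subseteq \Pol^+(\G)$. Conversely, any idempotent cyclic operation of $\Pol^+(\G)$ is in $\IdPol^+(\G)$ by definition, and thus serves as a cyclic term operation of the finite idempotent algebra, so Theorem~\ref{thm:BK} produces a Taylor operation.

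There is no real obstacle here beyond bookkeeping; the content is already in Theorem~\ref{thm:BK} and Proposition~\ref{idpolpol}. The only point that requires a sentence of care is the observation that Taylor operations and idempotent cyclic operations of $\Pol^+(\G)$ all live inside $\IdPol^+(\G)$, which is what legitimises transferring the equivalence back and forth between the two clones.
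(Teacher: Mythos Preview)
Your proposal is correct and follows essentially the same approach as the paper: the paper simply notes that $(D,\IdPol^+(\G))$ is a finite idempotent algebra and invokes Theorem~\ref{thm:BK}, leaving implicit the bookkeeping you spell out (that Taylor and idempotent cyclic operations of $\Pol^+(\G)$ automatically lie in $\IdPol^+(\G)$). Your additional detail is welcome but not a departure from the paper's argument.
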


\subsection{Two-element domain}

A complete complexity classification for valued constraint languages over a two-element domain was established in~\cite{CCJK}. All tractable languages have been defined via multimorphisms, which are a more restricted form of weighted polymorphisms. A $k$-ary \emph{multimorphism} of a language $\G$, specified as a $k$-tuple $\langle f_1, \ldots, f_k \rangle$ of $k$-ary operations on $D$, is a $k$-ary weighted polymorphism $\oo$ of $\G$ such that $\oo = {1\over k}\big(\sum_i f_i - \sum_i \pi^{(k)}_i\big)$.

%A $k$-ary \emph{multimorphism} of a language $\G$, specified as a $k$-tuple $\langle f_1, \ldots, f_k \rangle$ of $k$-ary operations on $D$, is a $k$-ary weighted polymorphism $\oo$ of $\G$ such that $\oo(f_i) = {1\over k}$ and $\oo(\pi_i) = - {1\over k}$ for each $i \in \{1, \ldots, k\}$.

An operation $f \in \Oo_D^{(3)}$ is called a \emph{majority} operation if for every $x,y \in D$ we have that $f (x, x, y) = f (x, y, x) = f (y, x, x) = x$. Similarly, an operation $f \in \Oo_D^{(3)}$ is called a \emph{minority} operation if for every $x,y \in D$ it satisfies $f(x,x,y) = f(x,y,x) = f(y,x,x) = y$. Observe that on a two-element domain there is precisely one majority operation, which we denote by $\Mjrty$, and precisely one minority operation,  which we denote by $\Mnrty$.

\begin{theorem}[Cohen et al.~\cite{CCJK}]
Let $\G$ be a core valued constraint language on $D = \{0, 1\}$. If $\G$ admits at least one of the following six multimorphisms, then $\G$ is tractable. Otherwise it is NP-hard.
\begin{enumerate}
\item $ \langle \min, \min \rangle$,
\item $ \langle \max, \max \rangle$,
\item $ \langle \min,\max \rangle$,
\item $ \langle \Mjrty, \Mjrty, \Mjrty \rangle$, 
\item $ \langle \Mnrty, \Mnrty, \Mnrty \rangle$,
\item $ \langle \Mjrty, \Mjrty, \Mnrty \rangle$.
\end{enumerate}
\end{theorem}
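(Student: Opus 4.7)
The plan is to verify that this classification, due to~\cite{CCJK}, coincides with the specialization of our Conjecture to $|D|=2$. The tractability half is already established in~\cite{CCJK} by explicit polynomial-time algorithms, so I take it as given and concentrate on showing that the NP-hard half is subsumed by Theorem~\ref{taylor}. By Propositions~\ref{core}, \ref{rigid core} and~\ref{idpolpol} it suffices to work with the rigid core $\G_c$, for which $\Pol^{+}(\G_c)=\IdPol^{+}(\G)$ is an idempotent clone on $\{0,1\}$; by Theorem~\ref{thm:BK} its Taylor operations coincide with its cyclic operations.

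Post's lattice of idempotent Boolean clones tells us that any such clone strictly above the projection clone contains at least one of $\min$, $\max$, $\Mjrty$, $\Mnrty$, each of which is cyclic on $\{0,1\}$. The easy direction follows: every multimorphism on the list puts positive weight on $\min$, $\max$, $\Mjrty$ or $\Mnrty$, so its presence already forces a cyclic, hence Taylor, operation into $\Pol^{+}(\G)$. The task is the converse: if $\Pol^{+}(\G_c)$ contains one of $\min$, $\max$, $\Mjrty$, $\Mnrty$, then $\G$ admits one of the six multimorphisms.

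Given a binary $\oo\in\wPol(\G_c)$ with $\oo(\min)>0$, I would first symmetrize by averaging $\oo$ with the superposition $\oo[\pi_2^{(2)},\pi_1^{(2)}]$; since $\min$ and $\max$ are commutative, this preserves their weights while equalizing those of the two projections. The symmetric binary idempotent weightings form a two-dimensional cone spanned by the multimorphism-weightings $\langle\min,\min\rangle$ and $\langle\max,\max\rangle$, whose midpoint is $\langle\min,\max\rangle$. If $\max\notin\Pol^{+}(\G_c)$ then the $\langle\max,\max\rangle$-component of the symmetrized witness vanishes and, up to positive scaling, it coincides with $\langle\min,\min\rangle$; otherwise I combine the $\min$-witness with a $\max$-witness in positive rational ratios chosen so as to balance the $\min$- and $\max$-weights, producing a positive multiple of $\langle\min,\max\rangle$. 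The analogous argument in the ternary case uses cyclic symmetrization over the three coordinates and the short enumeration of symmetric ternary idempotent operations on $\{0,1\}$ to produce one of $\langle\Mjrty,\Mjrty,\Mjrty\rangle$, $\langle\Mnrty,\Mnrty,\Mnrty\rangle$, or $\langle\Mjrty,\Mjrty,\Mnrty\rangle$.

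The main obstacle is this ternary extraction step: one must verify that after full cyclic symmetrization any weighted polymorphism with $\Mjrty$ or $\Mnrty$ in its support decomposes non-negatively into the three listed multimorphism-weightings, and that the case split on which of $\Mjrty$ and $\Mnrty$ actually lie in $\Pol^{+}(\G_c)$ singles out one of them as a weighted polymorphism of $\G$ in its own right. The binary case is clean because only two non-projection operations are available; the ternary case requires a short linear-algebra argument inside the cone of symmetric weightings, together with Post's explicit classification of idempotent ternary clones on $\{0,1\}$.
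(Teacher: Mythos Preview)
The paper does not prove this theorem at all: it is quoted from~\cite{CCJK}. What the paper does prove is Proposition~\ref{two}, namely that on $\{0,1\}$ admitting one of the six multimorphisms is equivalent to $\Pol^{+}(\G)$ having an idempotent cyclic operation. Your hardness argument is precisely an attempt to establish the harder direction of Proposition~\ref{two} (cyclic in $\Pol^{+}$ $\Rightarrow$ one of the six multimorphisms), and then to invoke Theorem~\ref{taylor}. The paper's route for that direction is short: pass to $\G_c$, observe that $\wPol(\G_c)$ contains a non-trivial weighting, apply Theorem~\ref{nonzero} (the weighted-clone classification on $\{0,1\}$ from~\cite{CJZ}) to extract one of nine canonical weightings, and discard the three non-idempotent ones. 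All the combinatorics is hidden inside Theorem~\ref{nonzero}.

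Your symmetrization approach is an attempt to reprove the idempotent part of Theorem~\ref{nonzero} by hand, and the sketch underestimates what is involved. In the ternary case, cyclic symmetrization of a weighting does \emph{not} force the support into the four cyclic idempotent operations; it only equalizes weights across cyclic orbits. There are $64$ idempotent ternary Boolean operations, partitioned into the four cyclic singletons and twenty orbits of size three, and any of these orbits can carry positive weight after symmetrization. So the claim that the symmetrized weighting ``decomposes non-negatively into the three listed multimorphism-weightings'' is false in general; one needs further superposition moves to kill the non-cyclic orbits, and that is exactly the substance of the argument in~\cite{CJZ}. Even in the binary case your balancing step is incomplete: if the symmetrized $\min$-witness and $\max$-witness both happen to have $\oo(\min)>\oo(\max)$, no positive combination equalizes the two weights, and you have not explained why the interval of admissible symmetric weightings must contain one of the three distinguished points. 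None of this is unfixable, but what you would end up writing is a proof of Theorem~\ref{nonzero}, which the paper simply cites.
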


In~\cite{CCJK} the complexity classification is given for languages that are not necessarily cores. It is not difficult to prove, though, that the general case is equivalent to the above theorem. This is because every language over a two-element domain which is not a core is tractable.

We show that the dichotomy conjecture for VCSP agrees with the complexity classification for valued constraint languages over a two-element domain.

\begin{proposition}\label{two}
Let $\G$ be a finite core valued constraint language on $D = \{0, 1\}$. Then $\Pol^{+}(\G)$ has an idempotent cyclic operation if and only if $\G$ admits at least one of the following six multimorphisms.
\begin{enumerate}
\item $ \langle \min, \min \rangle$,
\item $ \langle \max, \max \rangle$,
\item $ \langle \min,\max \rangle$,
\item $ \langle \Mjrty, \Mjrty, \Mjrty \rangle$, 
\item $ \langle \Mnrty, \Mnrty, \Mnrty \rangle$,
\item $ \langle \Mjrty, \Mjrty, \Mnrty \rangle$.
\end{enumerate}
\end{proposition}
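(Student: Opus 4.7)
The $(\Leftarrow)$ direction is immediate from inspection: each of the six listed multimorphisms has one of $\min, \max, \Mjrty, \Mnrty$ in its support, and each of these four operations is idempotent and symmetric (hence cyclic).

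For the $(\Rightarrow)$ direction, let $f \in \Pol^+(\G)$ be a cyclic idempotent operation. Then $f \in \IdPol^+(\G)$, so this idempotent clone on $\{0, 1\}$ strictly contains the projections. By Post's classification there are exactly seven idempotent clones on $\{0, 1\}$, and any one that is larger than the projections must contain at least one of $\min, \max, \Mjrty, \Mnrty$. I split the argument into two principal cases.

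Suppose first that $\IdPol^+(\G)$ equals the clone generated by $\Mjrty$ (the self-dual monotone idempotent clone) or the clone generated by $\Mnrty$ (the affine idempotent clone); in particular $\min, \max \notin \IdPol^+(\G)$. The ternary component of these clones is exactly $\{\pi_1^{(3)}, \pi_2^{(3)}, \pi_3^{(3)}, \Mjrty\}$ or $\{\pi_1^{(3)}, \pi_2^{(3)}, \pi_3^{(3)}, \Mnrty\}$ respectively. By Lemma~\ref{idemp} take an idempotent ternary weighted polymorphism $\oo$ of $\G$ with positive weight on $\Mjrty$ (resp.\ $\Mnrty$); its support must lie in this four-element set. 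Averaging $\oo$ over the $S_3$-action on its arguments produces a symmetric weighting in which the three projections share the negative weight equally; rescaling so that the positive weight equals $1$ yields the multimorphism $\langle \Mjrty, \Mjrty, \Mjrty \rangle$ (resp.\ $\langle \Mnrty, \Mnrty, \Mnrty \rangle$).

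In the remaining case $\min \in \IdPol^+(\G)$ or $\max \in \IdPol^+(\G)$, and the main obstacle is to exhibit one of the three binary multimorphisms $\langle \min, \min \rangle$, $\langle \max, \max \rangle$, $\langle \min, \max \rangle$ as a weighted polymorphism of~$\G$. For each $\rho \in \G$ of arity $r$, I consider the set $K_\rho \subseteq \mathbb Q_{\geq 0}^2$ of pairs $(a, b)$ for which the symmetric idempotent binary weighting with weight $a$ on $\min$, $b$ on $\max$, and $-(a + b)/2$ on each projection improves $\rho$; by Lemma~\ref{idemp} and symmetrization, $K_\G = \bigcap_\rho K_\rho$ is nontrivial. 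The key observation specific to the two-element domain is that for every coordinatewise-comparable pair $x_1 \leq x_2$ in $\{0, 1\}^r$ one has $\min(x_1, x_2) = x_1$ and $\max(x_1, x_2) = x_2$, so the corresponding constraint on $K_\rho$ takes the form $(\rho(x_1) - \rho(x_2))(a - b) \leq 0$, which is either $a \leq b$, $b \leq a$, or vacuous, with no intermediate slope. Case-analyzing the monotonicity of $\rho$ together with the sign of $\rho(\min) + \rho(\max) - \rho(x_1) - \rho(x_2)$ on the ``crossing'' (incomparable) pairs shows that each single-function cone $K_\rho$ is either trivial or contains one of $(1, 0), (0, 1), (1, 1)$. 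A further tabulation over the possible combinations of such cones (monotone vs.\ non-monotone, submodular vs.\ supermodular crossings) then establishes that the intersection $K_\G$ always contains one of these three vectors whenever it is nontrivial, yielding the corresponding binary multimorphism.
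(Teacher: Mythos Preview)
Your $(\Leftarrow)$ direction matches the paper's. For $(\Rightarrow)$, however, the paper takes a much shorter route: it passes to the rigid core $\G_c$, notes (via Proposition~\ref{idpolpol}) that $\wPol(\G_c)$ contains a nontrivial weighting, and then invokes Theorem~\ref{nonzero} of Cohen et~al.\ as a black box. That theorem says any nontrivial weighted clone on $\{0,1\}$ contains one of nine specific weightings; the three non-idempotent ones are excluded, so one of the six listed multimorphisms lies in $\wPol(\G_c)\subseteq\wPol(\G)$. Two lines, no case analysis.

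Your approach is instead an attempt to reprove the idempotent part of Theorem~\ref{nonzero} from scratch, and it has genuine gaps. First, the case split is not exhaustive. The clone $\langle\Mjrty,\Mnrty\rangle$ (self-dual idempotent operations) contains neither $\min$ nor $\max$ (both fail self-duality) yet strictly contains both $\langle\Mjrty\rangle$ and $\langle\Mnrty\rangle$; if $\IdPol^+(\G)$ equals this clone, neither your Case~1 nor Case~2 applies as stated. Relatedly, the claim that there are ``exactly seven idempotent clones on $\{0,1\}$'' is false: already $\langle\land,\Mjrty\rangle$ and $\langle\lor,\Mjrty\rangle$ are idempotent clones distinct from the seven you presumably have in mind, and in fact Post's lattice contains infinitely many idempotent clones.

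Second, and more seriously, your Case~2 defers the entire difficulty to an unspecified ``tabulation''. The constraints on $K_\G$ coming from incomparable pairs have the form $c_1a+c_2b\le 0$ with $c_1=\rho(\min(x_1,x_2))-\mu$, $c_2=\rho(\max(x_1,x_2))-\mu$, and nothing forces these slopes to be $0$, $\pm 1$, or $\infty$. A priori, a nontrivial $K_\G$ could be a thin cone such as $\{2a\le b\le 3a\}$ that contains none of $(1,0),(0,1),(1,1)$. Ruling this out---showing that the comparable-pair constraints together with feasibility force the incomparable-pair constraints into a shape where one of the three special points survives---is exactly the substantive content of Theorem~\ref{nonzero}, and you have not supplied it. Without that argument (or a citation to it), the $(\Rightarrow)$ direction is not proved.
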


On $D = \{0,1\}$ there are precisely two constant operations, which
we denote by $\Const_0$ and $\Const_1$. By $\Inv$ we denote the \emph{inversion} operation defined by $\Inv(0) =1$ and $\Inv(1)=0$. To prove Proposition~\ref{two} we use the following theorem: 

\begin{theorem}[Cohen et al.~\cite{CJZ}]\label{nonzero}
Let $W$ be a weighted clone on $D = \{0, 1\}$ that contains a weighting which assigns positive weight to at least one operation that is not a projection. Then $W$ contains one of the following nine weightings:
\begin{enumerate}
\item $ \langle \Const_0 \rangle$,
\item $ \langle \Const_1 \rangle$,
\item $ \langle \Inv \rangle$,
\item $ \langle \min, \min \rangle$,
\item $ \langle \max, \max \rangle$,
\item $ \langle \min,\max \rangle$,
\item $ \langle \Mjrty, \Mjrty, \Mjrty \rangle$, 
\item $ \langle \Mnrty, \Mnrty, \Mnrty \rangle$,
\item $ \langle \Mjrty, \Mjrty, \Mnrty \rangle$.
\end{enumerate}
\end{theorem}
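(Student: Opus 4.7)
The plan is to combine the structural fact that the positive clone $C^+$ of a weighted clone is itself a clone (Proposition~\ref{prop:posclo}) with Post's classification of clones on the two-element domain. The hypothesis on $W$ is equivalent to $C^+ \supsetneq \Pi_{\{0,1\}}$. Since $C^+$ is a clone on $\{0,1\}$, its unary part is one of $\{\id\}$, $\{\id,\Const_0\}$, $\{\id,\Const_1\}$, $\{\id,\Inv\}$, or $\{\id,\Const_0,\Const_1,\Inv\}$, and in the first of these cases $C^+$ is an idempotent clone and hence generated (inside the idempotent fragment of Post's lattice) by some subset of $\{\min,\max,\Mjrty,\Mnrty\}$.

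First I would handle the non-idempotent cases. Suppose $\Const_0 \in C^+$, witnessed by a $k$-ary $\omega \in W$ with $\omega(f) > 0$ for some $f$ satisfying $f(x,\dots,x) = 0$. Superposing $\omega$ with $k$ copies of the identity (viewed as $\pi_1^{(1)}$) produces a formal unary weighting whose value on $\Const_0$ is $\sum_{f(x,\dots,x)=0} \omega(f) > 0$. This raw superposition need not be proper because $\Inv$ or $\Const_1$ might receive negative weight; but using Lemma~\ref{superp}, any such negative contribution can be absorbed by adding scaled copies of zero-sum unary weightings obtained by diagonalizing further elements of $W$. After normalizing so that $\id$ receives weight $-1$, the remaining positive mass sits on $\Const_0$ (possibly together with $\Const_1$ and $\Inv$); a final superposition with $\pi_1^{(1)}$ separates these contributions, and rescaling yields the canonical $\langle \Const_0 \rangle$. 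The cases $\Const_1 \in C^+$ and $\Inv \in C^+$ are symmetric.

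Second, in the idempotent case I would walk up the idempotent fragment of Post's lattice. If $\min \in C^+$, choose $\omega \in W$ with $\omega(\min) > 0$, superpose with suitable pairs from $\{\pi_1,\pi_2,\min\}$ to concentrate all positive weight on $\min$ (using that $C^+$ is idempotent to rule out non-idempotent operations appearing with positive weight), and rescale to produce $\langle \min,\min\rangle$. The case $\max \in C^+$ is symmetric. If both $\min$ and $\max$ lie in $C^+$ but neither $\Mjrty$ nor $\Mnrty$ does, then a binary weighting in $W$ with positive weight on some non-projection must, after suitable superpositions, place positive weight on $\{\min,\max\}$ only; combining the derived weightings $\langle\min,\min\rangle$ and $\langle\max,\max\rangle$ (if present) with zero-sum adjustments produces $\langle\min,\max\rangle$. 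For the clones generated by $\Mjrty$, by $\Mnrty$, or by both, I would take a ternary weighting in $W$ witnessing the relevant operation in $C^+$, diagonalize pairs of coordinates to trivialize contributions outside the ternary idempotent fragment, and rescale to obtain $\langle\Mjrty,\Mjrty,\Mjrty\rangle$, $\langle\Mnrty,\Mnrty,\Mnrty\rangle$, or $\langle\Mjrty,\Mjrty,\Mnrty\rangle$.

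The main obstacle is the bookkeeping in the idempotent case. Post's lattice pins down which operations generate $C^+$, but extracting a specific canonical multimorphism from an arbitrary $\omega \in W$ requires repeatedly invoking Lemma~\ref{superp} to rewrite weighted sums of improper superpositions as proper superpositions that actually lie in $W$, while simultaneously ensuring that the unwanted weights cancel exactly. Each individual case is a direct manipulation; the length comes from the number of idempotent sub-clones and the need to keep the signs right when merging superpositions.
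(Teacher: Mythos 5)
The paper offers no proof of this statement to compare against: it is quoted verbatim from Cohen et al.~\cite{CJZ}, where it is established by a long case analysis over Post's lattice. Your skeleton (Proposition~\ref{prop:posclo} to get the positive clone, Post's lattice to classify it, Lemma~\ref{superp} to legitimise weighted sums of superpositions) matches the shape of that argument, so your proposal has to be judged on whether the deferred manipulations actually go through --- and at least one of them provably does not. The gap is in the passage from ``$f \in C^+$'' to ``the canonical weighting built from $f$ lies in $W$''. You claim that $\min \in C^+$ lets you concentrate all positive weight on $\min$ and so derive $\langle\min,\min\rangle$. Take $W = \wPol(\Gamma)$ for $\Gamma$ the set of all submodular functions on $\{0,1\}$: then $\langle\min,\max\rangle \in W$, so $\min \in C^+$, but $\langle\min,\min\rangle \notin W$, since it fails to improve the submodular function $\rr(x) = -x$ (here $2\rr(\min(0,1)) = 0 > -1 = \rr(0)+\rr(1)$). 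Which of the nine weightings is extractable depends on quantitative relations among the weights, not just on the support: a symmetrised binary idempotent weighting has the form $a\min + b\max - \tfrac{a+b}{2}(\pi_1+\pi_2)$, and you land on $\langle\min,\min\rangle$, $\langle\min,\max\rangle$ or $\langle\max,\max\rangle$ according as $a>b$, $a=b$ or $a<b$; the first case needs the improper superposition $\omega[\min,\max] = \tfrac{a-b}{2}(\min-\max)$ added to $\omega$ with the exact coefficient that cancels the weight on $\max$. Your sentence about ``combining the derived weightings $\langle\min,\min\rangle$ and $\langle\max,\max\rangle$ (if present)'' shows the case split is resting on weightings that need not be present.

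The same issue recurs in the other cases, where the mechanisms you name do not do what you need. In the non-idempotent case, ``a final superposition with $\pi_1^{(1)}$'' is the identity superposition and separates nothing; turning $-\id + a\Const_0 + b\Const_1 + c\Inv$ into one of $\langle\Const_0\rangle$, $\langle\Const_1\rangle$, $\langle\Inv\rangle$ requires superposing with $\Const_0$, $\Const_1$ and $\Inv$ themselves and cancelling one coefficient at a time. In the ternary case, the idempotent self-dual fragment of $\Oo_{\{0,1\}}^{(3)}$ contains eight operations, not just $\Mjrty$, $\Mnrty$ and the three projections, so a weighting witnessing $\Mnrty \in C^+$ can carry positive weight on operations your diagonalisation does not address. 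The content of the theorem lives precisely in these cancellation arguments, and the proposal both defers them and, where it commits to a shortcut, commits to one that fails.
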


\begin{proof}
(of Proposition~\ref{two})
Each of the operations $\min$, $\max$, $\Mjrty$ and $\Mnrty$ is idempotent and cyclic. If $\G$ admits at least one of the six multimorphisms listed in the statement of the proposition then obviously $\Pol^{+}(\G)$ has an idempotent cyclic operation.

For the other direction, let $\G$ be a finite core valued constraint language on $D = \{0, 1\}$ such that $\Pol^{+}(\G)$ has an idempotent cyclic operation. Let $\G_c$ be a rigid core of $\G$ as defined in Subsection~\ref{rigid}. By Proposition~\ref{idpolpol} we have that $\IdPol^+(\G) = \Pol^+(\G_c)$. Therefore, the weighted clone $\wPol(\G_c)$ contains a weighting which assigns positive weight to at least one operation that is not a projection. Then $\wPol(\G_c)$ contains one of the nine weightings listed in Theorem~\ref{nonzero}. Since the first three of them are not idempotent, it follows that $\G_c$, and hence $\G$, admits one of the six remaining multimorphisms, which finishes the proof.
\end{proof}

\subsection{Finite-valued languages}

\begin{theorem}[Thapper and \v{Z}ivn\'{y}~\cite{TZ}] 
Let $\G$ be a finite-valued constraint language which is a core. If $\G$ admits an idempotent cyclic weighted polymorphism of some arity $m>1$, then $\G$ is tractable. Otherwise it is NP-hard.
\end{theorem}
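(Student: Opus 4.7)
The statement is due to Thapper and \v Zivn\'y. My plan is to verify its agreement with our dichotomy conjecture and derive the hardness half directly from Theorem~\ref{taylor}, invoking the original Thapper--\v Zivn\'y result for the tractability half.

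The central claim I would prove is the following: for a finite-valued core language $\G$, if $\G$ admits an idempotent cyclic weighted polymorphism $\oo$ of arity $m>1$, then $\Pol^+(\G)$ has a Taylor operation. To see this I would pick any $f\in\supp(\oo)$; by the definition of an idempotent cyclic weighted polymorphism, $f$ is simultaneously idempotent and cyclic, and $f\in\Pol^+(\G)$ by construction of the positive clone. By Proposition~\ref{idpolpol} the idempotent algebra $\IdPol^+(\G)=\Pol^+(\G_c)$ contains this cyclic operation $f$, and Theorem~\ref{thm:BK} of Barto and Kozik then yields a Taylor operation in $\Pol^+(\G)$.

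The NP-hardness direction of the theorem is now immediate: if $\G$ admits no idempotent cyclic weighted polymorphism of arity greater than one, then by the contrapositive of the claim above $\Pol^+(\G)$ has no Taylor operation, and Theorem~\ref{taylor} provides a reduction from an NP-hard problem (for instance \textsc{One-in-Three Sat}) to $\VCSP(\G)$. For the tractability direction I would simply invoke the basic LP relaxation argument of Thapper and \v Zivn\'y, since the clone-theoretic machinery developed in this section does not on its own suffice to produce a polynomial algorithm.

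The main obstacle, were one to try to close the loop entirely on the algebraic side, would be the converse direction: passing from a cyclic idempotent operation in $\Pol^+(\G)$ back to an idempotent cyclic \emph{weighted} polymorphism of $\G$. The natural attempt is to start from a weighted polymorphism $\oo$ with $\oo(f)>0$, pass to an idempotent one via Lemma~\ref{idemp}, and symmetrize by averaging the $m$ superpositions $\oo[\pi_{\sigma^i(1)}^{(m)},\dots,\pi_{\sigma^i(m)}^{(m)}]$ over cyclic shifts $\sigma$ of $\{1,\dots,m\}$. This produces a weighted polymorphism invariant under cyclic reindexing of its arguments with the value $\oo(f)$ preserved, but the support can still contain non-cyclic operations; trimming it down to cyclic operations while staying inside the weighted clone is the step where a purely algebraic argument seems to be missing, and the full tractability direction of the dichotomy conjecture would, combined with Thapper--\v Zivn\'y, fill precisely this gap.
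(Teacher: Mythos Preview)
The theorem you are addressing is quoted from~\cite{TZ}; the paper does not prove it but instead establishes Proposition~\ref{char}, the equivalence between ``$\G$ admits an idempotent cyclic weighted polymorphism of arity $m>1$'' and ``$\Pol^+(\G)$ contains an idempotent cyclic operation of arity $m$'', in order to show that the Thapper--\v{Z}ivn\'y classification matches the dichotomy conjecture. Your proposal is really an attempt at this agreement, so that is the right comparison.

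There is a genuine logical slip in your NP-hardness derivation. Your central claim is the implication ``idempotent cyclic weighted polymorphism $\Rightarrow$ Taylor operation in $\Pol^+(\G)$''. Its contrapositive is ``no Taylor $\Rightarrow$ no idempotent cyclic weighted polymorphism'', which is not what you use. To conclude ``no idempotent cyclic weighted polymorphism $\Rightarrow$ no Taylor'' you need the \emph{converse} of your claim, i.e.\ precisely the direction you later flag as the obstacle. Without it, Theorem~\ref{taylor} does not apply and your hardness argument collapses.

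More importantly, your assessment that ``a purely algebraic argument seems to be missing'' for the converse is mistaken: the paper supplies exactly this argument in the proof of Proposition~\ref{char}. Starting from a cyclic idempotent $f\in\Pol^+(\G)$ and any $\oo$ with $\oo(f)>0$, one does symmetrize as you suggest, but then invokes the Expansion Lemma of~\cite{KolTZ} (Lemma~\ref{expansion}). The collections $\Gg$ are the cyclic-shift equivalence classes in $\Pol_m(\G)$, the target set $\Gg^*$ consists of the singleton classes (cyclic operations), and the expansion operator sends a class $\g=\{g_1,\dots,g_m\}$ to the distribution determined by the symmetrized sum $\sum_i \oo[g_i,g_{i+1},\dots,g_{i-1}]$. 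Because $f$ is cyclic, $f[g_1,\dots,g_m]$ lands in a singleton class with positive weight, so the operator is non-vanishing; validity follows from $\oo\in\wPol(\G)$. The Expansion Lemma then yields a weighted polymorphism supported on cyclic operations, and Lemma~\ref{idemp} makes it idempotent. This is the missing step your symmetrization alone could not achieve.
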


To show that our conjecture agrees with the above complexity classification we prove the following result (which holds for general-valued languages):

\begin{proposition}\label{char}
Let $\G$ be a core valued constraint language. Then $\G$ admits an idempotent cyclic weighted polymorphism of some arity $m>1$ if and only if $\Pol^{+}(\G)$ contains an idempotent cyclic operation of the same arity.
\end{proposition}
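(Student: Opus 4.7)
The easy direction is immediate. If $\omega$ is an idempotent cyclic weighted polymorphism of $\G$ of arity $m$, every $g\in\supp(\omega)$ is, by the definition of $\supp$ and of the positive clone, an idempotent cyclic $m$-ary operation in $\Pol^+(\G)$.

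For the converse, suppose $f$ is an idempotent cyclic operation of arity $m>1$ lying in $\Pol^+(\G)$. By definition of the positive clone there is some $\omega\in\wPol(\G)$ with $\omega(f)>0$, and Lemma~\ref{idemp} lets us additionally take $\omega$ to be idempotent with $f\in\supp(\omega)$. I would first cyclically symmetrize $\omega$. For each $i\in\{0,\dots,m-1\}$ set $\omega_i$ to be the proper superposition of $\omega$ with the cyclically shifted list of projections $(\pi^{(m)}_{1+i},\dots,\pi^{(m)}_{m+i})$, indices modulo $m$; since cyclic shifts send projections to projections, $\omega_i$ is a valid weighting and thus lies in $\wPol(\G)$. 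The average $\omega^{*}=\tfrac{1}{m}\sum_{i=0}^{m-1}\omega_i$ is then an idempotent weighted polymorphism of $\G$, is invariant under the cyclic shift action on $m$-ary operations, and satisfies $\omega^{*}(f)=\omega(f)>0$ precisely because $f$ is cyclic.

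The remaining, and principal, step is to eliminate every non-cyclic operation from the support of $\omega^{*}$ without losing $f$ from the support and without breaking the weighted-polymorphism property. I would carry this out by a Farkas-style argument in the spirit of the proof of Proposition~\ref{polymor}: existence of an idempotent cyclic weighted polymorphism of arity $m$ with weight at least $1$ on $f$ is exactly the feasibility of a linear system whose unknowns are the weights assigned to cyclic idempotent $m$-ary polymorphisms of $\G$ (restricted to be non-negative) together with the weights on the $m$ projections (free in sign), subject to the sum-zero condition and to the weighted-polymorphism inequalities $\sum_{g}\omega'(g)\,\rr(g(\mathbf{x_1},\dots,\mathbf{x_m}))\le 0$ for all $\rr\in\G$ and $\mathbf{x_j}\in\Feas(\rr)$. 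Assuming infeasibility for contradiction, Lemma~\ref{Farkas} produces a certificate of unsolvability which, assembled exactly as in the proof of Proposition~\ref{polymor}, yields a cost function $\rr_f\in\wRelClo(\G)$ and a rational threshold whose joint effect is to witness that no weighted polymorphism of $\G$ can assign positive weight to $f$, contradicting $f\in\Pol^+(\G)$.

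The hard part is this final step. In contrast with Proposition~\ref{polymor}, the variables of our linear program are restricted to cyclic $m$-ary operations while the projection weights remain unconstrained in sign, and the Farkas certificate must be reassembled into an honest cost function in $\wRelClo(\G)$—not merely in some auxiliary language on which $f$ is excluded for trivial reasons. The delicate point will therefore be to verify that the constructed cost function genuinely separates $f$ from the positive clone, so that the contradiction with $f\in\Pol^+(\G)$ is valid and the desired cyclic idempotent weighted polymorphism with $\omega'(f)>0$ actually exists.
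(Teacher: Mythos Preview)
Your easy direction and the cyclic symmetrization of $\omega$ to $\omega^{*}$ are both fine. The problem is the Farkas step, and it is a genuine gap, not merely a delicate verification.

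Your linear system has one variable for each cyclic idempotent $m$-ary polymorphism and one for each projection; non-cyclic non-projection operations are simply absent (forced to weight $0$). Consequently, the Farkas certificate you obtain constrains the assembled cost function $\rr_f\in\wRelClo(\G)$ only at cyclic operations and at projections: you get $\rr_f(\pi_i)=P$ for all $i$, $\rr_f(g)\ge P$ for every cyclic idempotent $g$, and $\rr_f(f)>P$. It says \emph{nothing} about the value of $\rr_f$ at non-cyclic, non-projection operations. Hence there is no obstruction to a weighted polymorphism of $\G$ that gives $f$ positive weight while also placing positive weight on non-cyclic operations where $\rr_f$ dips below $P$, balancing the contribution of $f$. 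In other words, infeasibility of your restricted system only shows ``no cyclic idempotent weighted polymorphism with $f$ in its support''; it does not show ``no weighted polymorphism with $f$ in its support'', so there is no contradiction with $f\in\Pol^{+}(\G)$. Cyclic averaging of $\rr_f$ does not help either: it preserves the values at cyclic operations and at the projection orbit, but still gives no lower bound on non-cyclic orbits. This is exactly the point at which the analogy with Proposition~\ref{polymor} breaks down: there the variables range over \emph{all} polymorphisms, so the dual constrains the cost function everywhere.

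The paper avoids this obstacle by a different mechanism. Instead of an LP/Farkas argument, it invokes the Expansion Lemma of Kolmogorov--Thapper--\v{Z}ivn\'{y}: collections are the cyclic-shift orbits in $\Pol(\G)$, and for an orbit $\{g_1,\dots,g_m\}$ one forms the (symmetrized) superposition $\nu=c\sum_j \omega[g_j,g_{j+1},\dots,g_{j-1}]$. The key observation, which has no counterpart in your outline, is that $f[g_1,\dots,g_m]$ is cyclic because $f$ is, so $\nu$ always puts positive weight on a singleton orbit; this makes the expansion operator non-vanishing. Validity follows because $\nu$ satisfies the weighted-polymorphism inequalities even when it is not a valid weighting, and its only possible negative weights sit on the orbit $\{g_1,\dots,g_m\}$ with value $-1/|\g|$. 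The Expansion Lemma then produces a weighted polymorphism supported on cyclic operations, and Lemma~\ref{idemp} is applied \emph{afterwards} to make it idempotent. If you want to repair your approach, this use of $f$ to collapse arbitrary orbits to cyclic ones is the missing ingredient.
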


One implication is strightforward: if $\G$ admits an idempotent cyclic weighted polymorphism of some arity $m>1$ then $\Pol^{+}(\G)$ contains an idempotent cyclic operation. To show the other implication we use a technique of constructing weighted polymorphism introduced in~\cite{KolTZ}. 

The construction of a new weighted polymorphism of arity $m$ is based on grouping operations in $\Oo_D^{(m)}$ into so-called \emph{collections} and working with weightings that assign the same weight to every operation in a collection. 

Let $\Gg$ be a fixed set of collections, i.e., subsets of $\Oo_D^{(m)}$, and let $\Gg^* \subseteq \Gg$ be a set of collections satisfying some desired property. An \emph{expansion operator} $\Exp$ takes a collection $\g \in \Gg$ and produces a probability distribution $\delta$ over $\Gg$. We say that $\Exp$ is \emph{valid} for a language $\G$ if, for any $\rr \in \G$ and any $\g \in \Gg$, the probability distribution $\delta = \Exp(\g)$ satisfies
$$\sum_{\h \in \Gg} \sum_{h \in \h} {\delta(\h) \over |\h|} \rr(h({\bf x_1, \dots, x_m})) \leq \sum_{g \in \g} {1\over |\g|} \rr(g({\bf x_1, \dots, x_m})),$$
for any ${\bf x_1, \dots, x_m} \in \Feas(\rr)$. We say that the operator $\Exp$ is \emph{non-vanishing} (with respect to the pair $(\Gg, \Gg^*)$) if, for any $\g \in \Gg$, there exists a sequence of collections $\g_0, \g_1, \dots , \g_r$ with $\g_0 = \g$, such that for each $i \in \{0, \dots , r-1\}$ the collection $\g_{i+1}$ is assigned a non-zero probability by $\Exp(\g_i)$, and $\g_r \in \Gg^*$.

\begin{lemma}[“Expansion Lemma”~\cite{KolTZ}]\label{expansion}
Let $\Exp$ be an expansion operator which is valid for the language $\G$ and non-vanishing with respect to $(\Gg,\Gg^*)$. If $\G$ admits a weighted polymorphism $\oo$ with $\supp(\oo) \subseteq \bigcup \Gg$, then it also admits a weighted polymorphism $\oo^*$ with $\supp(\oo^*) \subseteq \bigcup \Gg^*$.
\end{lemma}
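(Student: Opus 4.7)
The plan is to iteratively apply $\Exp$ so that the positive weight of $\oo$ gradually migrates from collections in $\Gg \setminus \Gg^*$ into collections in $\Gg^*$, all while preserving the weighted polymorphism property. We may assume $\supp(\oo)$ is non-empty, since otherwise there is nothing to prove.

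First I would represent $\oo$ as a non-negative combination of uniform-on-collection weightings. Writing $\nu_\g$ for the function assigning $1/|\g|$ to each $f \in \g$ and $0$ elsewhere, after symmetrising $\oo$ within each collection (which preserves the weighted polymorphism property, the collections in the intended applications being orbits of operations under symmetries of $\G$) we obtain $\oo = \sum_{\g \in \Gg} c_\g\,\nu_\g + \mu$ with coefficients $c_\g \geq 0$ and $\mu$ supported on projections. This is the natural setting in which $\Exp$ acts, since validity is phrased precisely in terms of the uniform averages $\sum_{g \in \g} |\g|^{-1}\, \rr(g(\mathbf{x_1},\dots,\mathbf{x_m}))$.

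Next I would define one step $T$ of the process: for every $\g \notin \Gg^*$ replace the summand $c_\g\,\nu_\g$ by $c_\g \sum_{\h \in \Gg} \Exp(\g)(\h)\,\nu_\h$, leaving the $\g \in \Gg^*$ summands untouched. By validity of $\Exp$, for every $\rr \in \G$ and every list $\mathbf{x_1},\dots,\mathbf{x_m} \in \Feas(\rr)$, the quantity $\sum_f \oo'(f)\,\rr(f(\mathbf{x_1},\dots,\mathbf{x_m}))$ does not increase when $\oo'$ is replaced by $T\oo'$. Since $\Exp(\g)$ is a probability distribution, $T$ preserves non-negativity of the collection coefficients and keeps $\mu$ on projections, so each iterate $\oo_n := T^n \oo$ is again a weighted polymorphism of $\G$ with support in $\bigcup \Gg \cup \{\text{projections}\}$.

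Then I would analyse the long-term behaviour. Total mass $\sum_\g c_\g^{(n)}$ is preserved, and mass can leave $\Gg \setminus \Gg^*$ but never enter from $\Gg^*$. The non-vanishing hypothesis says exactly that from every $\g \in \Gg \setminus \Gg^*$ some $\Exp$-path reaches $\Gg^*$ with positive probability, so the restriction of $T$ to $\Gg \setminus \Gg^*$ is a substochastic transition of spectral radius strictly less than $1$. Hence $\sum_{\g \notin \Gg^*} c_\g^{(n)} \to 0$ geometrically, and the real-valued limit $\oo^\infty := \lim_n \oo_n$ exists, is supported on $\bigcup \Gg^* \cup \{\text{projections}\}$, and is non-trivial since the total positive mass is preserved at every step. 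Being a weighted polymorphism with support in $\bigcup \Gg^* \cup \{\text{projections}\}$ is a system of rational linear (in)equalities, so the non-zero real solution $\oo^\infty$ forces, by Farkas' lemma (Lemma~\ref{Farkas}), the existence of a non-zero rational solution $\oo^*$, which is the required weighted polymorphism. The principal obstacle is the initial symmetrisation: bringing an arbitrary $\oo$ with $\supp(\oo) \subseteq \bigcup \Gg$ into the uniform-on-collection form must be done compatibly with the structure of $\Gg$; this is automatic when collections are orbits of a language-preserving group action (as for cyclic collections in Proposition~\ref{char}) but must be justified with care in the abstract setting.
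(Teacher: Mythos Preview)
The paper does not prove this lemma; it is quoted verbatim from \cite{KolTZ} and used as a black box (the only extra information is the remark, in the proof of Proposition~\ref{char}, that the construction in \cite{KolTZ} can be arranged so that a specified $f$ lies in $\supp(\oo^*)$). There is therefore nothing in the present paper to compare your argument against.

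That said, your outline is the standard absorbing-Markov-chain argument one expects here, and it is essentially correct: write the positive part of $\oo$ as a non-negative combination of the uniform measures $\nu_\g$, let the substochastic transition induced by $\Exp$ act on the coefficient vector restricted to $\Gg\setminus\Gg^*$, use non-vanishing to force spectral radius strictly below $1$, observe that mass on each $\g\in\Gg^*$ is monotone increasing and bounded so the limit exists, and finally recover a rational solution from the real one since the constraints are rational linear.

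The one soft spot is exactly the one you flag. In the abstract statement nothing forces $\oo$ to be constant on each collection, and if the collections overlap a non-negative decomposition $\sum_\g c_\g\nu_\g$ need not exist at all; without it the first step does not get off the ground. In the only application in this paper the collections are the orbits of the cyclic shift $g(x_1,\dots,x_m)\mapsto g(x_2,\dots,x_m,x_1)$, so they partition $\Pol_m(\G)$, and averaging $\oo$ over the proper superpositions $\oo[\pi_{\sigma(1)}^{(m)},\dots,\pi_{\sigma(m)}^{(m)}]$ with $\sigma$ cyclic yields a weight-symmetric element of $\wPol(\G)$; hence the obstacle disappears. For the lemma in full generality you would have to check the precise hypotheses in \cite{KolTZ}; some disjointness or orbit structure on $\Gg$ is almost certainly assumed there as well.
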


\begin{proof} 
(of Proposition~\ref{char})
In order to show the remaining implication assume that $\Pol^{+}(\G)$ contains an idempotent cyclic operation $f$ of arity $m>1$. There exists a weighted polymorphism $\oo$ of $\G$ such that $f \in \supp(\oo)$. 

We define $\sim$ to be the smalles equivalence relation on $\Oo_D^{(m)}$ such that $g \sim g'$ if $g(x_1,x_2, \dots ,x_k) = g'(x_2, \dots, x_k ,x_1)$. Observe that if $g \sim g'$ and $g \in \Pol(\G)$ then also $g' \in \Pol(\G)$. Let $\Gg$ consist of the equivalence classes of the relation $\sim$ restricted to $\Pol(\G)$, and let $\Gg^* \subseteq \Gg$ be the set of all one-element equivalence classes, i.e., each $\g \in \Gg^*$ contains a single cyclic operation.

We now define the expansion operator $\Exp$. Take an arbitrary $\g \in \Gg\setminus\Gg^*$~(for $\g\in\Gg^*$ we produce the probability distribution choosing $\g$ with probability $1$) and choose a single operation $g \in \g$. Notice that $\g = \{g_1, g_2, \ldots, g_m\}$, where $g_1 = g$ and $g_i(x_1, \ldots, x_n) = g(x_i, x_{i+1}, \ldots, x_{i-1})$ for $i \in \{2, \ldots, m\}$. Consider a weighting
$$\nu = c \cdot (\oo[g_1, g_2, \dots, g_m] + \oo[g_2, \dots, g_m, g_1] + \dots +  \oo[g_m, g_1, \dots, g_{m-1}]),$$
where $c$ is a suitable positive rational, which we define later on\footnote{Note that the definition of $\nu$ does not depend on the choice of $g$ from $\g$.}.

The weighting $\nu$ assigns a positive weight to a cyclic operation $f[g_1, g_2, \dots, g_m]$. This proves that $\nu$ is not zero-valued, and hence in the above definition $c$ can be chosen so that the sum of positive weights in $\nu$ equals $1$.  
We say that a weighting $\oo$ is \emph{weight-symmetric} if $\oo(g) = \oo(g')$ whenever $g \sim g'$. It is easy to check that $\nu$ is weight-symmertic. We define $\Exp(\g)$ to be a probability distribution $\delta$ on $\Gg$ such that
$$
\delta(\h)  = \begin{cases} |\h| \cdot \nu(h) &\mbox{if } \h \subseteq \supp(\nu), \\
0 & \mbox{otherwise,} \end{cases}$$ 
where $h$ is any of the operations in $\h$.
We have already pointed out that $\nu$ assigns a positive weight to a cyclic operation $f[g_1, g_2, \dots, g_m]$. It follows that $\Exp(\g)$ assigns non-zero probability to the one-element equivalence class $\{f[g_1, g_2, \dots, g_m]\}$. Therefore, $\Exp$ is non-vanishing. 

It remains to show that $\Exp$ is valid for $\G$. Observe that $\nu$ assigns negative weights only to the operations in $\g$.
Since it is weight-symmetric, $\nu(g) = - {1\over |\g|}$ for every $g \in \g$. The weighting $\nu$ might not be valid but it is not difficult to see that it satisfies the condition characterizing weighted polymorphisms, i.e., for any cost function $\rr \in \G$, and any list of tuples $\bf x_1, \dots, x_m \in \Feas(\rr)$, we have
\begin{align*}
\sum_{f \in \Pol_m(\G)} \nu(f) \cdot \rr(f(\mathbf{x_1, \dots, x_m})) & \leq 0, \mbox{ hence } \\
\sum_{h \in \supp(\nu)} \nu(h) \cdot \rr(h(\mathbf{x_1, \dots, x_m})) & \leq \sum_{g \in \g} {1\over |\g|} \cdot \rr(g({\bf x_1, \dots, x_m})), \mbox{ but} \\
\sum_{h \in \supp(\nu)} \nu(h) \cdot \rr(h(\mathbf{x_1, \dots, x_m})) & = \sum_{\h \in \Gg} \sum_{h \in \h} {\delta(\h) \over |\h|} \cdot \rr(h({\bf x_1, \dots, x_m})). 
\end{align*}
This proves that $\Exp$ is valid, so by Lemma~\ref{expansion} the language $\G$ admits a weighted polymorphism $\oo^*$ whose support contains only cyclic $m$-ary operations. Moreover, it follows from the proof of the Expansion Lemma in~\cite{KolTZ} that $\oo^*$ can be constructed so that $f \in \supp(\oo^*)$. By Lemma~\ref{idemp} there exists an idempotent weighted polymorphism $\oo'$ of $\G$ such that $\supp(\oo^*) \cap \IdPol(\G) \subseteq \supp(\oo')$. Its support is non-emply and contains cyclic operations only. This concludes the proof.
\end{proof}

\subsection{Conservative languages}

A valued constraint language $\G$ over a domain $D$ is called \emph{conservative} if it contains all $\{0,1\}$-valued unary cost functions on $D$.
An operation $f \in \Oo_D^{(k)}$ is \emph{conservative} if for every $x_1, \dots , x_k \in D$ we have that $f(x_1,\ldots,x_k)\in \{ x_1,\ldots,x_k\}$, and
a weighted polymorphism is \emph{conservative} if its support contains conservative operations only.

A \emph{Symmetric Tournament Pair} (\emph{STP}) is a conservative binary multimorphism $ \langle \sqcap, \sqcup \rangle$, where both operations are commutative, i.e., $\sqcap(x,y)=\sqcap(y,x)$ and $\sqcap(x,y)=\sqcap(y,x)$ for all $x,y \in D$, and moreover $\sqcap(x,y) \neq \sqcup(x,y)$ for all $x \neq y$.
A \emph{MJN} is a ternary conservative multimorphism $\langle \Mj_1,\Mj_2,\Mn_3\rangle$, such that $\Mj_1,\Mj_2$ are majority operations, and $\Mn_3$ is a minority operation.

\begin{theorem}[Kolmogorov and \v{Z}ivn\'{y}~\cite{KZ}] Let $\G$ be a conservative constraint language over a domain $D$. If $\G$ admits a conservative binary multimorphism $\langle \sqcap, \sqcup \rangle$ and a conservative ternary multimorphism $\langle \Mj_1,\Mj_2,\Mn_3\rangle$, and there is a family $M$ of two-element subsets of $D$, such that:
\begin{itemize}
\item for every $\{x,y\} \in M$, $\langle \sqcap, \sqcup \rangle$ restricted to $\{x,y\}$ is an STP,
\item for every $\{x,y\} \not \in M$, $\langle \Mj_1,\Mj_2,\Mn_3\rangle$ restricted to $\{x,y\}$ is an MJN,
\end{itemize}
then $\G$ is tractable. Otherwise it is NP-hard.
\end{theorem}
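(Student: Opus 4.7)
The plan is to derive the theorem from the algebraic dichotomy framework developed in this paper, following the template of Propositions~\ref{two} and~\ref{char}. Concretely, I would first prove the equivalence: for a conservative core language $\G$, the existence of the STP/MJN multimorphism pair with the stated family $M$ is equivalent to $\Pol^+(\G)$ containing an idempotent cyclic operation. Given this equivalence, the NP-hardness direction follows immediately from Theorem~\ref{taylor}, while the tractability direction reduces to the algorithmic ingredient of~\cite{KZ}: an LP-based procedure that on pairs $\{x,y\}\in M$ exploits the local submodularity furnished by the STP, and on pairs outside $M$ exploits the majority/minority structure of the MJN.

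The easy direction of the equivalence is immediate: the STP $\langle \sqcap, \sqcup \rangle$ provides a conservative commutative binary operation in $\Pol^+(\G)$ that is idempotent and, being binary and commutative, cyclic.

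For the hard direction, assume $\Pol^+(\G)$ contains an idempotent cyclic operation. For every two-element subset $\{x,y\} \subseteq D$ the restriction $\G[\{x,y\}]$ is conservative and inherits an idempotent cyclic operation in its positive clone, so by Proposition~\ref{two} it admits one of the six two-element multimorphisms. Conservativeness eliminates all but $\langle \min, \max \rangle$ (STP shape) and $\langle \Mjrty, \Mjrty, \Mnrty \rangle$ (MJN shape), and this dichotomy classifies each pair and thereby defines $M$. The remaining task is to globalize: produce a single pair of global multimorphisms on $D$ whose pairwise restrictions agree with these local choices. I would employ the Expansion Lemma (Lemma~\ref{expansion}) in the style of the proof of Proposition~\ref{char}: take $\Gg$ to be the collections of conservative operations of the relevant arity grouped by cyclic permutation of arguments, let $\Gg^*$ consist of those collections whose shape on every pair matches the STP/MJN designation dictated by $M$, and define an expansion operator that symmetrizes the weighted polymorphism by averaging its cyclic shifts. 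The idempotent cyclic operation furnishes the non-projection weight mass needed to keep the expansion non-vanishing at each stage.

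The main obstacle will be verifying that the expansion operator is actually non-vanishing with respect to $(\Gg, \Gg^*)$: unlike in the finite-valued setting of Proposition~\ref{char}, the target set $\Gg^*$ is carved out by a compatibility condition linking all pairs simultaneously rather than by a single local property. I expect to handle this by iteratively restricting collections — enforcing cyclicity first, then binary commutativity or ternary majority/minority shape on one pair at a time — and by repeatedly invoking Lemma~\ref{idemp} together with conservativeness, which forces the support of every intermediate weighted polymorphism to respect the two-element dynamics already classified via Proposition~\ref{two}.
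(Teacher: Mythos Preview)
The theorem you are attempting is not proved in the paper; it is quoted as the main result of~\cite{KZ}. What the paper proves is the adjacent Proposition~\ref{charcon}, the equivalence between the STP/MJN condition and the existence of an idempotent cyclic operation in $\Pol^+(\G)$. Your overall plan---establish that equivalence, then invoke Theorem~\ref{taylor} for hardness and the algorithm of~\cite{KZ} for tractability---is thus reasonable in outline but does not constitute an independent proof of the theorem: the tractability algorithm, and in the paper's treatment several structural steps of the equivalence as well, are imported from~\cite{KZ}.

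Your ``easy direction'' has a genuine gap. The hypothesis only requires $\langle \sqcap,\sqcup\rangle$ to be an STP \emph{when restricted to pairs in $M$}; on pairs outside $M$ no commutativity is assumed, so $\sqcap$ need not be globally commutative and hence need not be cyclic. The paper's argument for this direction of Proposition~\ref{charcon} instead first shows (using the unary $\{0,1\}$-valued cost functions) that on pairs outside $M$ the operations $\sqcap,\sqcup$ must act as the two projections, then builds $t(x,y,z) = ((x \sqcap y)\sqcup(x \sqcap z)) \sqcup (z \sqcap y)$ and sets $m(x,y,z)=\Mj_1(t(x,y,z),t(y,z,x),t(z,x,y))$ to obtain a global majority operation, from which a cyclic operation follows.

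For the hard direction your route diverges from the paper's. The paper does not globalize via the Expansion Lemma; it works with two-element weighted subalgebras of $(D,\wPol(\G))$, applies Theorem~\ref{nonzero} to force each such subalgebra to contain $\langle\min,\max\rangle$ or $\langle\Mjrty,\Mjrty,\Mnrty\rangle$ (Claim~\ref{max-maj}), deduces a feasibility constraint on binary cost functions (Claim~\ref{soft}), and then \emph{cites} Theorem~9 and further results of~\cite{KZ} to produce the global $\langle\sqcap,\sqcup\rangle$ and $\langle\Mj_1,\Mj_2,\Mn_3\rangle$. Your Expansion Lemma sketch, by contrast, would have to land on a weighted polymorphism of the precise multimorphism shape (equal weights on exactly two, respectively three, specified operations), and the ``enforce one pair at a time'' refinement you describe is not what Lemma~\ref{expansion} delivers: it outputs a weighted polymorphism with support contained in $\bigcup\Gg^*$, not a multimorphism, and your non-vanishing argument does not single out a specific tuple of operations.
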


Observe that every weighted polymorphism of a conservative language $\G$ is conservative. Indeed, consider a $k$-ary weighted polymorphism $\oo \in \wPol(\G)$ and take any $x_1, \ldots, x_k \in D$. Let $\rr \in \G$ be a unary cost function such that $\rr(x_i)=0$ for $i \in \{1, \ldots, k\}$ and $\rr(x)=1$ otherwise. Then 
$$\sum_{g \in \supp(\oo)} \oo(g) \cdot \rr(g(x_1, \ldots, x_k)) = \sum_{g \in \Pol_1(\G)} \oo(g) \cdot \rr(g(x_1, \ldots, x_k)) \leq 0,$$
hence for each $g \in \supp(\oo)$ we have that $\rr(g(x_1, \ldots, x_k))=0$, so $g(x_1, \ldots, x_k) \in \{ x_1,\ldots,x_k\}$. This implies that the positive clone of a conservative language is idempotent, and hence every conservative language is a core. 

We show that our conjecture agrees with the above complexity classification:

\begin{proposition}\label{charcon}
Let $\G$ be a conservative constraint language over a domain $D$. Then $\Pol^{+}(\G)$ has an idempotent cyclic operation if and only if $\G$ admits a conservative binary multimorphism $\langle \sqcap, \sqcup \rangle$ and a conservative ternary multimorphism $\langle \Mj_1,\Mj_2,\Mn_3\rangle$, and there is a family $M$ of two-element subsets of $D$, such that:
\begin{itemize}
\item for every $\{x,y\} \in M$, $\langle \sqcap, \sqcup \rangle$ restricted to $\{x,y\}$ is an STP,
\item for every $\{x,y\} \not \in M$, $\langle \Mj_1,\Mj_2,\Mn_3\rangle$ restricted to $\{x,y\}$ is an MJN.
\end{itemize}
\end{proposition}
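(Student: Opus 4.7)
The plan is to treat the two directions separately, with the harder one being ($\Rightarrow$). For ($\Leftarrow$), first note that $\Pol^+(\G)$ is conservative (as observed in the paragraph preceding the proposition), so every subset of $D$ is a subuniverse of $(D,\Pol^+(\G))$; moreover, a straightforward representative-selection argument shows that every two-element algebra in $\HS(\Pol^+(\G))$ is isomorphic to some two-element restriction $(\{x,y\},\Pol^+(\G)|_{\{x,y\}})$. Combining Taylor's theorem with Theorem~\ref{thm:BK}, $\Pol^+(\G)$ has an idempotent cyclic operation if and only if for every pair $\{x,y\}\subseteq D$ the restriction $\Pol^+(\G)|_{\{x,y\}}$ contains an operation other than a projection. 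Under the assumed conditions this is immediate: for $\{x,y\}\in M$ the restriction $\sqcap|_{\{x,y\}}$ is a commutative semilattice operation, and for $\{x,y\}\notin M$ the restriction $\Mj_1|_{\{x,y\}}$ is a majority operation, and neither is a projection.

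For ($\Rightarrow$), assume that $\Pol^+(\G)$ contains an idempotent cyclic operation $f$. For each pair $\{x,y\}\subseteq D$ consider the restricted language $\G[\{x,y\}]$, which is conservative and, by an analogue of the argument that $\Pol^+(\G)$ is conservative, a core. Since $\Pol^+(\G)$ is conservative, every operation $g$ in the support of a weighted polymorphism $\oo$ certifying $f\in\Pol^+(\G)$ restricts to an operation on $\{x,y\}$, so one may form the restricted weighting $\oo'$ on $\Pol(\G[\{x,y\}])$, which is again a valid weighted polymorphism and whose support contains $f|_{\{x,y\}}$; as no projection of arity $\geq 2$ is cyclic on a two-element set, $f|_{\{x,y\}}$ is a non-projection. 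Hence $\Pol^+(\G[\{x,y\}])$ has an idempotent cyclic operation and Proposition~\ref{two} applies, so $\G[\{x,y\}]$ admits one of the six listed multimorphisms. A direct computation using the $\{0,1\}$-valued unary cost functions on $\{x,y\}$ (which belong to $\G[\{x,y\}]$ by conservativity) shows that the multimorphism inequalities corresponding to items 1, 2, 4 and 5 are each violated, leaving only $\langle\min,\max\rangle$ (an STP) or $\langle\Mjrty,\Mjrty,\Mnrty\rangle$ (an MJN). Define $M$ to consist of those pairs for which $\G[\{x,y\}]$ admits an STP; every pair outside $M$ then necessarily admits an MJN.

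The main obstacle is to assemble these pairwise STPs and MJNs into single global multimorphisms $\langle\sqcap,\sqcup\rangle$ and $\langle\Mj_1,\Mj_2,\Mn_3\rangle$ in $\wPol(\G)$ realising the prescribed restrictions. The plan is to adapt the Farkas-style argument of Proposition~\ref{polymor}: formulate the existence of such a global $\langle\sqcap,\sqcup\rangle$ as a system of linear inequalities and equalities over the space of binary weightings of $\Pol(\G)$, encoding \textit{valid weighted polymorphism of $\G$}, \textit{conservativity}, \textit{commutativity on every pair of $M$}, and \textit{distinctness of the two components on every pair of $M$}; do the same for a ternary system encoding the MJN conditions on pairs outside $M$. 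If both systems are feasible, the required global multimorphisms exist. Otherwise Lemma~\ref{Farkas} produces a certificate of unsolvability which, combined with Lemma~\ref{superp} and the closure of $\wPol(\G)$ under superposition, addition and non-negative scaling, must be shown to contradict the existence of the pairwise STPs and MJNs already extracted. Establishing this feasibility --- equivalently, ruling out the certificate scenario --- is where the bulk of the technical effort goes, and is the step I expect to be the chief difficulty.
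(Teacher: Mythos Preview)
Your ($\Leftarrow$) direction is correct and in fact cleaner than the paper's. The paper builds an explicit majority operation: it sets $t(x,y,z)=((x\sqcap y)\sqcup(x\sqcap z))\sqcup(z\sqcap y)$ and then $m(x,y,z)=\Mj_1(t(x,y,z),t(y,z,x),t(z,x,y))$, checks that $m$ is majority on every two-element subset, and notes that a majority term yields a cyclic one. You instead invoke Taylor's criterion directly, using conservativity to reduce $\HS(\Pol^+(\G))$ to two-element restrictions. Both arguments are valid; yours is shorter, the paper's is constructive.

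Your ($\Rightarrow$) direction has a genuine gap at the assembly step, and the Farkas plan cannot be made to work as you describe it. A multimorphism $\langle\sqcap,\sqcup\rangle$ is specified by a \emph{pair of operations}, a discrete object; the conditions ``conservative'', ``commutative on each pair of $M$'', ``$\sqcap(x,y)\neq\sqcup(x,y)$'' are constraints on \emph{which} operations occur, not linear constraints on a vector of weights. There is no linear system over the space of binary weightings whose feasibility is equivalent to the existence of such a multimorphism, so Lemma~\ref{Farkas} gives you nothing to contradict. The paper does not attempt anything of this kind. It defines $M$ differently --- as the set of pairs $\{x,y\}$ for which no $\rr\in\wRelClo(\G)$ has $(x,y),(y,x)\in\Feas(\rr)$ and $\rr(x,x)+\rr(y,y)>\rr(x,y)+\rr(y,x)$ --- then proves two claims (essentially your local STP/MJN dichotomy, obtained via Theorem~\ref{nonzero} on two-element weighted subalgebras) and a third claim producing a global majority operation in $\Pol^+(\G)$ via Proposition~3.1 of~\cite{Bulatov}. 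The actual assembly of the global $\langle\sqcap,\sqcup\rangle$ and $\langle\Mj_1,\Mj_2,\Mn_3\rangle$ is then delegated wholesale to Theorem~9 and further results of Kolmogorov--\v{Z}ivn\'y~\cite{KZ}, used as black boxes. Those results are substantial combinatorial constructions, not LP duality. Your local analysis (your steps 1--5) is essentially correct and parallels the paper's Claims~1--2, but you should replace the speculative Farkas assembly by the appropriate citations to~\cite{KZ} and~\cite{Bulatov}, and adopt the cost-function definition of $M$ so that those theorems apply.
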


\begin{proof}
Let $\G'$ be the language $\G$ together with all $\{0,\infty\}$-valued unary cost functions on $D$. For every weighted polymorphism $\oo \in \wPol(\G)$ there is a corresponding weighted polymorphism of $\G'$, which is equal to $\oo$ on the conservative operations. Therefore $\Pol^+(\G')=\Pol^+(\G)$, and $\G$ admits a conservative multimorphism $\langle f_1, \ldots, f_k \rangle$ if and only if $\G'$ does.
Now let $\rr : D \rightarrow \overline{\Q}$ be any general-valued unary cost function. Observe that $\rr \in \wRelClo(\G')$. It follows that without loss of generality we can assume that $\G$ contains all general-valued unary cost functions. We do so in the rest of the proof. Observe that every polymorphism of such language is conservative.

Assume that $\G$ admits the two conservative multimorphisms $\langle \sqcap, \sqcup \rangle$ and $\langle \Mj_1,\Mj_2,\Mn_3\rangle$ described in the statement of the proposition. There are only four idempotent operations on a two-element domain $\{x,y\}$, namely: $\max$, $\min$, $\pi_1$ and $\pi_2$ (we assume that $\{x,y\}=\{0,1\}$). Each of the operations $\sqcap$, $\sqcup$ restricted to any two-element subset $\{x,y\}$ of $D$ must be equal to one of those four. Therefore it is not difficult to prove, using $\{0,1\}$-valued unary cost functions, that for every two-element subset $\{x,y\}$ of $D$:
\begin{itemize}
\item either $\langle \sqcap, \sqcup \rangle$ restricted to $\{x,y\}$ is an STP,
\item or $\sqcap$ restricted to $\{x,y\}$ is equal to $\pi_1$ and $\sqcup$ restricted to $\{x,y\}$ is equal to $\pi_2$ (possibly the other way round).
\end{itemize}
Let $M'$ be the set of those two-element subsets $\{x,y\}$ of $D$, for which $\langle \sqcap, \sqcup \rangle$ restricted to $\{x,y\}$ is an STP. Obviously $M \subseteq M'$.

Let $t(x,y,z) = ((x \sqcap y)\sqcup(x \sqcap z)) \sqcup (z \sqcap y)$. 
For every $\{x,y\} \in M'$ we have that $t$ restricted to $\{x,y\}$ is the majority operation. For every other two-element subset $\{x,y\}$ of $D$ the operation $t$ restricted to $\{x,y\}$ is equal to $\pi_2$ or $\pi_3$.
Now let us define $m$ to be 
$$m(x,y,z)=\Mj_1(t(x,y,z),t(y,z,x),t(z,x,y)).$$
Since the operation $\Mj_1$ is idempotent, for every $\{x,y\} \in M'$ the operation $m$ restricted to $\{x,y\}$ is the majority operation. 
Moreover, if $\{x,y\}$ does not belong to $M'$ then $m$ restricted to $\{x,y\}$ is equal to $\Mj_1$ (with permuted arguments). But $\Mj_1$ restricted to $\{x,y\}$ is the majority operation. Therefore, $m$ is a majority operation on the whole domain~$D$. If there is a majority operation in the idempotent clone $\Pol^+(\G)$ then there is also an idempotent cyclic operation. This finishes the proof of the right-to-left implication.

The proof of the other implication consists of a sequence of claims and heavily relies on the results of~\cite{KZ}. 

Assume that $\Pol^{+}(\G)$ has an idempotent cyclic operation.
Let $M$ be a set of all two-element subsets $\{x,y\}$ of $D$ for which there exists no binary cost function $\rr \in \wRelClo(\G)$ such that
$$(x,y),(y,x) \in \Feas(\rr), \text{ and } \rr(x,x)+\rr(y,y)>\rr(x,y)+\rr(y,x).$$
We prove that $\G$ admits a conservative binary multimorphism $\langle \sqcap, \sqcup \rangle$ and a conservative ternary multimorphism $\langle \Mj_1,\Mj_2,\Mn_3\rangle$ such that:
\begin{itemize}
\item for every $\{x,y\} \in M$, $\langle \sqcap, \sqcup \rangle$ restricted to $\{x,y\}$ is an STP,
\item for every $\{x,y\} \not \in M$, $\langle \Mj_1,\Mj_2,\Mn_3\rangle$ restricted to $\{x,y\}$ is an MJN.
\end{itemize}

Consider the weighted algebra $(D, \wPol(\G))$. Observe that every two-element subset $\{x,y\} \subseteq D$ is a subuniverse of $D$ and let $\B$ be the subalgebra with universe $B = \{x,y\}$.

\begin{claim}\label{max-maj}
Every two-element weighted subalgebra $\B$ of $(D, \wPol(\G))$ contains the weighting $ \langle \min,\max \rangle$ or $\langle \Mjrty, \Mjrty, \Mnrty \rangle$ (we assume that $B=\{0,1\}$).
\end{claim}

\begin{proof}
The weighted algebra $(D, \wPol(\G))$ contains a weighting which assigns a positive weight to an idempotent cyclic operation. Therefore its weighted subalgebra $\B$ contains a weighting which assigns a positive weight to at least one operation that is not a projection, and hence it contains one of the nine weightings listed in Theorem~\ref{nonzero}. 
The first three of them are not idempotent.
Moreover, for each of the weightings: $ \langle \min, \min \rangle$, $ \langle \max, \max \rangle$, $ \langle \Mjrty, \Mjrty, \Mjrty \rangle$, $ \langle \Mnrty, \Mnrty, \Mnrty \rangle$ it is easy to find a $\{0,1\}$-valued unary cost function that is not improved by it. We conclude that $\B$ contains the weighting $ \langle \min,\max \rangle$ or $\langle \Mjrty, \Mjrty, \Mnrty \rangle$, which finishes the proof of Claim~\ref{max-maj}.
\end{proof}

\begin{claim}\label{soft}
Let $\{x,y\}$ be a two-element subset of $D$. There exists no binary cost function $\rr \in \wRelClo(\G)$ such that
$$(x,y),(y,x) \in \Feas(\rr), \text{ and } \rr(x,x)+\rr(y,y)>\rr(x,y)+\rr(y,x),$$ and at least one of the pairs $(x,x)$, $(y,y)$ belong to $\Feas(\rr)$.
\end{claim}

\begin{proof}
Consider the weighted subalgebra $\B$ of $(D, \wPol(\G)$ with the universe $B = \{x,y\}$. Assume that $\{x,y\}=\{0,1\}$. By Claim~\ref{max-maj} the weighted subalgebra $\B$ contains the weighting $ \langle \min,\max \rangle$ or $\langle \Mjrty, \Mjrty, \Mnrty \rangle$.

Suppose that $\B$ contains the weighting $\langle \Mjrty, \Mjrty, \Mnrty \rangle$ and let $\oo$ be the weighted polymorphims of $\G$ that induces $\langle \Mjrty, \Mjrty, \Mnrty \rangle$. Let $\rr$ be a binary cost function like in the statement of the claim. Without loss of generality assume that $(x,x) \in \Feas(\rr)$. Then
\begin{align*}
&\sum_{g \in \Pol_3(\G)} \oo(g) \cdot \rr(g((x,y),(y,x),(x,x))) = {1\over 3}\big( 2\rr(\Mjrty((x,y),(y,x),(x,x))) + \\ &+ \rr(\Mnrty((x,y),(y,x),(x,x))) -\rr(x,y) -\rr(y,x) - \rr(x,x)\big) = \\ 
&= {1\over 3}\big(2\rr(x,x) + \rr(y,y) -\rr(x,y) -\rr(y,x) - \rr(x,x)\big) = \\ &= {1\over 3}\big(\rr(x,x) + \rr(y,y) -\rr(x,y) -\rr(y,x)\big)>0.
\end{align*}
It follows that $\rr$ is not improved by $\oo$, so $\rr \not \in \wRelClo(\G)$.
Similarly we show that if $\B$ contains the weighting $\langle \min,\max \rangle$ and $\oo$ is the weighted polymorphims of $\G$ that induces $\langle \min,\max \rangle$, then $\rr$ is not improved by $\oo$.
\end{proof}

Claim~\ref{soft} together with Theorem 9 of~\cite{KZ} implies that $\G$ admits a conservative binary multimorphism $\langle \sqcap, \sqcup \rangle$ such that:
\begin{itemize}
\item for every $\{x,y\} \in M$, $\langle \sqcap, \sqcup \rangle$ restricted to $\{x,y\}$ is an STP,
\item if $\{x,y\} \not \in M$ then $\sqcap$ restricted to $\{x,y\}$ is equal to $\pi_1$ and $\sqcup$ restricted to $\{x,y\}$ is equal to $\pi_2$.
\end{itemize}

\begin{claim}\label{majority}
There exists an operation $m$ in $\Pol^+(\G)$ which is a majority operation.
\end{claim}

\begin{proof}
First we prove that there exists an operation $\Mj$ in $\Pol^+(\G)$ such that for every $\{x,y\} \not \in M$ the operation $\Mj$ restricted to $\{x,y\}$ is the majority operation.
To this end, take any $\{x,y\} \not \in M$. By the definition of $M$ there exists a binary cost function $\rr \in \wRelClo(\G)$ such that
$$(x,y),(y,x) \in \Feas(\rr), \text{ and } \rr(x,x)+\rr(y,y)>\rr(x,y)+\rr(y,x).$$ Moreover, by Claim~\ref{soft} none of the pairs $(x,x)$, $(y,y)$ belongs to $\Feas(\rr)$. Therefore:
\begin{enumerate}
\item\label{i1} there is no operation in $\Pol^+(\G)$ that restricted to $\{x,y\}$ is the $\max$ or $\min$ operation (such an operation would not even be a polymorphism of $\rr$), and
\item\label{i2} it follows from Claim~\ref{max-maj} that there exists an operation $f$ in $\Pol^+(\G)$ such that $f$ restricted to $\{x,y\}$ is the majority operation.
\end{enumerate}
By Proposition 3.1 of~\cite{Bulatov} if follows from the conditions~\ref{i1} and~\ref{i2} above that there exists an operation $\Mj$ in $\Pol^+(\G)$ such that for every $\{x,y\} \not \in M$ the operation $\Mj$ restricted to $\{x,y\}$ is the majority operation.

Let $t(x,y,z) = ((x \sqcap y)\sqcup(x \sqcap z)) \sqcup (z \sqcap y)$. 
For every $\{x,y\} \in M$ we have that $t$ restricted to $\{x,y\}$ is the majority operation. For every other two-element subset $\{x,y\}$ of $D$ the operation $t$ restricted to $\{x,y\}$ is equal to $\pi_3$.
Now let us define $m$ to be 
$$m(x,y,z)=\Mj(t(x,y,z),t(y,z,x),t(z,x,y)).$$
Since $\Mj$ is idempotent, for every $\{x,y\} \in M$ the operation $m$ restricted to $\{x,y\}$ is the majority operation. 
Moreover, if $\{x,y\}$ does not belong to $M$ then $m$ restricted to $\{x,y\}$ is equal to $\Mj$ (with permuted arguments). But $\Mj$ restricted to $\{x,y\}$ is the majority operation. Therefore, $m$ is a majority operation on the whole domain~$D$.
\end{proof}

By~\cite{KZ} if follows from Claims~\ref{soft} and~\ref{majority} that $\G$ admits a conservative ternary multimorphism $\langle \Mj_1,\Mj_2,\Mn_3\rangle$ such that for every $\{x,y\} \not \in M$, $\langle \Mj_1,\Mj_2,\Mn_3\rangle$ restricted to $\{x,y\}$ is an MJN, which finishes the proof of Proposition~\ref{charcon}.
\end{proof}

\subsection{Infinite Constraint Languages.}

We finish this section by showing that a variation of Theorem~\ref{taylor} holds for infinite constraint languages. 
For that, we need to introduce \emph{fractional polymorphisms} which correspond to weighted polymorphisms that can take real values.

An $m$-ary \emph{fractional operation} $\omega$ on $D$ is a probability distribution on $\Oo_D^{(m)}$. 
As for weightings, the set of operations to which a fractional operation $\omega$ assigns a positive probability is called the \emph{support} of $\omega$ and denoted $\supp(\omega)$.

\begin{definition} 
An $m$-ary fractional operation $\omega$ on $D$ is a \emph{fractional polymorphism} of a cost function $\rr$ if, for any list of $r$-tuples $\bf x_1, \dots, x_m \in \Feas(\rr)$, we have
$$\sum_{g \in \supp(\omega)} \omega(g) \rr(g({\bf x_1},\dots,{\bf x_m})) \leq {1\over m} (\rr({\bf x_1})+\dots+\rr({\bf x_m})).$$
\end{definition}

For a constraint language $\G$ we denote by $\fPol(\Gamma)$ the set of those fractional operations that are fractional polymorphisms of all cost functions $\rr \in \Gamma$.
Let $\fPol^+(\G) = \{g \in \Oo_D \ | \ g \in \supp(\omega), \ \omega \in \fPol(\Gamma)\}$. 
It is easy to see that $\fPol^+(\G)$ is a clone (the proof is similar to that of Proposition~\ref{prop:posclo}).

\begin{proposition}\label{prop:positive}
Let $\Gamma$ be a finite constraint language. Then $\fPol^+(\G) = \Pol^+(\G)$.
\end{proposition}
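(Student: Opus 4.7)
The plan is to prove the two inclusions $\Pol^+(\G) \subseteq \fPol^+(\G)$ and $\fPol^+(\G) \subseteq \Pol^+(\G)$ separately. The first direction is a direct translation between weightings and fractional polymorphisms; the second requires an LP-style rationality argument and is where the assumption that $\G$ is finite really bites.

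For $\Pol^+(\G) \subseteq \fPol^+(\G)$, first observe that every $m$-ary projection lies in $\fPol^+(\G)$: the uniform distribution $\nu$ with $\nu(\pi_i^{(m)}) = 1/m$ is a fractional polymorphism of every language, since the defining inequality holds with equality. For a non-projection $g \in \Pol^+(\G)$, pick a witness $\oo \in \wPol(\G)$ with $\oo(g) > 0$ and, for a sufficiently small positive rational $\varepsilon$, set $\oo^* = \varepsilon \oo + \nu$. Smallness of $\varepsilon$ guarantees $\oo^*(\pi_i^{(m)}) = \varepsilon \oo(\pi_i^{(m)}) + 1/m \geq 0$, while for non-projections $\oo^*$ is already non-negative; the weights sum to $0 + 1 = 1$, so $\oo^*$ is a probability distribution. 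The computation
$$\sum_f \oo^*(f)\,\rr(f(\mathbf{x})) = \varepsilon \sum_f \oo(f)\,\rr(f(\mathbf{x})) + \frac{1}{m}\sum_{i=1}^m \rr(\mathbf{x}_i) \leq \frac{1}{m}\sum_{i=1}^m \rr(\mathbf{x}_i)$$
confirms that $\oo^*$ is a fractional polymorphism of $\G$, and $\oo^*(g) = \varepsilon \oo(g) > 0$.

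For the reverse inclusion, let $g \in \fPol^+(\G)$ be a non-projection (projections are immediate). Finiteness of $\G$ is essential here: the set $P_m$ of $m$-ary fractional polymorphisms of $\G$ is a rational polytope inside $\R^{|\Oo_D^{(m)}|}$, cut out by non-negativity, the normalization $\sum_f \oo(f) = 1$, and the finitely many rational linear inequalities $\sum_f \oo(f)\,\rr(f(\mathbf{x})) \leq \frac{1}{m}\sum_i \rr(\mathbf{x}_i)$ indexed by $\rr \in \G$ and tuples in $(\Feas(\rr))^m$. (Any $f$ receiving positive mass must lie in $\Pol(\G)$, since otherwise some such inequality would have an infinite left-hand side and a finite right-hand side.) Because some point of $P_m$ has strictly positive $g$-coordinate, maximizing the linear functional $\oo \mapsto \oo(g)$ over $P_m$ attains its optimum at a rational vertex, producing a rational $\oo' \in P_m$ with $\oo'(g) > 0$.

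Finally, I convert $\oo'$ into a weighted polymorphism by subtracting off the uniform projection measure: define $\oo^w(f) = \oo'(f)$ for non-projection $f$ and $\oo^w(\pi_i^{(m)}) = \oo'(\pi_i^{(m)}) - 1/m$. Then $\oo^w$ is rational-valued, its weights sum to $1 - 1 = 0$, negative values appear only on projections, and
$$\sum_f \oo^w(f)\,\rr(f(\mathbf{x})) = \sum_f \oo'(f)\,\rr(f(\mathbf{x})) - \frac{1}{m}\sum_i \rr(\mathbf{x}_i) \leq 0,$$
so $\oo^w \in \wPol(\G)$ with $\oo^w(g) = \oo'(g) > 0$, whence $g \in \Pol^+(\G)$. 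The only delicate step in the whole argument is the extraction of a \emph{rational} witness from a possibly irrational fractional polymorphism; this is precisely where the finiteness of $\G$ is used, via the rationality of the polytope $P_m$.
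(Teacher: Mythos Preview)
Your proof is correct and follows the same approach as the paper's: both directions hinge on the observation that membership in $\fPol^+(\G)$ for a fixed $g$ is witnessed by a feasible point of a rational linear program, so a rational witness exists and can be converted to an element of $\wPol(\G)$. The paper dispatches the argument in three lines by citing the standard fact that rational LPs have rational optima, whereas you spell out the polytope, the maximization, and the explicit translation $\oo^w = \oo' - \nu$ between fractional and weighted polymorphisms; this is the same idea with the details unpacked.
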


\begin{proof}
Obviously $\Pol^+(\G) \subseteq \fPol^+(\G)$. Let $f \in \fPol^+(\G)$. 
This can be equivalently expressed by saying that some LP with rational coefficients has a solution. 
It is well known that every LP with rational coefficients has an optimal solution with rational coefficients (see e.g.~\cite{Schrijver:1986:TLI:17634}). 
It follows that $f \in \Pol^+(\G)$.
\end{proof}

\begin{proposition}
There exists an infinite valued constraint language $\Gamma$ such that $\Pol^+(\G) \varsubsetneq \fPol^+(\G)$. 
\end{proposition}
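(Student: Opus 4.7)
The plan is to exploit the single place where the argument for Proposition~\ref{prop:positive} uses finiteness of $\Gamma$: for finite $\Gamma$, the LP defining a weighted polymorphism with a prescribed operation in its support has finitely many rational-coefficient constraints, so real feasibility forces rational feasibility. Once $\Gamma$ is infinite this LP becomes semi-infinite, and it is standard linear-programming folklore that such systems with rational coefficients can have real feasible points but no rational ones, as infinitely many rational hyperplanes can intersect in a single irrational point. The goal is therefore to build an infinite rational-valued $\Gamma$ whose cone of weighted polymorphisms carrying a prescribed non-projection operation $f$ in the support is non-empty over $\mathbb{R}$ but empty over $\mathbb{Q}$.

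Concretely, I would fix an irrational $\alpha \in (0,1)$, pick a finite domain $D$, and select two binary operations $f, g$, neither a projection, that are to carry positive weight. For each positive rational $q$ I would construct a single rational-valued cost function $\rr_q$ whose weighted polymorphism inequality, applied to a distinguished pair of tuples in its feasibility relation, reduces to the scalar constraint $\omega(f) - q \cdot \omega(g) \leq 0$, and dually an $\rr_q'$ giving $\omega(f) - q \cdot \omega(g) \geq 0$. The construction sets $\rr_q = \infty$ off a small feasibility set and tunes its finitely many finite values so that the coefficients of $\omega(f), \omega(g), \omega(\pi_1^{(2)}), \omega(\pi_2^{(2)})$ in the resulting inequality are exactly $1, -q, 0, 0$. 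Taking $\Gamma = \{\rr_q : q \in \mathbb{Q}, q > \alpha\} \cup \{\rr_q' : q \in \mathbb{Q}, q < \alpha\}$, any binary weighting $\omega \in \wPol(\Gamma)$ with $\omega(f)>0$ is forced to satisfy $\omega(f)/\omega(g) = \alpha$, which rules out a rational witness, so $f \notin \Pol^+(\Gamma)$. On the other hand, the real weighting with $\omega(f)/\omega(g) = \alpha$ satisfies every $\rr_q, \rr_q'$ and, after the standard renormalisation to a probability distribution, gives a fractional polymorphism of $\Gamma$ with $f$ in its support, witnessing $f \in \fPol^+(\Gamma)$.

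The main obstacle is the faithful realisation of each $\rr_q$: the polymorphism inequality must collapse to exactly the targeted scalar constraint, without additional by-product inequalities that would either force $\omega(f) = 0$ outright or remove $f$ from $\Pol(\Gamma)$. This is handled by keeping $\Feas(\rr_q)$ small enough that only the distinguished pair of tuples produces a non-trivial inequality, and by choosing $D$ together with $f, g$ so that the four operations $f, g, \pi_1^{(2)}, \pi_2^{(2)}$ send the chosen tuples to four distinct images, giving the independent rational degrees of freedom required to engineer any slope $q$. A final routine clone-theoretic adjustment of $\Gamma$ (adding a few crisp cost functions if necessary) ensures that no weighted polymorphism with support outside $\{f, g, \pi_1^{(2)}, \pi_2^{(2)}\}$ provides an alternative route for $f$ into $\Pol^+(\Gamma)$, completing the strict inclusion.
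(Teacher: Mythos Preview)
The paper does not prove this proposition at all: it simply points to an example in~\cite{TZ}. So there is no ``paper's own proof'' to compare against beyond that citation. Your attempt is therefore more than the paper provides, and the high-level mechanism you identify --- forcing an irrational ratio between two weights via infinitely many rational linear inequalities, so that the semi-infinite LP is real-feasible but not rational-feasible --- is exactly the right idea, and is indeed the mechanism behind the cited example.

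That said, your sketch has real gaps that are not as ``routine'' as you suggest. The main one is control over $\Pol(\Gamma)$. The weighted-polymorphism inequality for $\rr_q$ at your distinguished pair $(\mathbf{x_1},\mathbf{x_2})$ sums over \emph{every} binary $h\in\Pol(\Gamma)$, not just $f,g,\pi_1,\pi_2$; for it to collapse to $\omega(f)-q\,\omega(g)\le 0$ you need every other binary polymorphism $h$ to satisfy $\rr_q(h(\mathbf{x_1},\mathbf{x_2}))=0$. Arranging this requires you to pin down the entire binary part of the clone generated by $f$ and $g$ (and by whatever crisp relations you add), and to check that no composite term $f[f,g]$, $g[f,\pi_1]$, etc.\ lands on the ``wrong'' value at $(\mathbf{x_1},\mathbf{x_2})$. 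That is a genuine combinatorial design problem, not a throwaway adjustment. A second gap is closure of $\Feas(\rr_q)$: once you insist that $f(\mathbf{x_1},\mathbf{x_2})$ and $g(\mathbf{x_1},\mathbf{x_2})$ are new tuples distinct from $\mathbf{x_1},\mathbf{x_2}$, the feasibility relation must contain all images of \emph{all} pairs under $f$ and $g$, and each such pair yields its own inequality that you must verify is either trivial or compatible with the intended real solution. Finally, you need to check that the real probability distribution supported on $\{f,g\}$ with ratio $\alpha$ actually satisfies the fractional-polymorphism inequality at every pair of feasible tuples of every $\rr_q$ and $\rr_q'$, not just the distinguished one.

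None of this is fatal to the approach, but a complete construction requires concrete choices of $D$, $f$, $g$, and the $\rr_q$, together with explicit verification of the clone and the side inequalities; the example in~\cite{TZ} does precisely this work.
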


An example of such language is given in~\cite{TZ}.

To deal with infinite languages we slightly modify the notion of a core. 
We say that a valued constraint language $\G$ is a \emph{core} if all operations in $\fPol^+_1(\G)$ are bijective. 
It follows from Proposition~\ref{prop:positive} that for finite languages this definition coincides with the old one. 
Moreover, the following proposition states that all positively weighted unary polymorphisms can be captured in a finite part of $\G$.

\begin{proposition}\label{prop:fcore}
If a valued constraint language $\G$ is a core then there exists a finite language $\G' \subseteq \G$ such that $\fPol_1^+(\G) = \fPol_1^+(\G')$.
\end{proposition}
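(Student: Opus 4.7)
The plan is to use the fact that, since $\G$ is a core, $\fPol_1^+(\G)$ admits a clean structural description. An easy adaptation of the proof of Proposition~\ref{unary} to fractional polymorphisms shows that $\fPol_1^+(\G)$ coincides with the automorphism group $\Aut(\G) := \{\sigma \in \Sym(D) : \rr \circ \sigma = \rr \text{ for all } \rr \in \G\}$. Indeed, each $\sigma \in \Aut(\G)$ gives a Dirac fractional polymorphism; conversely, for any $\omega \in \fPol_1(\G)$ and any $\mathbf{x}$ at which $\rr$ attains its minimum, the fractional polymorphism inequality must hold with equality, forcing $\rr(g(\mathbf{x})) = \rr(\mathbf{x})$ for every $g \in \supp(\omega)$, which (together with bijectivity of $g$ coming from the core assumption) lifts inductively to $\rr \circ g = \rr$. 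Hence $\fPol_1(\G)$ is the polytope $\mathrm{conv}\{\delta_\sigma : \sigma \in \Aut(\G)\}$, which has only finitely many vertices.

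The finite witness $\G'$ is then assembled in three parts. First, because $\Aut(\G) = \bigcap_{\rr \in \G} \Aut(\rr)$ is a decreasing intersection of subgroups of the finite group $\Sym(D)$, it stabilizes on some finite $\G_1 \subseteq \G$ with $\Aut(\G_1) = \Aut(\G)$. Second, for each bijection $\sigma \in \Sym(D) \setminus \Aut(\G)$ I pick some $\rr_\sigma \in \G$ with $\rr_\sigma \circ \sigma \neq \rr_\sigma$; the same minimum argument as above then shows $\sigma \notin \fPol_1^+(\{\rr_\sigma\})$. Third, for each non-bijective $f \in \Oo_D^{(1)}$, since $\G$ is a core we have $f \notin \fPol_1^+(\G)$, and I must produce a finite $\G_f \subseteq \G$ with $f \notin \fPol_1^+(\G_f)$. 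This is a Farkas-style question for a finite-dimensional linear program in the variable $\omega \in \mathbb{R}^{|\Oo_D^{(1)}|}$: the condition $f \in \fPol_1^+(\G)$ is the feasibility of $\omega \geq 0$, $\omega(f) > 0$, and $V_{\rr, \mathbf{x}} \cdot \omega \leq 0$ for every $(\rr, \mathbf{x})$ with $\rr \in \G$ and $\mathbf{x} \in \Feas(\rr)$, where $V_{\rr, \mathbf{x}}(g) := \rr(g(\mathbf{x})) - \rr(\mathbf{x})$. Infeasibility produces nonnegative coefficients $\alpha_1, \dots, \alpha_k$ on finitely many constraint vectors satisfying $\sum_i \alpha_i V_{\rr_i, \mathbf{x}_i} \geq e_f$ componentwise; taking $\G_f = \{\rr_1, \dots, \rr_k\}$ then rules out $f$, because any $\omega \in \fPol_1(\G_f)$ with $\omega(f) > 0$ would give the contradiction $0 \geq \sum_i \alpha_i (\omega \cdot V_{\rr_i, \mathbf{x}_i}) \geq \omega(f) > 0$.

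Setting $\G' = \G_1 \cup \{\rr_\sigma : \sigma \in \Sym(D) \setminus \Aut(\G)\} \cup \bigcup_{f \text{ non-bijective}} \G_f$ yields a finite union of finite subsets of $\G$. By construction $\Aut(\G') = \Aut(\G)$, no bijection outside $\Aut(\G)$ lies in $\fPol_1^+(\G')$, and no non-bijective operation lies in $\fPol_1^+(\G')$, so $\G'$ is itself a core with $\fPol_1^+(\G') = \Aut(\G') = \Aut(\G) = \fPol_1^+(\G)$. The main technical obstacle is the Farkas step of the third part: the convex cone generated by the $V_{\rr, \mathbf{x}}$ in $\mathbb{R}^{|\Oo_D^{(1)}|}$ is not a priori closed, so extracting a genuine finite dual certificate (rather than merely an approximating sequence) takes care and relies on the fact that the variable space is finite-dimensional and the $V_{\rr, \mathbf{x}}$ are rational vectors.
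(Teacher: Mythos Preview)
Your opening characterisation $\fPol_1^+(\G)=\Aut(\G)=:B$ is exactly what the paper uses, and your Part~1 (a finite $\G_1$ with $\Aut(\G_1)=B$) is fine. But two things go wrong afterwards.

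\textbf{Part 2 is incorrect.} You claim that $\rr_\sigma\circ\sigma\neq\rr_\sigma$ forces $\sigma\notin\fPol_1^+(\{\rr_\sigma\})$ ``by the same minimum argument''. That argument needs every operation in the support to be bijective, and $\{\rr_\sigma\}$ is typically \emph{not} a core. Concretely, take $D=\{0,1,2\}$, $\rr(i)=i$, $\sigma=(1\,2)$. Then $\rr\circ\sigma\neq\rr$, yet $\tfrac12\delta_\sigma+\tfrac12\delta_{c_0}$ is a unary fractional polymorphism of $\{\rr\}$ (check the three constraints), so $\sigma\in\fPol_1^+(\{\rr\})$. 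Fortunately this part is redundant: once Part~3 makes $\G'$ a core, the characterisation plus Part~1 already give $\fPol_1^+(\G')=\Aut(\G')=\Aut(\G_1)=B$ without ever appealing to the $\rr_\sigma$'s.

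\textbf{Part 3 has a real gap.} For a semi-infinite LP in finitely many variables, infeasibility does \emph{not} in general produce a finite dual certificate; the cone generated by the $V_{\rr,\mathbf{x}}$ can fail to be closed even with rational generators in a finite-dimensional space (e.g.\ the cone generated by $(0,-1,n)$, $n\ge 1$, in $\mathbb R^3$). So ``finite-dimensionality plus rationality'' does not rescue the step, and you have not shown that each bad $f$ is excluded by some finite $\G_f$.

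The paper avoids this entirely. After the same characterisation $\fPol_1^+(\G)=B$, it observes that operations in $B$ satisfy $\rr\circ g=\rr$ for every $\rr\in\G$, so for \emph{any} $\G'\subseteq\G$ one may delete the $B$-mass from a fractional polymorphism of $\G'$ and renormalise. Thus ``$\G'$ admits $\omega$ with some $g\notin B$ in $\supp(\omega)$'' implies ``$\G'$ admits $\omega'$ with $\supp(\omega')\subseteq\Oo_D^{(1)}\setminus B$''. The latter is a closed condition on the compact simplex of fractional operations, so Lemma~\ref{lem:compact} (compactness) applies directly: if every finite $\G'$ admitted such an $\omega'$, so would $\G$, contradicting $\fPol_1^+(\G)=B$. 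This yields a \emph{single} finite $\G'$ with $\fPol_1^+(\G')=B$, bypassing any per-operation Farkas argument.
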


We will use the following lemma (Lemma 7 from~\cite{TZ}), which is an immediate consequence of the fact that the space of $m$-ary fractional operations on a fixed domain $D$ is compact.

\begin{lemma}\label{lem:compact}
Let $\G$ be a valued constraint language and let $O \subseteq \Oo_D^{(m)}$. 
If for every finite $\G' \subseteq \G$ there exists a fractional polymorphism with support in $O$, then $\G$ has a fractional polymorphism with support in $O$.
\end{lemma}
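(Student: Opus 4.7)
The plan is a standard compactness argument on the probability simplex. First I would identify the ambient space: the set of $m$-ary fractional operations on $D$ sits naturally inside $\R^{|\Oo_D^{(m)}|}$ as the standard probability simplex $\Delta$, which is closed and bounded, hence compact (since $D$ is finite, $|\Oo_D^{(m)}|$ is finite). The support condition $\supp(\omega) \subseteq O$ cuts out the face $\Delta_O \subseteq \Delta$ consisting of those distributions with $\omega(g) = 0$ for every $g \notin O$; this is a closed subset of $\Delta$, hence compact.

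Next I would observe that for any fixed cost function $\rr \in \G$ and fixed list of feasible tuples $\mathbf{x_1}, \dots, \mathbf{x_m} \in \Feas(\rr)$, the defining inequality
\[
\sum_{g \in \Oo_D^{(m)}} \omega(g)\, \rr(g(\mathbf{x_1}, \dots, \mathbf{x_m})) \;\leq\; \tfrac{1}{m}(\rr(\mathbf{x_1}) + \dots + \rr(\mathbf{x_m}))
\]
is a single closed linear half-space condition on $\omega \in \Delta_O$ (here $\rr$ is finite on feasible tuples, and the contribution of any $g$ with $g(\mathbf{x_1}, \dots, \mathbf{x_m}) \notin \Feas(\rr)$ is either irrelevant, when $\omega(g)=0$, or forces the left-hand side to be $+\infty$, which is still a closed condition). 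Hence for each finite $\G' \subseteq \G$, the set
\[
K_{\G'} \;=\; \{\omega \in \Delta_O : \omega \text{ is a fractional polymorphism of every } \rr \in \G'\}
\]
is an intersection of closed subsets of the compact set $\Delta_O$, and is therefore compact.

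Then I would verify the finite intersection property for the family $\{K_{\G'} : \G' \subseteq \G \text{ finite}\}$. Given finitely many finite subsets $\G'_1, \dots, \G'_n \subseteq \G$, their union $\G'_1 \cup \dots \cup \G'_n$ is again a finite subset of $\G$, and clearly
\[
K_{\G'_1 \cup \dots \cup \G'_n} \;=\; K_{\G'_1} \cap \dots \cap K_{\G'_n}.
\]
By hypothesis $K_{\G'_1 \cup \dots \cup \G'_n}$ is nonempty, so every finite intersection from the family is nonempty. Compactness of $\Delta_O$ then yields $\bigcap_{\G' \text{ finite}} K_{\G'} \neq \emptyset$, and any element $\omega$ of this intersection is a fractional operation with support in $O$ that satisfies the fractional polymorphism inequality for every $\rr \in \G$ (since each singleton $\{\rr\}$ is itself a finite subset of $\G$), which is exactly what is required.

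There is no real obstacle: the only point to be careful about is confirming that the fractional polymorphism condition really defines a closed set (which reduces to the observation that it is a finite conjunction of linear inequalities over the finite index set $\Oo_D^{(m)}$, parametrised by the finitely many tuples in $\Feas(\rr)^m$ for each single $\rr$), and that taking unions of finitely many finite subfamilies stays inside the directed system of finite subsets of $\G$. Both are routine, so the whole argument is essentially a one-line application of the finite intersection characterisation of compactness.
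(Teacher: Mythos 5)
Your argument is correct and is exactly the compactness argument the paper has in mind: the paper does not prove this lemma itself but cites it (Lemma 7 of Thapper and \v{Z}ivn\'y) with the remark that it is an immediate consequence of the compactness of the space of $m$-ary fractional operations, and your finite-intersection-property proof on the face of the probability simplex, including the check that the feasibility/closedness issue with infinite values of $\rr$ causes no trouble, is the standard way to make that remark precise.
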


\begin{proof}
(of Proposition~\ref{prop:fcore})
Let $\G$ be a core valued constraint language and let $B$ be the set of such bijective operations $f$ on $D$ that, for all cost functions $\rr \in \G$, satisfy $\rr \circ f = \rr$. 
A proof analogous to the one of Proposition~\ref{unary} shows that a unary fractional operation $\oo$ is a fractional polymorphism of $\G$ if and only if $\supp(\oo) \subseteq B$. 

Take a language $\G'$ such that $\fPol_1^+(\G') \neq \fPol_1^+(\G)$. 
This means that there exists a fractional polymorphism $\oo$ of $\G'$ such that there is some operation $g$ from outside $B$ in $\supp(\oo)$. 
Then there exists a fractional polymorphism $\oo'$ of $\G'$ such that $\supp(\oo') \subseteq \Oo_D^{(1)} \setminus B$. 
We construct it from the fractional polymorphism $\oo$ by assigning probability $0$ to all operations in $B$ and scaling to get a proper probability distribution. 
It is not difficult to check that the obtained fractional operation $\oo'$ is a fractional polymorphism of $\G'$.

Now suppose that none of the finite $\G' \subseteq \G$ satisfies $\fPol_1^+(\G) = \fPol_1^+(\G')$. 
Then every finite $\G' \subseteq \G$ has a fractional polymorphism with support in $\Oo_D^{(1)} \setminus B$. 
It follows from Lemma~\ref{lem:compact} that $\G$ has a fractional polymorphism with support in $\Oo_D^{(1)} \setminus B$, which is a contradiction.
\end{proof}

A proof analogous to the one of Proposition~\ref{core} shows that every constraint language $\G$ has a computationally equivalent language $\G'$ which is a core (in the new sense).

\begin{proposition}\label{corer}
For every valued constraint language $\G$ there exists a core language $\G'$, 
such that the valued constraint language $\G$ is tractable if and only if $\G'$ is tractable, 
and it is NP-hard if and only if $ \G'$ is NP-hard.
\end{proposition}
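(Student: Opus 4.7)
The plan is to mimic the proof of Proposition~\ref{core}, substituting $\fPol^+$ for $\Pol^+$ throughout and adapting Lemma~\ref{opt} to the fractional setting. The first step is to establish the following analogue of Lemma~\ref{opt}: if $\oo$ is a unary fractional polymorphism of $\G$, $\I \in \VCSP(\G)$ and $s$ is an optimal assignment for $\I$, then $f(s)$ is also optimal for every $f \in \supp(\oo)$. To see this, note that each instance $\I$ uses only a finite subset $\G_0 \subseteq \G$, so $Cost_{\I}$ is expressible over $\G_0$ and in particular is improved by every unary fractional polymorphism of $\G$. Writing out the defining inequality for $\oo$ applied to $Cost_{\I}$ at $s$ gives
$$\sum_{f \in \supp(\oo)} \oo(f) \cdot Cost_{\I}(f(s)) \leq Cost_{\I}(s).$$
Since $\oo$ is a probability distribution and each summand is bounded below by $Cost_{\I}(s)$ by optimality of $s$, equality must hold termwise on $\supp(\oo)$.

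With this lemma in hand I would carry out the reduction exactly as in Proposition~\ref{core}. Suppose $\G$ is not a core in the new sense, so there exists $f \in \fPol_1^+(\G)$ that is not bijective. Let $f(D) \varsubsetneq D$ and define $\G' = \G[f(D)]$. As before there is a natural correspondence between instances $\I' \in \VCSP(\G')$ and instances $\I \in \VCSP(\G)$ sharing variables and scopes, with each cost function of $\I$ restricting to the corresponding cost function of $\I'$ on $f(D)$. Any assignment for $\I'$ is automatically an assignment for $\I$, giving $\Opt_{\G}(\I) \leq \Opt_{\G'}(\I')$. Conversely, given an optimal assignment $s$ for $\I$, the fractional analogue of Lemma~\ref{opt} shows $f(s)$ is also optimal for $\I$, and since $f(s)$ takes values in $f(D)$ it is an assignment for $\I'$ of the same cost, yielding $\Opt_{\G}(\I) \geq \Opt_{\G'}(\I')$.

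This produces a polynomial-time equivalence between $\VCSP(\G)$ and $\VCSP(\G')$, preserving both tractability and NP-hardness. Iterating the construction strictly decreases the size of the domain, so after finitely many steps we arrive at a language over some $S \subseteq D$ all of whose unary positively-weighted fractional polymorphisms are bijective, i.e., a core in the new sense.

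The only point that requires care — and what I expect to be the main (though mild) obstacle — is the verification that $Cost_{\I}$ is improved by unary fractional polymorphisms of $\G$ despite $\G$ possibly being infinite. This is handled by the observation that every instance involves only finitely many constraints, reducing the question to the closure of $\wRelClo$ of a finite sublanguage under expressibility, scaling and addition, which is standard.
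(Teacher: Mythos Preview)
Your proposal is correct and matches the paper's approach exactly: the paper gives no separate proof of Proposition~\ref{corer} but simply states that the argument is analogous to that of Proposition~\ref{core}, which is precisely what you have spelled out, including the adaptation of Lemma~\ref{opt} to fractional polymorphisms. The only point worth noting is that the ``main obstacle'' you flag---that $Cost_{\I}$ is improved by every unary fractional polymorphism of $\G$---does not actually require passing to a finite sublanguage: a unary fractional polymorphism of $\G$ is by definition a fractional polymorphism of each $\rr \in \G$, and it is immediate from the defining inequality that fractional polymorphisms are preserved under addition of cost functions and variable identification, hence under forming $Cost_{\I}$.
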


Moreover, using Proposition~\ref{prop:fcore}, we can show that constants can be added to a core language~(the proposition implies that, for a core language $\G$, the relation characterizing $\fPol_1^+(\G)$ belongs to $\wRelClo(\G)$). 
The proof follows along the same lines as that of Proposition~\ref{rigid core}.

\begin{proposition}
Let $\G$ be a valued constraint language which is a core.
The valued constraint language $\G_c$ is a rigid core.
Moreover, $\G$ is tractable if and only if $\G_c$ is tractable,
and $\G$ is NP-hard if and only if $\G_c$ is NP-hard.
\end{proposition}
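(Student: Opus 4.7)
The strategy is to adapt the proof of Proposition~\ref{rigid core} to this slightly weaker setting. First, $\G_c$ is a rigid core by inspection: the unary cost functions $N_1,\dots,N_n$ force any unary polymorphism of $\G_c$ to fix each $d_i$, hence $\Pol_1(\G_c) = \{\id\}$, and in particular $\fPol_1^+(\G_c) = \{\id\}$. Since $\G \subseteq \G_c$, every finite subset of $\G$ is already a finite subset of $\G_c$, so the direction ``$\G_c$ tractable implies $\G$ tractable'' (and the analogue for NP-hardness) is immediate.

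For the reverse direction I want a polynomial-time Turing reduction of $\VCSP(\G_c')$ to $\VCSP(\G'')$, for any finite $\G_c' \subseteq \G_c$ and a suitable finite $\G'' \subseteq \G$. Following Proposition~\ref{rigid core}, the central object to produce is a cost function $N \in \wRelClo(\G)$ of arity $n = |D|$, together with rationals $P < Q$, such that an $n$-tuple $(x_1,\dots,x_n)$ receives the minimum value $P$ precisely when the unary map $d_i \mapsto x_i$ belongs to $\fPol_1^+(\G)$, and receives a value strictly greater than $Q$ whenever this map lies in $\Pol_1(\G) \setminus \fPol_1^+(\G)$. To build $N$, I invoke Proposition~\ref{prop:fcore} to choose a finite $\G_0 \subseteq \G$ with $\fPol_1^+(\G_0) = \fPol_1^+(\G)$. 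Since $\G$ is a core, all operations in $\fPol_1^+(\G)$ are bijective, so $\G_0$ is itself a finite core, and Proposition~\ref{prop:positive} gives $\Pol_1^+(\G_0) = \fPol_1^+(\G_0) = \fPol_1^+(\G)$. Applying Proposition~\ref{polymor} to $\G_0$ with $m=1$ now yields the desired $N \in \wRelClo(\G_0) \subseteq \wRelClo(\G)$; note that any element of $\Pol_1(\G) \setminus \fPol_1^+(\G)$ sits inside $\Pol_1(\G_0) \setminus \Pol_1^+(\G_0)$ and therefore receives value $> Q$, as required.

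The reduction itself copies Proposition~\ref{rigid core} verbatim: given an instance $\I_c$ of $\VCSP(\G_c')$, introduce fresh variables $v_1,\dots,v_n$; replace each constraint $((v), N_i)$ by $((v,v_i), \rr_=)$; and append $m$ copies of the constraint $((v_1,\dots,v_n), N)$, with $m$ large enough that $m(Q-P)$ exceeds the total finite weight of the remaining constraints. The same three-way case analysis on the value $N(s(v_1,\dots,v_n))$ at an optimal assignment $s$ goes through. The one step that requires a fresh justification is the case $N(s(v_1,\dots,v_n)) = P$: then $s$ restricted to $v_1,\dots,v_n$ describes some $g \in \fPol_1^+(\G)$, and one must pre-compose by $g^{-1}$ to obtain an equivalent optimal assignment pinning $v_i$ to $d_i$. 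The main obstacle is therefore to check that $\fPol_1^+(\G)$, being a finite set of bijections of $D$ closed under composition, forms a group whose every element preserves all cost functions of $\G$---the fractional analogue of Proposition~\ref{unary}, already implicit in the proof of Proposition~\ref{prop:fcore}.
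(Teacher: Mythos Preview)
Your proposal is correct and follows essentially the same route the paper sketches: invoke Proposition~\ref{prop:fcore} to pass to a finite core $\G_0\subseteq\G$ with $\fPol_1^+(\G_0)=\fPol_1^+(\G)$, apply Proposition~\ref{polymor} to $\G_0$ to obtain the distinguishing cost function $N\in\wRelClo(\G)$, and then replay the reduction of Proposition~\ref{rigid core}. Your handling of the case $N(s(v_1,\dots,v_n))=P$ via the fractional analogue of Proposition~\ref{unary} (noted inside the proof of Proposition~\ref{prop:fcore}) is exactly what is needed.
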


\begin{theorem}
Let $\G$ be a core valued constraint language. 
If, for every finite $\G'\subset\G$, the set $\fPol^{+}(\G')$ has a Taylor operation, then so does $\fPol^+(\G)$.
\end{theorem}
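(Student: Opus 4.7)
The plan is to combine a compactness argument via Lemma~\ref{lem:compact} with the Barto--Kozik characterization of Taylor operations and Proposition~\ref{char}. Fix a prime $p > |D|$ and let $O \subseteq \Oo_D^{(p)}$ be the (finite) set of all $p$-ary idempotent cyclic operations on $D$. By Theorem~\ref{thm:BK} applied to the idempotent subclone of $\fPol^+(\G)$, it is enough to show that $\fPol^+(\G)$ contains a $p$-ary idempotent cyclic operation, so by Lemma~\ref{lem:compact} it suffices to produce, for every finite $\G'' \subseteq \G$, a fractional $p$-ary polymorphism of $\G''$ with support in~$O$.

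Using Proposition~\ref{prop:fcore}, I would fix a finite $\G_0 \subseteq \G$ such that $\fPol_1^+(\G_0) = \fPol_1^+(\G)$. Given an arbitrary finite $\G'' \subseteq \G$, I would replace it by $\G' = \G'' \cup \G_0$; since any fractional polymorphism of $\G'$ is automatically a fractional polymorphism of $\G''$, it suffices to find the required polymorphism for $\G'$. Because $\G_0 \subseteq \G' \subseteq \G$, the relation $\fPol_1^+(\G') = \fPol_1^+(\G)$ holds, so $\G'$ is itself a core, and by Proposition~\ref{prop:positive} we have $\fPol^+(\G') = \Pol^+(\G')$. The hypothesis then places a Taylor operation inside $\Pol^+(\G')$; Theorem~\ref{thm:BK} promotes it to an idempotent cyclic $p$-ary operation $f \in \Pol^+(\G')$; and Proposition~\ref{char}, applicable because $\G'$ is a finite core, upgrades $f$ to an idempotent cyclic weighted polymorphism $\oo$ of $\G'$ of arity~$p$.

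The main remaining obstacle is converting this weighted polymorphism into a fractional polymorphism with support still contained in $O$. The difficulty is that $\oo$ may distribute its necessarily non-positive weights on the $p$ projections arbitrarily, so a naive rescaling of the positive part of $\oo$ would yield only a non-uniform bound in terms of the $\rr({\bf x_i})$. To remedy this I would symmetrize $\oo$ by averaging over cyclic shifts of the arguments:
\[
\oo^{\mathrm{sym}} \;=\; \frac{1}{p}\sum_{j=0}^{p-1}\oo\bigl[\pi_{1+j}^{(p)},\pi_{2+j}^{(p)},\dots,\pi_{p+j}^{(p)}\bigr],
\]
with indices read modulo~$p$. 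Since every operation in $\supp(\oo)$ is cyclic and therefore fixed by any cyclic shift of its arguments, one checks that $\supp(\oo^{\mathrm{sym}}) \subseteq O$, while $\oo^{\mathrm{sym}}$ assigns the common value $-s/p$ to each projection, where $s = \sum_{g \in O} \oo(g) > 0$. Rescaling by $1/s$ produces a probability distribution $\oo'$ supported in~$O$, and the weighted-polymorphism inequality for $\oo^{\mathrm{sym}}$ combined with this uniformity becomes exactly the inequality $\sum_{g \in O}\oo'(g)\,\rr(g({\bf x})) \le \frac{1}{p}\sum_{i=1}^{p}\rr({\bf x_i})$ defining a fractional polymorphism. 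Lemma~\ref{lem:compact} then yields the desired fractional $p$-ary polymorphism of $\G$, whose support witnesses the sought Taylor operation in $\fPol^+(\G)$.
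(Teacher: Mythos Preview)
Your proposal is correct and follows essentially the same route as the paper: enlarge each finite sublanguage to a finite core using Proposition~\ref{prop:fcore}, apply Theorem~\ref{thm:BK} and Proposition~\ref{char} to obtain an idempotent cyclic weighted polymorphism of a fixed prime arity $p>|D|$, convert it to a fractional polymorphism supported on $p$-ary idempotent cyclic operations, and conclude via Lemma~\ref{lem:compact}. The only substantive difference is that you make explicit the symmetrization over cyclic shifts needed to equalize the negative weights on projections before rescaling, a step the paper dismisses as ``easy to construct''; your detailed argument for this conversion is correct.
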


\begin{proof}
Consider a finite $\G' \subseteq \G$. 
Since $\G$ is a core, by Proposition~\ref{prop:fcore} there exists a finite $\hat{\G} \subseteq \G$ which is a core. 
By $\overline{\G'}$ we denote the language $\G' \cup \hat{\G}$. 
It is a finite subset of $\G$, and hence it has a Taylor operation. 
Fix some prime number $p > |D|$. 
It follows from Theorem~\ref{taylor} and Theorem~\ref{thm:BK} that $\Pol^{+}(\overline{\G'})$ has an idempotent cyclic operation of arity $p$, 
and therefore by Proposition~\ref{char} $\overline{\G'}$ admits an idempotent cyclic weighted polymorphism $\oo'$ of arity $p$. 

Let $O$ denote the set of $p$-ary idempotent cyclic operations on $D$. 
From the weighted polymorphism $\oo'$ it is easy to construct a fractional polymorphism $\overline{\oo'}$ such that $\supp({\overline{\oo'}}) \subseteq O$. 
Since $\G' \subseteq \overline{\G'}$, we have that $\overline{\oo'}$ is a fractional polymorphism of $\G'$ with support in $O$. 
This holds for every finite $\G' \subseteq \G$. 
Hence, by Lemma~\ref{lem:compact} there exists a fractional polymorphism $\oo$ of $\G$ with support in $O$. 
\end{proof}

It immediately follows that if $\G$ does not have a Taylor operation in $\fPol^+(\G)$ then it is NP-hard.

\begin{corollary}
Let $\G$ be a core valued constraint language. If $\fPol^{+}(\G)$ does not have a Taylor operation, then $\G$ is NP-hard.
\end{corollary}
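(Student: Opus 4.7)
The plan is to obtain the corollary as a direct contrapositive application of the preceding theorem, together with the finite-core reduction already available from Proposition~\ref{prop:fcore} and the identification of fractional and weighted positive clones in the finite case from Proposition~\ref{prop:positive}.

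First, assume that $\fPol^+(\G)$ does not contain a Taylor operation. Taking the contrapositive of the preceding theorem, there must exist a finite $\G_0 \subseteq \G$ such that $\fPol^+(\G_0)$ has no Taylor operation. The goal is to upgrade $\G_0$ to a finite subset of $\G$ which is still a core, since Theorem~\ref{taylor} requires exactly that.

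Second, since $\G$ is a core, Proposition~\ref{prop:fcore} yields a finite $\hat{\G} \subseteq \G$ with $\fPol_1^+(\hat{\G}) = \fPol_1^+(\G)$; in particular every operation in $\fPol_1^+(\hat{\G})$ is bijective, so $\hat{\G}$ is itself a core. Set $\G' = \G_0 \cup \hat{\G}$. Then $\G'$ is finite and $\G' \subseteq \G$. Enlarging the language can only shrink the set of (fractional) polymorphisms, so $\fPol_1^+(\G') \subseteq \fPol_1^+(\hat{\G})$, whose elements are all bijective; hence $\G'$ is a core. By the same monotonicity, $\fPol^+(\G') \subseteq \fPol^+(\G_0)$, so $\fPol^+(\G')$ still contains no Taylor operation.

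Third, apply Proposition~\ref{prop:positive} to the finite language $\G'$ to conclude $\Pol^+(\G') = \fPol^+(\G')$, so $\Pol^+(\G')$ is Taylor-free as well. Now $\G'$ is a finite core valued constraint language whose positive clone has no Taylor operation, so Theorem~\ref{taylor} yields that $\G'$ is NP-hard. Because $\G' \subseteq \G$ is finite, $\G$ itself is NP-hard by definition.

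There is no real obstacle in this argument; it is a routine chaining of results already established in the paper. The only point that requires a small amount of care is that a finite subset of a core language need not itself be a core, which is precisely why the auxiliary $\hat{\G}$ from Proposition~\ref{prop:fcore} must be adjoined to $\G_0$ before invoking the finite-core theorem.
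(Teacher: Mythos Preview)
Your proof is correct and follows essentially the approach the paper has in mind; the paper itself offers no explicit proof of the corollary beyond ``It immediately follows,'' and your argument spells out exactly the natural derivation, using the same $\hat{\G}$-adjoining trick that already appears in the paper's proof of the preceding theorem. The use of Proposition~\ref{prop:positive} to pass from $\fPol^+$ to $\Pol^+$ before invoking Theorem~\ref{taylor} is the right bridge, and your care in noting that a finite sublanguage of a core need not be a core is precisely why that trick is needed.
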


\bibliographystyle{plain}
\bibliography{bib}

%\newpage
%
%\appendices
%
%\input{app-wordiso}
%\input{app-csp}

\end{document}